\declaretheorem[name=Theorem,numberwithin=section]{thm}
\declaretheorem[name=Theorem,numberlike=thm]{theorem}
\declaretheorem[name=Lemma,numberlike=thm]{lemma}
\declaretheorem[name=Corollary,numberlike=thm]{cor}
\declaretheorem{corollary}
\declaretheorem{observation}
\declaretheorem{fact}
\declaretheorem{definition}
\newtheorem*{theorem*}{Theorem}
\DeclarePairedDelimiter\ceil{\lceil}{\rceil}
\DeclarePairedDelimiter\floor{\lfloor}{\rfloor}
\newcommand{\optoneg}{\textbf{opt}_1}
\newcommand{\opttwog}{\textbf{opt}_2}
\newcommand{\fopt}{\textbf{fopt}}
\newcommand{\opt}{\textbf{opt}}
\newcommand{\lpopt}{\textbf{lpopt}}
\newcommand{\optg}{\textbf{optg}}
\newcommand{\lpoptg}{\textbf{lpopt}}
\newcommand{\Sk}{S_{k}}
\newcommand{\cust}{m}
\newcommand{\supp}{n}
\newcommand{\rahulcom}[1]{\textcolor{blue}{\bf [#1]}}
\newcommand{\qj}{q_{j}}
\newcommand{\xij}{x_{i,j}}
\newcommand{\xijp}{x_{i,j'}}
\newcommand{\Xij}{X_{i,j}}
\newcommand{\Xijp}{X_{i,j'}}
\newcommand{\Mj}{M_{j}}
\newcommand{\M}{M}
\newcommand{\Xj}{X_{j}}
\newcommand{\Xja}{\textbf{X}_{j}^A}
\newcommand{\fnf}{f}
\newcommand{\fng}{g}
\newcommand{\vjp}{v_{j'}}
\newcommand{\qr}{q_{k_2}}
\newcommand{\Xik}{X_{i,k}}
\newcommand{\xj}{x_{j}}
\newcommand{\dconst}{\frac{e-1}{2e}}
\newcommand{\xik}{x_{i,k}}
\newcommand{\Mi}[1]{\mathrm{M}_{i}(#1)}
\newcommand{\yik}{y_{i,k}}
\newcommand{\yi}{y_{i}}
\newcommand{\wk}{w_{k}}
\newcommand{\wl}{w_{k}}
\newcommand{\xikp}{x_{i,k'}}
\newcommand{\wkp}{w_{k'}}
\newcommand{\wlp}{w_{k'}}
\newcommand{\Xk}{\textbf{X}_{k}}
\newcommand{\Yk}{\textbf{Y}_{k}}
\newcommand{\Yja}{\textbf{Y}_{j}^A}
\newcommand{\Ykj}[1]{\textbf{Y}_{k,#1}}
\newcommand{\expt}[1]{\mathbb{E}\left[#1\right]}
\newcommand{\prob}[1]{\mathbb{P}\left(#1\right)}
\newcommand{\xstar}{x^{*}}
\newcommand{\expo}[1]{\exp \left(#1 \right)}
\newcommand{\ckj}{c_{k,j}}
\newcommand{\ckmin}{c_{k,\min}}
\newcommand{\kiran}[1]{\textcolor{blue}{[kiran:#1]}}
\newcommand{\dani}[1]{\textcolor{red}{[Daniela: #1]}}
\newcommand{\vj}{v_{j}}
\newcommand{\defeq}{\stackrel{\mathrm{def}}{=}}
\newcommand{\R}{\mathbb{R}}
\newcommand{\Z}{\mathbb{Z}}
\newcommand{\otcust}{\mathcal{M}}
\newcommand{\otsupp}{\mathcal{N}}
\newcommand{\otbuck}{[1,\buck]_{\Z}}
\newcommand{\Yj}{Y_{j}}
\newcommand{\menui}{\mathrm{M}_{i}}
\newcommand{\menuj}{\mathrm{M}_{j}}
\newcommand{\Mset}{\bar{\mathrm{M}}}
\newcommand{\menuset}{\{\mathrm{M}_{i}\}_{i \in \otcust}}
\newcommand{\ko}{k_{1}}
\newcommand{\kt}{k_{2}}
\newcommand{\Xkj}{\textbf{X}_{k,j}}
\renewcommand{\otbuck}{\textbf{B}}
\newcommand{\ml}{\tm_{\ell}}
\newcommand{\tm}{\textbf{W}}
\newcommand{\Bo}{\textbf{B}_{1}}
\newcommand{\massi}{\alpha_{i}}
\renewcommand{\wl}{w_{k_1}}
\renewcommand{\wlp}{w_{k'_1}}
\newcommand{\Bt}{\textbf{B}_2}
\newcommand{\Mstari}{\mathrm{M}^{*}(i)}
\newcommand{\sk}{s_{k}}
\newcommand{\yil}{y_{i,\ell}}
\newcommand{\opti}{\textbf{opt}(i)}
\newcommand{\copti}{\textbf{copt}(i)}
\begin{document}
\title{Assortment planning for two-sided sequential matching markets}
\date{}
\author[1]{Itai Ashlagi\thanks{iashlagi@stanford.edu}}
\author[2]{Anilesh K. Krishnaswamy \thanks{anilesh@cs.duke.edu}}
\author[3]{Rahul Makhijani \thanks{rahulmj19@fb.com}}
\author[1]{Daniela Saban \thanks{dsaban@stanford.edu}}
\author[1]{Kirankumar Shiragur \thanks{shiragur@stanford.edu}}

\affil[1]{Stanford University}
\affil[2]{Duke University}
\affil[3]{Facebook}

\setcounter{Maxaffil}{0}
\renewcommand\Affilfont{\small}
\maketitle
Two-sided matching platforms provide users with menus of match recommendations. To maximize the number of realized matches between the two sides (referred herein as customers and suppliers), the platform must balance the inherent tension between recommending more suppliers to customers for potential matches, and avoiding collisions that arise when customers are given more choice (and end up choosing the same suppliers). We introduce a stylized  model to study the above trade-off. The platform offers each customer a menu of suppliers, and customers choose, simultaneously and independently, to either select a supplier from their menu or remain unmatched. Suppliers then see the set of customers that have selected them, and choose to either match with one of these customers or remain unmatched. A match occurs if a customer and a supplier choose each other (in sequence). 
Agents' choices are probabilistic, and proportional to the public scores of agents in their menu and a score that is associated with the outside option of remaining unmatched.
The platform's problem is to construct menus for  customers to maximize the total number of matches. 
We first show that this problem is strongly NP-hard, and then provide an efficient algorithm that achieves a constant-factor approximation to the optimal expected number of matches. Our algorithm uses bucketing techniques (grouping similar suppliers into buckets), together with a linear programming based relaxation and rounding. We finally provide simulations to better understand how the algorithm might behave in practice.

\section{Introduction}
\label{sec:introduction}

Two-sided platforms, that enable agents from one side to match and transact with agents on the other side, 
are now ubiquitous in a variety of markets such as those for labor, dating, accommodation rentals, etc. Many such online platforms, rather than determining who matches with whom in a centralized fashion, operate by presenting users with a set of recommended partners and allowing them to choose which of these to pursue a match with. For example, when looking for accommodation on Airbnb, a potential guest is immediately shown a map containing up to twenty properties available in the time frame she selects; the guest can then choose which of these property owners to contact. Similarly, clients looking for freelancers on Upwork or dog owners looking for a sitter on Rover are also presented with a set of options for them to consider, before deciding which one to reach out to. 
Therefore, a question that is central to the operations of many of these platforms is to optimize the recommendations over potential partners that are presented to its users.

A natural objective in many of the aforementioned markets is to maximize the total number of matches (e.g., number of bookings, tasks completed) that occur through the platform. A key feature that the platform must account for is that a match between two users  occurs only when both users find each other acceptable. For instance, once a dog sitter is contacted, she needs to be willing to perform the job; similarly, a host offering a vacation rental needs to accept the guest's request to book the same. Moreover, when many conflicting requests are received at the same time, at most one such request can be accepted. 

To describe the trade-off that then arises, let us refer to the two sides of the market by customers and suppliers, respectively, where we assume that costumers (potential guests,  clients, dog owners) are presented with recommended assortments of potential suppliers (properties, freelancers, dog sitters) to choose from. Naturally, increasing the number and mix of potential suppliers presented to a costumer increases the chances that she finds an acceptable partner and  reaches out to one of them instead of choosing not to pursue a match through the platform at all. However, if as a result the same supplier is shown to many customers, this  also increases the chances that a supplier is contacted by many customers with conflicting requests at the same time, resulting in \emph{collisions} as only one  such request can be accepted. In a setting where suppliers are indeed products (i.e., in commodity markets) and do not have preferences, these collisions tend to bad for the platform, as the platform could naturally increase the number of transactions by redirecting some of these costumers to other products. However, in a two-sided market, having a supplier receiving many requests might indeed increase the number of matches as this may increase the chances she finds a customer she prefers over her outside option.

Motivated by the above discussion, this paper studies how to optimize over the assortments of recommended potential partners that are shown to each user so as to maximize the number of matches. Our contributions are along two lines: we propose a novel stylized model that captures the aforementioned tradeoffs, and we propose a simple algorithm  that  achieves a constant-factor approximation to the optimal number of matches.

In our a stylized model, suppliers are heterogeneous across two dimensions: their public  ``attractiveness" to customers and the value for their outside option, that is known to the platform.  Matching decisions occur \textit{sequentially} in two stages as follows. First, each costumer is presented with a menu of suppliers chosen by the platform. Costumers choose, independently, either one supplier from the menu or her outside option (i.e., to remain unmatched); following the tradition of the assortment planning literature, we assume that choice model is given by the multinomial logit (MNL). In the second stage, each supplier is presented with a menu containing  only those customers who have selected her in the first place. Each supplier then chooses either to match with one of these customers or to use her outside option.  
A customer and a supplier  match only if they both chose each other. The platform's \textit{ two-sided sequential assortment problem} is to choose a menu of assortments to show to each customer so as to maximize the expected number of matches. 

Our first result is to establish that the platform's optimization problem is NP-hard. Following this hardness result, we propose a polynomial-time algorithm that achieves a constant-factor approximation ratio to the optimal number of expected matches. Our focus is not on finding the best approximation ratio but rather on designing an intuitive and easy to implement algorithm, which somewhat resembles the current implementation of the assortment decisions in many marketplaces, and to establish some performance guarantee on such algorithms.  
To design the algorithm, we focus on two regimes: a \textit{low-value suppliers' regime},  where all suppliers are less attractive than the customers' outside options, and a \textit{high-value suppliers' regime}, where all suppliers are at least as attractive as the customers' outside options.
We provide an efficient constant factor approximation algorithm for constructing the menu sets in each regime. Combining these algorithms (in a black box fashion) suffices to establish a constant-factor approximation for the general case.

The high-value suppliers' regime is easier to analyze. In this regime, a customer is at least as likely to choose any given supplier over her outside option. Informally, this implies that the platform does not need to provide a lot of options to consumers in order for them to prefer one of the suppliers in the platform over their outside option.  In fact, we show that a menu set in which each costumer is presented with a single supplier suffices to achieve a constant-factor approximation guarantee. 

The case with low-value suppliers is likely to be a better representation of the reality in many online markets. In this setting, the outside option is (ex-ante) better than an any individual option the platform can offer; this can be the case if, for example, the platform has a somewhat strong competitor operating in the same space. Therefore, in principle, the platform might need  to offer larger assortments to the consumers, which is consistent with what is observed in practice.  Our algorithm works by \textit{bucketing} (grouping) suppliers that have similar attractiveness scores and similar value of their outside option in the same bucket. Using this buckets, we propose an LP relaxation that  provides an upper bound (up to constant factor) to the expected number of matches. Our LP treats  all  suppliers  in  the  same  bucket  as  indistinguishable  and   tracks how  many  suppliers  in each bucket are shown to each user to maximize the expected number of matches.\footnote{This, for example, resembles what some dating platforms do, where users are assigned a unique ELO-type score, and assortments are decided using that score. Source: \url{https://www.theverge.com/2019/3/15/18267772/tinder-elo-score-desirability-algorithm-how-works}} We then propose a rounding algorithm that rounds the solution to the aforementioned LP and use the output to construct the assortments for the customers. The key idea behind the menu constructing algorithm is to show each supplier in the same bucket to \emph{approximately} the same number of users. This turns out to be useful to minimize collisions, i.e., situations where one supplier is being selected by too many customers while a similar supplier is selected by too few of them. 
Moreover, our algorithm has the intuitive property that the number of menus in which a supplier appears is upper bounded by a constant that depends both on her  value and the  value of her outside option. In particular, suppliers with lower values can be shown more than those with the same outside option and higher values; this is because suppliers with lower values are less likely to be selected by consumers. On the other hand, comparing suppliers with the same value, those with higher outside options can be shown more often; this implies that, in expectation, such a supplier will be selected by more consumers, increasing the chances that she will choose one of them to match in the platform. 

The remainder paper is organized as follows. The model and problem are formalized in Section \ref{sec:model}. Section \ref{sec:results} states an overview of the main results and techniques. Sections \ref{subsec:smallSupplierLargeOutside} and \ref{subsec:largeSupplier} provide the main steps of proofs for low- and high-value suppliers' regimes, respectively (with some details relegated to the appendix). In Section~\ref{subsec:simulations}, we provide simulation results to illustrate the performance of the algorithm in some randomly generated instances. Finally, in Section \ref{sec:conclusion} we summarize our findings and discuss a range of open questions.

 \subsection{Related literature}
The paper is related to several strands in the literature. The first strand is the literature on assortment optimization in the context of revenue management. Following the seminal paper by \citet{ryzin}, which studies an  assortment optimization problem for inventory selection using the multinomial-logit framework, there has been a vast body of work in this area models (see, e.g., \citet{rusmevichientong2014assortment,mendez2014branch,bront2009column,gaur2006assortment,immorlica2018combinatorial} and a survey by \citet{kok2008assortment}).
The key difference between these studies (which address a variety of choice models, dynamics, constraints, etc.), and our paper is that they consider one-sided markets in which the goal is to assign agents a menu of goods (who have no preferences) in order to maximize revenue.  In general, even one-sided assortment planning problems are computationally hard, except for some notable cases such as MNL without constraints on the offered sets  \citet{talluri2004revenue}. 
Closer to our setting, are those papers than incorporate capacity constraints bounding the number of goods of each type, \citep{rusmevichientong2010dynamic,davis2013assortment}; these capacities constraints are inherent in our model, as users can only match once. 

Preferences on both sides of the market add a new layer in comparison to the  assortment planning problem in commodity markets. While the literature of the latter problem is  mature,  little is known about assortments in two-sided matching markets; see \citet{shi2016assortment,ashlagi2015optimal} for studies  in large two-sided markets.

Another  stream of related papers  considers the relationship between choices  and congestion in two-sided matching markets. The benefit from limiting the choices offered to users in a two-sided market already appears in several studies \citep{halaburda,arnosti2014managing,kanoria2017facilitating}. However, these papers either do no explicitly allow the platform to select an assortment for the users, or the users are ex-ante homogeneous in the eyes of the platform.  
Our paper complements this line of work by considering a stylized static model while allowing for more heterogeneity of agents and using the assortment shown to each user as a platform lever. 

More generally, this paper is related to the two-sided matching mechanisms. Stable mechanisms typically elicits agents' ranking lists and form stable matches. In our problem a model of choice is estimated by the designer  and (only first) choices are done in a decentralized manner. Our choice model, however, is similar to the a preference formation in several papers in this literature \citep{immorlica2005marriage,kojima2009incentives,ashlagi2014stability}.

Finally, we note the key difference between our problem and (online) stochastic bipartite matching \citep{feldman2009online,chen2009approximating, manshadi2012online,haeupler2011online}. In this literature the planner makes matches given some distribution over edges. In our model, the platform only offers a set and  users to \textit{choose} from, which inherently creates  collisions due to lack of coordination.

 \subsection{Related literature}

The paper is related to several strands in the literature. The first strand is the literature on assortment optimization in the context of revenue management. Following the seminal paper by \citet{ryzin}, which studies an  assortment optimization problem for inventory selection using the multinomial-logit framework, there has been a vast body of work in this area models (see, e.g., \citet{rusmevichientong2014assortment,mendez2014branch,bront2009column,gaur2006assortment,immorlica2018combinatorial} and a survey by \citet{kok2008assortment}).
The key difference between these studies (which address a variety of choice models, dynamics, constraints, etc.), and our paper is that they consider one-sided markets in which the goal is to assign agents a menu of goods (who have no preferences) in order to maximize revenue.  In general, even one-sided assortment planning problems are computationally hard, except for some notable cases such as MNL without constraints on the offered sets  \citet{talluri2004revenue}. 
Closer to our setting, are those papers than incorporate capacity constraints bounding the number of goods of each type, \citep{rusmevichientong2010dynamic,davis2013assortment}; these capacities constraints are inherent in our model, as users can only match once. 

Preferences on both sides of the market add a new layer in comparison to the  assortment planning problem in commodity markets. While the literature of the latter problem is  mature,  little is known about assortments in two-sided matching markets; see \citet{shi2016assortment,ashlagi2015optimal} for studies  in large two-sided markets.

Another  stream of related papers  considers the relationship between choices  and congestion in two-sided matching markets. The benefit from limiting the choices offered to users in a two-sided market already appears in several studies \citep{halaburda,arnosti2014managing,kanoria2017facilitating}. However, these papers either do no explicitly allow the platform to select an assortment for the users, or the users are ex-ante homogeneous in the eyes of the platform.  
Our paper complements this line of work by considering a stylized static model while allowing for more heterogeneity of agents and using the assortment shown to each user as a platform lever. 

More generally, this paper is related to the two-sided matching mechanisms. Stable mechanisms typically elicits agents' ranking lists and form stable matches. In our problem a model of choice is estimated by the designer  and (only first) choices are done in a decentralized manner. Our choice model, however, is similar to the a preference formation in several papers in this literature \citep{immorlica2005marriage,kojima2009incentives,ashlagi2014stability}.

Finally, we note the key difference between our problem and (online) stochastic bipartite matching \citep{feldman2009online,chen2009approximating, manshadi2012online,haeupler2011online}. In this literature the planner makes matches given some distribution over edges. In our model, the platform only offers a set and  users to \textit{choose} from, which inherently creates  collisions due to lack of coordination.  

\section{Model}
\label{sec:model}



In this section, we introduce a stylized model to capture the problem faced by a platform that must recommend assortments to her customers so as to maximize the number of matches in a two-sided sequential matching market.
 We start by formally introducing the model. Afterwards, we discuss some of the modeling assumptions.

\paragraph{Problem setup.} We consider a platform, consisting of a set of customers and suppliers, who must construct assortments of suppliers to recommend to her costumers. In our simplified setting, events occur sequentially as follows. 
First, the platform chooses the menu profile, which consist of a menu (subset) of suppliers that will be shown to each customer. Second, customers observe their menus and, simultaneously and independently, choose at most one supplier from their menu. Third, each  supplier observes all customers who chose her, and chooses at most one customer from this set to match with. (Below, we describe how customers and suppliers make their choices.) When a customer is chosen by a supplier, this results in a match. The platform chooses a menu profile to maximize the expected number of matches.


In more detail, there is a set of customers indexed by $\otcust=\{1,\ldots,\cust\}$, and a set of suppliers, $\otsupp=\{1,\ldots,\supp\}$, affiliated to the platform. The platform chooses a \textit{profile of menus} $\Mset=\menuset$, where $\menui \subseteq \otsupp$ is the subset of suppliers that is shown to customer $i\in \otcust$. We refer to the menu profile also as a {\it menu set}.

We assume each supplier  $j\in\otsupp$ has an associated  score $v_j>0$, which represents the (ex-ante) consumer preference values. 
Moreover,  every customer has an outside option indexed by $0$; we assume all customers' outside options score are identical and normalized to $v_0=1$.   Given a menu profile $\menuset$, each customer $i \in \otcust$  \emph{simultaneously} and independently chooses at most one supplier from the her menu $\menui$ according to a
multinomial logit model. That is, given that customer $i$ is assigned menu  $\menui$, she chooses  $j\in \menui\cup \{0\}$ with probability 

\begin{equation}
    \label{eqn:RUM} p_{ij}(\menui)  = \frac{v_j}{\sum_{k \in \menui}v_k +v_0}. 
\end{equation}
 
Hence, the probability that  $i$ selects a supplier from menu $\menui$  is given by
\begin{equation}
    \label{eq:rum_expected number of matches}
    \frac{\sum_{k \in \menui} v_k}{\sum_{k \in \menui} v_k +v_0}.
\end{equation}

Once customers' choices are made, each supplier $j\in\otsupp$ is presented with a menu $\menuj =\{i \in \otcust:~ j \in \menui \textup{ and }i \textup{ chose supplier }j\}$. Given these menus, each supplier $j\in \otsupp$ chooses $i\in\menuj \cup \{0\}$ where $0$ denotes an the supplier's outside option. As consumers, we also assume that suppliers also make choices using a multinomial logit model.   However,  we make two different assumptions. First, every customer $i\in \otcust$ has the same public score normalized to $q_i=1$. Second, we allow suppliers to have different outside options. That is, every supplier  $j\in \otsupp$ has an outside option score $q_j\geq 0$.  Then, if supplier $j$ observes menu $\menuj\neq \emptyset$, where $\menuj$ is the set of customers that chose $j$, the probability that  $j$ chooses any given customer $i \in \menuj$  is  $\frac{1}{|\menuj| +q_j}$, and the probability that  $j$ chooses any customer (as opposed to choosing her outside option) is
    $\frac{|\menuj|}{|\menuj|+q_j}$.  
    Note that if  no customer chose supplier $j$,  $j$ must remain unmatched.
    
We assume that the choice models of suppliers and customers (public and outside option scores) are  known to the platform. The platform must select a  menu profile to maximize the expected number of matches.  Formally, we define the  \textit{two-sided sequential assortment problem} as follows:

\begin{definition}[Two-sided sequential assortment problem]
Given a set of customers $\otcust$ and set of suppliers $\otsupp$ with associated non-negative public scores  $\{v_j\}_{j\in \otsupp}$ and non-negative outside option scores $\{q_j\}_{j\in \otsupp}$, the platform chooses a menu profile $\menuset$, which consists of a menu of suppliers to be shown to each customer, in order to maximize the expected number of matches.
\end{definition}

\paragraph{Modeling tradeoffs and assumptions.}
The above model, while simple, captures the main dynamics and tradeoffs faced by a platform when recommending suppliers to her consumers to maximize the number of  resulting matches. 
The first step towards match formation is to ensure that consumers select a supplier in the platform. Recommending better and more suppliers to each individual consumer increases the chances of each of them finding a supplier they like over their outside option. However, as a supplier gets  recommended to more consumers, this increases the chances that she gets selected by multiple such consumers, and these chances increases with the quality of the supplier. This has two effects: it increases the probability of the supplier matching, as it increases the changes the supplier chooses one of these consumers instead of her outside option, but it may decrease the total number of matches as each supplier can match at most once (even if she was selected by multiple consumers). 

\noindent \textit{Simultaneous choices.}
Note that we assume that customers make their choices simultaneously. If we consider instead a dynamic problem where each customer arrives, chooses a supplier and the supplier then immediately responds, then this problem is similar to previous work on dynamic assortment planning with inventory constraints (see, e.g., \cite{golrezaei2014real}). However, simultaneous choices by all agents on the same side allow us to represent a settings  where there might be a delay between the costumer choice and the supplier response, for example, the supplier might check the app once a day or, as in the case of some dating platforms, users see assortments only once per day.

\noindent \textit{User heterogeneity.}
Note that we assumed that suppliers are ex-ante heterogeneous across two dimensions: their scores $v_j$s and the values of their outside option, $q_j$s. By contrast, consumers are (ex-ante) identical from the platform's perspective: we assume that each has a value $v_i=1$ and the value of their outside option is $v_0=1$. This is a reasonable assumption in many settings, where the platform may have detailed information about the suppliers while she might either have very little to no information about consumers, or about suppliers' preferences. For example, in a setting like Airbnb, the platform and the consumers can observe many metrics of the suppliers' qualities (property  photos, location, and even past reviews) whereas the information regarding consumers might not be as detailed, and even such information may be less relevant for the suppliers' choice. However, we believe that using similar bucketing ideas, a constant-factor approximation guarantee can be achieved in the case of consumers having different scores as well.

\section{Overview of results}
\label{sec:results}

In this section, we present an overview of the main results. Throughout the paper, we use boldface to refer to random variables. We also use boldface to refer to name the optimization problems presented later. In Appendix~\ref{app:tools}, we state some standard results that will be used in our proofs.

Our first result is to show that the two-sided sequential assortment problem is NP-Hard.  
\begin{restatable}{proposition}{nphardness}
\label{prop:sequential_np_hard}
 The two-sided sequential assortment problem is strongly NP-hard.
\end{restatable}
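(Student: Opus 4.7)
The plan is to reduce from 3-PARTITION, which is strongly NP-hard: given $3n$ positive integers $a_1,\ldots,a_{3n}$ summing to $nB$ with each $a_j\in(B/4,B/2)$, decide whether they partition into $n$ triples each summing exactly to $B$. From such an instance I construct a two-sided sequential assortment instance with $n$ (identical) customers and $3n$ suppliers; supplier $j$'s public score $v_j$ and outside-option score $q_j$ are chosen as polynomially bounded functions of $a_j$ and $B$, while every customer has outside-option score $v_0=1$ as in the model.

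The heart of the proof is to choose the scores so that two structural properties hold for the expected-matches objective. \textbf{(P1)~Disjointness:} any optimal menu profile assigns each supplier to at most one customer's menu, because placing a supplier in two menus introduces a collision that strictly lowers the expected number of matches (one supplier can match at most once, and the ``wasted'' mass could instead be allocated to another supplier). \textbf{(P2)~Balance:} among menu profiles that partition the suppliers across the $n$ menus, the expected number of matches is a strictly concave function of the per-customer score sums $S_i=\sum_{j\in M_i}v_j$ under the constraint $\sum_i S_i=\sum_j v_j$, and is uniquely maximized when $S_1=\cdots=S_n$ equal the score sum corresponding to $B$. Combined with the restriction $a_j\in(B/4,B/2)$, which forces every part summing to $B$ to contain exactly three elements, (P1) and (P2) imply that the optimum is attained only by menu profiles corresponding to valid 3-partitions. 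A threshold $V^\star$ can then be computed explicitly so that the map is a polynomial-time many-one reduction from 3-PARTITION, and the polynomial bound on $v_j$ and $q_j$ in terms of $B$ means that strong NP-hardness transfers.

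The main obstacle is engineering the score parameters so that (P1) and (P2) hold simultaneously, since the objective involves nonlinear terms of the form $\expt{|\menuj|/(|\menuj|+q_j)}$ where the law of $|\menuj|$ depends on the whole menu profile through the MNL probabilities. A natural regime to work in is the low-value-supplier regime ($v_j \ll v_0$), where one can Taylor-expand the MNL denominator and control the leading-order contribution of each supplier cleanly. To leading order, the contribution of supplier $j$ decomposes into a sum over customers $i$ with $j\in M_i$ of terms proportional to $\tfrac{v_j}{(1+\sum_{k\in M_i}v_k)(1+q_j)}$, which are strictly concave in the menu sum and thus favor balance, while the second-order correction yields the strict collision penalty needed for (P1). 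I plan to pick a parameter scale for $v_j$ and $q_j$ such that the Taylor-expansion error is asymptotically smaller than the slack between the YES and NO cases; quantifying this slack precisely, and propagating it through the threshold argument, is the most delicate step of the reduction.
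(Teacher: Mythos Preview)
Your high-level plan (reduce from 3-\textsc{Partition}, force optimal menus to be disjoint, then use concavity to force balance) is exactly the paper's strategy. But the parameter regime you commit to is the opposite of the one that works, and this breaks property~(P1).

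The paper sets $q_j=0$ for every supplier and takes the customers' outside option $v_0$ to be \emph{tiny}---equivalently, after rescaling to $v_0=1$, the supplier scores $v_j$ are \emph{large}. With $q_j=0$ the expected number of matches is simply $\sum_j\Pr[j\text{ chosen by at least one customer}]$, and with $v_0$ small each customer almost surely picks \emph{some} supplier. A collision (two customers both picking $h$) then genuinely wastes a customer: the expected number of matches drops below $m$ by an amount $\Theta(v_h^2/\alpha_1\alpha_2)$, and a straightforward comparison against a one-supplier-per-customer menu shows this loss exceeds any possible gain once $v_0<v_{\min}^3/(4m)$. Balance then follows immediately from concavity of $\alpha\mapsto\alpha/(v_0+\alpha)$, since with $q_j=0$ and disjoint menus the objective is exactly $\sum_i \alpha_i/(v_0+\alpha_i)$.

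In your low-value regime ($v_j\ll v_0=1$), disjointness is \emph{not} optimal, and the second-order collision penalty cannot rescue it. Concretely: starting from a disjoint profile, add supplier $j\in M_1$ also to $M_2$. The first-order change in your own leading-order objective $\sum_j\sum_{i:j\in M_i}\frac{v_j}{(1+S_i)(1+q_j)}$ is
\[
\frac{v_j}{(1+q_j)(1+S_2+v_j)}\;-\;\sum_{k\in M_2}\frac{v_k v_j}{(1+q_k)(1+S_2)^2}\;\approx\;\frac{v_j}{(1+q)(1+S_2)^2}\;>\;0,
\]
(taking $q_j\equiv q$ for simplicity), so duplicating a supplier is strictly beneficial at order $v_j$. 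The collision term you invoke is $O(v_j^2)$ and cannot overturn this. Thus in your regime the optimum will not partition the suppliers, and the link between optimal menus and 3-partitions collapses. The fix is to move to the high-value regime with $q_j=0$, where the argument becomes short and exact (no Taylor expansion needed).
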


The proof\footnote{The authors would like to thank Vineet Goyal for useful discussions and valuable suggestions.} can be found in Appendix \ref{appendix_results}. There, we use a reduction from the $3$-partition problem, which is known to be strongly NP-complete \citep{garey2002computers}. This result is not surprising, given that assortment planning problems tend to be computationally hard even in simpler one-sided settings.

The hardness result above motivates us to find an alternative solution. We propose simple and intuitive algorithms, and then show that these indeed achieve constant-factor approximation guarantees. 

Formally, we say that a menu construction  algorithm $\mathbb{A}$ has an \textit{approximation ratio} $\alpha$ if the expected number of matches ($\textbf{M}_{\mathbb{A}}$)  achieved by the menu set returned by algorithm $\mathbb{A}$ is at least $\alpha$ times the optimal expected number of matches $\opt$. Formally,
 $$ \expt{\textbf{M}_{\mathbb{A}}} \geq \alpha \cdot \opt .$$
In our setting, we say that an algorithm is \emph{efficient} if it runs in polynomial time in the number of suppliers and customers. 





 
 Our main result is to provide an efficient constant-factor approximation algorithm, as summarized by the following theorem.
 
 \begin{restatable}[General case]
 {thm}{thmmain}
\label{thm:main}
There exists an efficient algorithm for the sequential assortment problem that runs in polynomial time (in the number of suppliers and customers) and that achieves a constant approximation ratio. 
\end{restatable}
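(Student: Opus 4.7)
The plan is to reduce the general case to the two regimes isolated in the paper: high-value suppliers $H = \{j \in \otsupp : v_j \geq 1\}$ and low-value suppliers $L = \{j \in \otsupp : v_j < 1\}$. Running the regime-specific algorithms from Sections~\ref{subsec:largeSupplier} and~\ref{subsec:smallSupplierLargeOutside} on the sub-instances $(\otcust, H)$ and $(\otcust, L)$ yields two menu profiles $\Mset_H$ and $\Mset_L$, each achieving a constant-factor approximation to the optimal expected number of matches among profiles that only display suppliers from the corresponding class; denote these restricted optima by $\opt_H$ and $\opt_L$. The general-case algorithm returns whichever of the two profiles produces the larger expected number of matches. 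Polynomial running time is inherited from the two sub-algorithms, so the only thing left to prove is that the better of the two solutions is a constant-factor approximation to $\opt$.

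The crux is a decomposition lemma of the form $\opt \leq \opt_H + \opt_L$. Fix an optimal profile $\Mset^* = \{\menui^*\}_i$ and define its projections $\Mset^*_H = \{\menui^* \cap H\}_i$ and $\Mset^*_L = \{\menui^* \cap L\}_i$. I would argue separately that the expected number of matches involving suppliers in $H$ under $\Mset^*$ is upper bounded by the expected matches produced by $\Mset^*_H$, and symmetrically for $L$. The decomposition of $\opt$ then follows by adding these two inequalities and observing that every match contributes to exactly one of the two terms.

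To prove the monotonicity step for $H$, fix $j \in H$ and let $\menuj$ denote the random set of customers that request $j$; her match probability equals $\expt{|\menuj|/(|\menuj|+q_j)}$, with the convention that this equals $0$ when $|\menuj|=0$, and the map $m \mapsto m/(m+q_j)$ is non-decreasing on $\Z_{\geq 0}$. Under $\Mset^*_H$, each customer's MNL probability of choosing $j$ is at least her probability under $\Mset^*$, since removing low-value suppliers from $\menui^*$ only shrinks the denominator $\sum_{k \in \menui^*} v_k + v_0$. A two-stage coupling --- letting each customer first decide \emph{whether} to pick a high-value supplier, a low-value supplier, or her outside option, and only then \emph{which} specific supplier conditional on picking a high --- shows that $|\menuj|$ under $\Mset^*_H$ stochastically dominates $|\menuj|$ under $\Mset^*$. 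Hence $j$'s expected match contribution only grows when low-value suppliers are removed from the menus, and summing over $j \in H$ yields the monotonicity claim for high-value matches; the argument for $L$ is symmetric.

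I expect the coupling and stochastic-dominance step to be the main obstacle, since the joint distribution of $\{|\menuj|\}_{j \in H}$ changes in correlated ways when $L$-items are removed; however, the expected match contribution of any given $j$ depends only on the marginal distribution of the single random variable $|\menuj|$, which is precisely what the two-stage coupling above bounds. Once the decomposition is established, the approximation guarantee follows from
\begin{equation*}
    \expt{\textbf{M}_{\mathbb{A}}} \;\geq\; \max\bigl(\expt{\textbf{M}_{\mathbb{A}_H}},\, \expt{\textbf{M}_{\mathbb{A}_L}}\bigr) \;\geq\; \tfrac{1}{2}\bigl(\alpha_H\, \opt_H + \alpha_L\, \opt_L\bigr) \;\geq\; \tfrac{\min(\alpha_H,\, \alpha_L)}{2}\, \opt,
\end{equation*}
where $\alpha_H$ and $\alpha_L$ are the constant approximation ratios guaranteed by the two regime-specific algorithms.
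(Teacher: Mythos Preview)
Your proposal is correct and reaches the same constant $\tfrac{1}{2}\min(\alpha_H,\alpha_L)$ as the paper, but the combination step is organized differently. The paper (Theorem~\ref{thm:combine}) splits the customer set into two halves of size $m/2$, runs the high-value algorithm on one half with supplier set $H$ and the low-value algorithm on the other half with supplier set $L$, and \emph{outputs the union} of the two menu profiles. To make this work the paper needs, in addition to the decomposition $\opt \leq \opt_H + \opt_L$ that you prove, a second lemma of the form $\opt(m,X) \leq 2\,\opt(m/2,X)$, established by a one-customer-at-a-time removal argument. Your route instead runs both sub-algorithms on the \emph{full} customer set and returns the better of the two profiles; the inequality $\max(a,b)\geq \tfrac{1}{2}(a+b)$ then replaces the halving lemma entirely. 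Your stochastic-dominance proof of the decomposition is essentially the paper's (it invokes the same pointwise comparison of the Bernoulli parameters and sums independent indicators; the two-stage coupling you describe is more than is needed, since only the marginal of $|\menuj|$ matters). One small point you glide over: ``return whichever profile has larger expected matches'' requires computing $\sum_j \expt{|\menuj|/(|\menuj|+q_j)}$ for a given profile, which is polynomial because each $|\menuj|$ is a Poisson-binomial with at most $m+1$ support points; alternatively one can just output either profile uniformly at random and keep the same bound in expectation. Net effect: your argument is slightly more elementary (no customer-halving step), while the paper's produces a single combined profile without needing to evaluate or compare expected match counts.
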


In order to prove the above result, we analyze the problem in two different regimes, each of which is interesting in its own right. The first regime, the \textit{low-value supplier regime},  is when all suppliers have scores $\vj< 1$;  the other regime, the \textit{high-value supplier regime}, is when all suppliers have scores $\vj \geq 1$. For each of these regimes, we provide an efficient constant factor approximation algorithm for constructing  the menu sets. Later, we propose an algorithm that combines the algorithms from both these regimes (in a black box fashion), which achieves a constant factor approximation to the general case.

The algorithms for both regimes follow a similar high-level strategy. First, we provide an optimization problem that can be solved efficiently and is also an upper bound (up to a constant factor) to our original objective.  We then construct menu sets based on the solutions to these (easy to solve) optimization problems, that further achieve a constant factor approximation to the upper bounds and therefore a constant factor approximation to the original problem. Although, the algorithms for the two special cases follow similar a strategy, the optimization problem that upper bounds the original objective and the menu set construction algorithms are entirely different. We describe these in some detail next. 

\paragraph{Low-value suppliers' case.}
The case with low-value suppliers ($\vj <1$) is likely to be a better representation of the reality in many online platforms, but it is also a more technically challenging case. Recall that we have normalized the customers outside option to be $v_0=1$. Therefore, the case with $\vj <1$ represents a setting where the outside option is (ex-ante) better than an any individual option the platform can offer, and thus the platform should, in principle, offer larger assortments to consumers. One possible scenario that this can represent when the platform has a strong competitor, which is captured in reduced form by the customers' outside option. Thus, after a search, consumers are presented with many options. (For instance, after a search in Airbnb, the default is for the potential guest to be presented with 20 options per page.) 
Next we state the result that gives an efficient constant factor approximation algorithm for this case. 
\begin{restatable}[Low-value suppliers]{thm}{thmcaseone}
\label{thm:caseA}
Suppose that $v_j \leq 1 $ for all $j \in \otsupp$. Then, there exists an efficient algorithm for constructing menus for each customer $i \in \otcust$ such that the expected number of matches  is at least a constant fraction $\alpha_L$ of the expected number of matches achieved by the optimal menu set, where $\alpha_L$ is some absolute constant.
\end{restatable}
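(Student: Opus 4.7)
The plan is to follow the three-step roadmap sketched in the introduction. First, I would bucket suppliers geometrically in both coordinates: group together those with $v_j \in [2^{-t-1}, 2^{-t})$ and $q_j \in [2^s, 2^{s+1})$ (with a separate bucket for $q_j$ near $0$). After this discretization, suppliers in a bucket $k$ can be replaced by a common representative pair $(w_k, \tilde q_k)$ at a constant-factor cost, and the number of relevant buckets is at most $O(\log m \cdot \log n)$. The LP I would propose has variables $y_{i,k}\in[0,n_k]$ denoting the fractional number of bucket-$k$ suppliers shown to customer $i$, a per-customer cap $\sum_k y_{i,k}\, w_k \le C$ for a universal constant $C$, and auxiliary $z_k$'s maximizing $\sum_k z_k$ subject to
\[ z_k \le n_k, \qquad z_k \le \frac{\sum_i y_{i,k}\, w_k}{\tilde q_k + 1}. \]
The two $z_k$ constraints encode that bucket $k$ can contribute at most one match per supplier, and that matches scale with the total (normalized) load whenever the per-supplier load does not saturate the outside option.

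To show LP-optimum $\geq \Omega(\opt)$, I would first argue that WLOG an optimal menu set has every $V(M_i^*) := \sum_{j \in M_i^*} v_j = O(1)$: beyond that cap, customer $i$ already selects \emph{some} supplier with probability close to $1$, and further additions only dilute the per-slot probability $v_j/(V(M_i^*)+1)$ without helping any supplier escape saturation. Given such a bounded-menu solution, $\lambda_j^* := \sum_{i : j \in M_i^*} v_j/(V(M_i^*)+1) = \Theta(v_j r_j^*)$, where $r_j^*$ is the number of menus containing $j$. Since $x \mapsto x/(x+q)$ is concave, Jensen's inequality gives $\expt{|\mathbf{M}_j|/(|\mathbf{M}_j|+q_j)} \le \lambda_j^*/(\lambda_j^*+q_j)$, and concavity in $\lambda$ again lets me consolidate the sum over $j \in$ bucket $k$ into the LP's $z_k$ constraints after absorbing the bucketing constants.

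For the rounding, given an optimal LP solution $y^*$, I would handle each bucket $k$ independently, constructing a $0/1$ incidence matrix between customers and the $n_k$ suppliers in bucket $k$ whose row sums lie in $\{\lfloor y^*_{i,k}\rfloor, \lceil y^*_{i,k}\rceil\}$ and whose column sums lie within $\pm 1$ of the average $\sum_i y^*_{i,k}/n_k$; this is a classical transportation / network-flow integrality rounding. The column-balance is the crux: it guarantees every supplier in bucket $k$ has expected load $\Theta(L_k/n_k)$, where $L_k := \sum_i y^*_{i,k} w_k$. I would then analyze expected matches from bucket $k$ in two regimes. When the average per-supplier load $\rho_k := L_k/n_k$ is $O(\tilde q_k + 1)$, linearity of expectation already yields matches $\gtrsim L_k/(\tilde q_k+1)$; when $\rho_k \gtrsim \tilde q_k + 1$, a lower-tail bound on $\Pr(|\mathbf{M}_j|\ge 1)$ (using independence of customer choices and that each relevant customer selects $j$ with probability $\Omega(w_k)$) ensures a constant fraction of bucket-$k$ suppliers each contribute $\Omega(1)$ matches, for an $\Omega(n_k)$ total. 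Either way, the bucket's realized matches cover $z_k^*$ up to a constant factor.

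The main obstacle I expect is the LP upper bound, in particular the bounded-value-menu reduction combined with the Jensen step: I must show that an arbitrary optimal menu set (possibly with heavy menus or very uneven per-bucket loads) loses only a constant factor when charged against the LP. A second delicate step is the rounded-algorithm analysis, since the small-$\rho$ and large-$\rho$ regimes require different probabilistic tools, and all the constants (from bucketing, menu-value truncation, Jensen, rounding integrality, and concentration) must be tracked so as to collapse into a single absolute constant $\alpha_L$.
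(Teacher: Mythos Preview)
Your overall strategy---bucketing, an LP relaxation as an upper bound, balanced-load rounding, and a two-regime analysis of the rounded solution---is essentially the paper's approach. The substantive gap is in the step where you establish that the LP upper-bounds $\opt$ up to a constant.

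You propose to first argue that, without loss of generality, an optimal menu set has $V(M_i^*)=O(1)$ for every $i$, and then plug such a bounded-value solution into the LP. Your justification (``beyond that cap, customer $i$ already selects some supplier with probability close to $1$, and further additions only dilute'') does not suffice: truncating $M_i^*$ to a bounded-value subset does not merely un-dilute the kept suppliers, it also removes customer $i$'s demand from the discarded suppliers, and those discarded suppliers may have been the ones generating matches (for instance, they could have much smaller outside options $q_j$ than the kept ones). Making this work at the combinatorial menu level would require a careful global truncation argument coordinated across customers, and it is not obvious how to carry one out without essentially re-deriving the LP machinery.

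The paper sidesteps this by reversing the order of the two steps. It first upper-bounds $\opt$ by an optimization problem $\fopt$ that has \emph{no} per-customer cap: the Jensen/concavity argument you describe works directly on an arbitrary menu set because the fractional term $w_{k_1} x_{i,k}/(1+\sum_{k'} w_{k'_1} x_{i,k'})$ already absorbs heavy menus. Only afterwards does the paper introduce the cap $\sum_{k'} w_{k'_1} x_{i,k'}\le 1$, and it shows this costs at most a factor $2$ by a rescaling argument \emph{at the LP level}: for any feasible $x^*$ with $x_i^*:=\sum_{k'} w_{k'_1} x^*_{i,k'}>1$, setting $y_{i,k}=x^*_{i,k}/x_i^*$ gives
\[
\frac{w_k\, y_{i,k}}{1+\sum_{k'} w_{k'} y_{i,k'}}=\frac{w_k\, x^*_{i,k}}{2x_i^*}\ \ge\ \frac{1}{2}\cdot\frac{w_k\, x^*_{i,k}}{1+x_i^*}.
\]
This rescaling is painless precisely because fractional bucket-counts can be scaled, whereas actual menus cannot. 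The remainder of your proposal (transportation-style rounding to balance per-supplier exposure, then the small-load/large-load case split) matches the paper's Algorithms~\ref{alg:gone}--\ref{alg:gmenu} and its Case~1/Case~2 analysis, with only cosmetic differences.
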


To prove the above theorem, we first construct an LP optimization problem (Section \ref{sec_lp_relax_gen}), one whose optimal solution upper bounds the expected number of matches obtained by an optimal menu set (Lemmas \ref{lem:foptg} and \ref{lem:lpoptg}). Crucial to defining this LP is a two-dimensional \emph{bucketing} (grouping) of the suppliers. In particular, we group suppliers that have similar scores $v_j$ and similar value of their outside option $q_j$ in the same bucket. When constructing  the LP, we treat  all  suppliers  in  the  same  bucket  as  indistinguishable  and  we  track  how  many  suppliers  in each bucket are shown to each user.
We propose a rounding algorithm (Algorithm \ref{alg:gone}) that returns an integer solution and approximates the optimal solution of the above LP (Lemma \ref{lem:gone}) by a constant factor. We then use the output solution from the rounding algorithm as an input to Algorithm \ref{alg:gmenu}, which constructs the menus for the customers. In other words, the algorithm takes as an input how many suppliers in each bucket to be shown to each user,  and construct the actual menus by choosing \emph{which} suppliers are shown to each user. The key idea behind the algorithm is to show each supplier in the same bucket to \emph{approximately} the same number of users, which is useful to minimize the probability that cases where one supplier ends up being selected by too many customers while the other supplier is selected by too few of them arise. Moreover, our algorithm has the intuitive property that, for every supplier, the number of menus in which she appears is upper bounded by a constant that depends both on her (representative) value and the (representative) value of her outside option.

Finally, to establish Theorem \ref{thm:caseA}, we show how the menus thus constructed using Algorithm \ref{alg:gmenu} result in an expected number of matches that are within a certain constant factor $\alpha_L$ of the LP relaxation upper bound and thus of the optimal menu set. We discuss the above techniques more formally in Section \ref{subsec:smallSupplierLargeOutside}. 

\paragraph{High-value suppliers' case.}
In the high-value supplier regime, all suppliers $j\in \otsupp$ have scores $\vj \geq 1$. As before, we also show that we can construct an efficient constant-factor approximation algorithm for this case.
\begin{restatable}[High-value suppliers]{thm}{thmcasetwo}
\label{thm:caseB}
Suppose that $v_j \geq 1$ for all $j \in \otsupp$. Then, there exists an efficient algorithm for constructing menus for each customer $i \in \otcust$, such that, the expected number of matches is at least a constant fraction $\alpha_H$ of the expected number of matches achieved by the optimal menu set, where $\alpha_H$ is some absolute constant.
\end{restatable}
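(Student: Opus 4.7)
The plan is to restrict attention to \emph{size-one menus}: each customer is shown exactly one supplier. Since customers are identical, such a policy is fully parameterized by multiplicities $N_j$ (the number of menus containing supplier $j$) with $\sum_j N_j \leq \cust$. Under this restriction, the number of customers selecting supplier $j$ is $X_j \sim \mathrm{Binom}(N_j, p_j)$ with $p_j = v_j/(v_j+1) \geq 1/2$, and the expected matches from $j$ equal $\expt{X_j/(X_j+q_j)}$.

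I would first derive an LP-style upper bound on $\opt$. By the concavity of $x \mapsto x/(x+q_j)$ and Jensen's inequality, $\opt \leq \sum_j \mu^*_j/(\mu^*_j+q_j)$, where $\mu^*_j$ is the expected selection count of supplier $j$ under the optimal menu profile; moreover, $\sum_j \mu^*_j \leq \cust$ since each customer selects at most one supplier (the variant $\prob{X^*_j \geq 1} \leq \min(1, \mu^*_j)$ handles $q_j = 0$). I would then propose a greedy algorithm: sort suppliers in increasing order of $q_j$ and, while within the budget $\cust$, include $j$ in a served set $S$ and set $N_j = \lceil 2(q_j+1) \rceil$. For each $j \in S$, $\expt{X_j} = N_j p_j \geq q_j+1$; a Chernoff concentration bound (for moderately large $\expt{X_j}$) together with the inequality $X_j/(X_j+q_j) \geq 1/(1+q_j)$ on $\{X_j \geq 1\}$ and $\prob{X_j \geq 1} \geq 1 - (1/2)^{N_j} \geq 1/2$ (for small $\expt{X_j}$, using $p_j \geq 1/2$) would yield $\expt{X_j/(X_j+q_j)} \geq c_0$ for an absolute constant $c_0 > 0$. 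Hence $\expt{\textbf{M}_{\mathbb{A}}} \geq c_0 |S|$.

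The main obstacle is to show that $\opt = O(|S|)$. Let $q_S^+$ denote the $q$-value of the first unserved supplier. Since $q_j \geq q_S^+$ for all $j \notin S$, splitting the unserved suppliers on whether $\mu^*_j \leq q_S^+$ yields $\sum_{j \notin S} \mu^*_j/(\mu^*_j + q_j) = O(\cust/q_S^+)$, using $\sum_j \mu^*_j \leq \cust$ and the trivial bound of $1$ per supplier in the high-mean case. When $q_S^+ \geq 1$, the greedy stopping condition $|S|(2q_S^+ + 3) > \cust - (2q_S^+ + 3)$ forces $\cust/q_S^+ = O(|S|)$; combined with the bound $|S|$ on served suppliers' contributions, this gives $\opt = O(|S|)$. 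When $q_S^+ < 1$, each $N_j$ is bounded by an absolute constant so $|S| = \Omega(\cust)$, and the trivial bound $\opt \leq \cust$ suffices. Degenerate corner cases (in particular $|S| = 0$, which can only happen when $\cust$ is very small and is handled by a fallback assignment of all $\cust$ slots to the smallest-$q_j$ supplier) are dispatched directly. Unifying the constants across the $q_S^+ \geq 1$ and $q_S^+ < 1$ regimes yields a single absolute constant approximation ratio $\alpha_H$.
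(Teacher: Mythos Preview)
Your proposal is correct in outline and takes a genuinely different route from the paper's proof. Both arguments restrict to size-one menus and exploit $v_j\ge 1\Rightarrow p_j\ge 1/2$ together with Chernoff-type concentration, but they diverge after that. The paper first shows $\opt\le \max_{x\in\Z_{\ge0}^n,\ \sum_j x_j=m}\sum_j x_j/(x_j+q_j)$, solves the continuous relaxation of this discrete problem by convex optimization, rounds to a $1/2$-approximate integer solution $y^\ast$, sets $N_j=y^\ast_j$, and then proves the per-supplier lower bound $\expt{\Yj}\ge \tfrac14(1-e^{-1/24})\cdot y^\ast_j/(y^\ast_j+q_j)$, so the algorithm's value tracks the upper bound term by term. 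You instead fix $N_j=\lceil 2(q_j+1)\rceil$ up front, greedily serve the smallest-$q_j$ suppliers until the budget $m$ is exhausted, show each served supplier matches with an absolute constant probability, and then argue combinatorially that $\opt=O(|S|)$ by bounding the unserved suppliers' contribution by $O(m/q_S^+)$ and using the stopping condition $m<(|S|+1)(2q_S^++3)$. Your approach is more elementary (no convex program, just sorting and greedy) but incurs more case analysis ($q_S^+\ge1$ versus $q_S^+<1$, the $|S|=0$ fallback, and the ``all suppliers served'' boundary) and will produce a substantially worse explicit constant than the paper's $\alpha_H=\tfrac18(1-e^{-1/24})$. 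The paper's optimization-based allocation has the further advantage that the analysis is a direct per-term comparison with the relaxation, so the loss factors are easy to read off.
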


For this case, we employ a different approach to that in  the proof of Theorem \ref{thm:caseA}. Recall that we have normalized the value of customers' outside option to be one. 
Therefore, in the high-value suppliers regime, a customer is at least as likely to choose any given supplier over her outside option. Informally, this implies that the platform does not need to provide a lot of variety to consumers in order for them to prefer one of the suppliers in the platform over their outside option.  
In order to establish the result, we first construct a combinatorial optimization problem that upper bounds the expected number of matches achieved by the optimal menu set (Lemma \ref{lem:casetwoup}). This combinatorial optimization problem can be solved approximately (up to factor of $1/2$) in polynomial time (Lemma~\ref{lem:eff}). Finally, we construct menu set based on the approximate solution, which can then be guaranteed to achieve a constant factor approximation of the optimal expected number of matches possible in this setting (Theorem \ref{thm:caseB}). The menu set constructed by the algorithm results in each costumer being presented with a single supplier, which basically helps the platform to avoid lack of coordination on the consumer side. In our algorithms and results,  we extensively exploiting the fact that suppliers are of high value. For instance, this allows us to establish that the proposed upper bound is be relatively tight. A detailed discussion of the proof of the above theorem can be found in Section \ref{subsec:largeSupplier}. 

\BlankLine

Finally, we show that the results for the two different cases above can be combined to give a constant factor approximation to the original problem as showcased by the following result, the proof of which is deferred to Appendix \ref{appendix_results}.

\begin{restatable}{thm}{thmcombine}
\label{thm:combine} Fix the set of customers $\otcust$ and set of suppliers $\otsupp$ with associated non-negative public scores  $\{v_j\}_{j\in \otsupp}$ and non-negative outside option scores $\{q_j\}_{j\in \otsupp}$.  
Suppose that there exist efficient algorithms that achieve an approximation ratio of $\alpha_L$ and $\alpha_H$  for the low- and high-value supplier cases, respectively. Then, there exists an efficient algorithm for the general two-sided sequential assortment optimization problem such that the expected number of matches is at least a fraction $\frac{1}{2}\min\{\alpha_L,\alpha_H\}$ of the expected number of matches achieved by the optimal menu set.
\end{restatable}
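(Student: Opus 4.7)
The plan is to partition suppliers by value, run each regime-specific algorithm on its corresponding subset, and output whichever menu set achieves more expected matches. Concretely, I would define $L = \{j \in \otsupp : v_j \leq 1\}$ and $H = \{j \in \otsupp : v_j > 1\}$, and let $\opt_L$ (resp.\ $\opt_H$) denote the optimal expected number of matches for the restricted instance that uses only suppliers in $L$ (resp.\ $H$), with all $v_j$'s and $q_j$'s unchanged. Theorems~\ref{thm:caseA} and~\ref{thm:caseB} apply to these two restricted instances and yield efficient algorithms with approximation ratios $\alpha_L$ and $\alpha_H$, respectively. The core of the proof is the inequality $\opt_L + \opt_H \geq \opt$; assuming this, the combined algorithm achieves expected matches of at least
\[
    \max(\alpha_L \opt_L, \alpha_H \opt_H) \;\geq\; \tfrac{1}{2}(\alpha_L \opt_L + \alpha_H \opt_H) \;\geq\; \tfrac{1}{2}\min(\alpha_L, \alpha_H)(\opt_L + \opt_H) \;\geq\; \tfrac{1}{2}\min(\alpha_L, \alpha_H) \cdot \opt,
\]
which is the desired guarantee.

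To establish $\opt_L + \opt_H \geq \opt$, I would fix an optimal menu set $\menuset^{*} = \{\menui^{*}\}_{i \in \otcust}$ for the general problem and decompose $\opt = M^L + M^H$, where $M^L$ (resp.\ $M^H$) is the expected number of matches under $\menuset^{*}$ whose matched supplier lies in $L$ (resp.\ $H$). I would then show $\opt_L \geq M^L$, with the symmetric statement for $H$ following identically. The witness is the restricted menu profile $\menui^L \defeq \menui^{*} \cap L$, which is feasible for the $L$-restricted instance. For any $j \in L$ and any customer $i$, the MNL selection probability satisfies
\[
    \frac{v_j \cdot \mathbf{1}[j \in \menui^{*}]}{\sum_{k \in \menui^L} v_k + 1} \;\geq\; \frac{v_j \cdot \mathbf{1}[j \in \menui^{*}]}{\sum_{k \in \menui^{*}} v_k + 1},
\]
because removing the $H$-suppliers only shrinks the denominator. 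Since customers' first-stage choices are independent across $i$, the indicator $\mathbf{1}[i \text{ selects } j]$ is, across customers, a collection of independent Bernoullis whose parameters increase pointwise as we pass from $\menuset^{*}$ to $\{\menui^L\}$. A standard coupling therefore shows that the (random) number of customers selecting $j$ under the restricted profile stochastically dominates that under $\menuset^{*}$. Composing with the monotonicity of $k \mapsto k/(k+q_j)$, and using that supplier $j$'s expected match contribution is exactly $\expt{|\mathbf{M}_j|/(|\mathbf{M}_j|+q_j)}$, this contribution can only increase under $\{\menui^L\}$. Summing over $j \in L$ gives expected $L$-matches under $\{\menui^L\}$ of at least $M^L$, whence $\opt_L \geq M^L$.

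The main obstacle is the monotonicity step above; everything else is routine. The delicate point is that \emph{deleting} high-value suppliers from OPT's menus might a priori affect the $L$-side matches in complex ways, and one must rule this out using two features of the model: in MNL, removing options can only increase the selection probability of each remaining option; and on the supplier side, $k/(k+q_j)$ is monotone in $k$, so boosting demand never hurts a supplier's match probability. Once the monotonicity claim is in hand, combining the two regime algorithms as described yields the stated $\tfrac{1}{2}\min(\alpha_L,\alpha_H)$ approximation for the general two-sided sequential assortment problem.
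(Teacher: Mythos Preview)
Your argument is correct and in fact slightly cleaner than the paper's. Both approaches partition the suppliers into the low- and high-value subsets and both rely on the inequality $\opt_L+\opt_H\ge\opt$, proved via exactly the stochastic-dominance argument you outline (deleting the other class of suppliers from the optimal menus only raises each remaining supplier's MNL selection probability, hence its Poisson-binomial demand, hence its match probability through the monotone map $k\mapsto k/(k+q_j)$). The difference is in how the two sub-algorithms are combined. You run each regime-specific algorithm on the \emph{full} set of $m$ customers and output whichever menu set has larger expected value, extracting the factor $\tfrac12$ from $\max(a,b)\ge\tfrac12(a+b)$. The paper instead \emph{splits} the $m$ customers into two halves, runs one sub-algorithm on each half (with the corresponding supplier subset), and returns the union of the two menu profiles; to justify this it must additionally prove $OPT(m,X)\le 2\,OPT(m/2,X)$ for each subset $X$, which it does via a customer-removal argument establishing $OPT(t,X)\le\tfrac{t}{t-1}\,OPT(t-1,X)$. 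Your route avoids that extra lemma entirely, at the minor cost of having to compute and compare the two expected-match values (which is polynomial time, since each supplier's demand is a Poisson binomial whose distribution can be tabulated by dynamic programming). The paper's route has the practical upside of producing a single combined menu directly, without any such comparison.
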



To conclude, note that \Cref{thm:main} follows immediately by the  previous results as, by Theorem~\ref{thm:combine}, the existence of the efficient constant-factor approximation algorithms for the low- and high-value supplier cases (Theorems \ref{thm:caseA} and \ref{thm:caseB}) guarantees the existence of a constant-factor approximation algorithm for the general case.

\section{Proof overview for the low-value suppliers' case}\label{subsec:smallSupplierLargeOutside}

In this section, we analyze the case where all suppliers $j \in N$ have scores $\vj \leq 1$.
We provide an algorithm that outputs menu set $\menuset$, such that the expected number of matches under $\menuset$ is a constant-factor approximation to the expected number of matches achieved by the optimal menu set. 

The organization of this section is follows.  In \Cref{sec:preliminaries} we introduce notations and a preliminary result. In \Cref{sec_lp_relax_gen} we provide a linear program using the idea of bucketing that upper bounds the expected number of matches  achieved by the optimal menu. This LP outputs a fractional solution; in \Cref{rounding_general} we use this fractional solution as an input to a rounding algorithm that outputs an integral solution and use this solution to construct a menu set that achieves a constant-factor approximation to the expected number of matches  achieved by the optimal menu.

\subsection{Preliminaries}\label{sec:preliminaries}
We now provide the notations useful for the purpose of our analysis. Let $\opt$ be the expected number of matches achieved by the optimal menu set.
In the rest of the paper, we slightly abuse notation and, when clear from the context, we refer to an optimization problem and to its optimal value using the same name. 

To simplify the exposition in the remaining of the section, we assume that 
 all the suppliers have outside option scores greater than one, i.e.,  for all $j \in \otsupp$ we have  $\qj\geq 1$. As we show by the end of this subsection, this is without loss of generality up to constant factors. 

To establish our result, we use the idea of \emph{bucketing}, where we group suppliers into buckets, where each bucket includes all suppliers that are ``similar" in terms of both their value (public score) and the value of their outside option. 
Therefore, we use a two-dimensional index $k=(k_1,k_2) \in \Z_{\geq 0}^{2}$ for the buckets. The $k$-{th} bucket is given by $$\Sk \defeq \left\lbrace j \in \otsupp:~\vj \in \left[\frac{1}{2^{k_1}},\frac{1}{2^{k_1-1}}\right)_{\R} \text{ and } \qj \in [2^{k_2},{2^{k_2+1}})_{\R}\right\rbrace.$$ 
For bucket $k$, we define a \textit{representative score} $\wl$ and a \textit{representative value of the outside option}  as
\begin{equation}\label{eq:rep_scores}
    \wl \defeq \frac{1}{2^{\ko}} \qquad \mbox{ and } \qquad \qr \defeq 2^{\kt},
\end{equation}
respectively. 
 Note that for any $j\in \Sk$, we have that $\wl \leq \vj \leq 2\wl$ and $\qr \leq \qj \leq 2\qr$. We use $\Bo$ and $\Bt$ to denote the set of all $\wl$ and $\qr$ values, respectively, corresponding to non-empty buckets. Further, let the set of all non-empty bucket indices be denoted by $\otbuck$, that is,  $\otbuck\defeq \{k=(\ko,\kt)\in \Bo \times \Bt:~|\Sk|\geq 1\}$.  

To conclude, we show that it is without loss (up to constant factors) to  assume that all suppliers have $q_j \geq 1$. To show this, we consider the following two settings. 

\noindent {\bf Original setting}: Let $\otcust$ and $\otsupp$ be the set of customers and suppliers, respectively. For each supplier $j \in \otsupp$  let $\vj \leq 1$ and $\qj$  be her score and outside option, respectively. 

We define a new setting by modifying the outside option scores for a subset of suppliers. For the new setting let $S \defeq \{j\in \otsupp~|~\qj <1\}$.

\newcommand{\qpj}{q'_{j}}
\newcommand{\vpj}{v'_{j}}
\noindent {\bf New setting}: Consider an instance from the original setting, and define an instance for the new setting where, $\vpj$ and $\qpj$, the score and outside option of supplier $j\in \otsupp$ in the new setting are defined as follows:
\begin{align*}
\vpj\defeq\vj \text{ for all }j\in \otsupp \quad \quad \quad \qpj \defeq 
\begin{cases}
1 \text{ if } j \in S,\\
\qj \text{ otherwise}.
\end{cases}
\end{align*}

We now state the following result, which is proved in \Cref{appendix_general}.

\begin{restatable}{lemma}{lemcaseone}
    \label{lem:case11}
 Fix any feasible menu set  $\menuset$ for customers,  and let $\textbf{M}$ and $\textbf{M}'$ be the random variables counting the number of matches obtained when using  $\menuset$ in the {\bf original} and in the {\bf new} setting, respectively. Then,
$$\expt{\textbf{M}'} \leq \expt{\textbf{M}}\leq 2 \expt{\textbf{M}'}.$$
\end{restatable}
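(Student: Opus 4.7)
The plan is to exploit the fact that the two settings differ only in the suppliers' outside option scores, while the customer-side choice distribution is identical. So I will couple the two processes on the customer side and then compare the conditional matching probabilities supplier-by-supplier.

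Concretely, fix the menu profile $\menuset$ and let $\textbf{T}_j$ denote the (random) number of customers who choose supplier $j$ in the first stage. Since $\{v_j\}_{j\in\otsupp}$ and the menus are unchanged in the new setting, the joint distribution of $(\textbf{T}_j)_{j\in\otsupp}$ is identical in both settings; couple them so they are equal. Conditional on $\textbf{T}_j = t$, supplier $j$ matches with probability $\frac{t}{t+q_j}$ in the original setting and $\frac{t}{t+q_j'}$ in the new setting, so by linearity of expectation
\begin{equation*}
\expt{\textbf{M}} = \sum_{j\in \otsupp} \expt{\frac{\textbf{T}_j}{\textbf{T}_j + q_j}}, \qquad \expt{\textbf{M}'} = \sum_{j\in \otsupp} \expt{\frac{\textbf{T}_j}{\textbf{T}_j + q_j'}}.
\end{equation*}
Both sums interpret $0/(0+0)$ as $0$, which is consistent since a supplier who is not chosen cannot match.

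It then suffices to prove the pointwise inequality $\frac{t}{t+q_j'} \leq \frac{t}{t+q_j} \leq 2\cdot\frac{t}{t+q_j'}$ for every integer $t \geq 0$. For $j \notin S$ we have $q_j' = q_j$, so both inequalities are trivial. For $j \in S$ we have $q_j < 1 = q_j'$, so the left inequality is immediate from monotonicity in the denominator. For the right inequality, the case $t=0$ is trivial; for $t \geq 1$ we use $\frac{t}{t+q_j} \leq 1$ together with $\frac{t}{t+1} \geq \frac{1}{2}$, which gives $\frac{t}{t+q_j} \leq 1 \leq 2\cdot \frac{t}{t+1} = 2\cdot\frac{t}{t+q_j'}$.

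Summing the pointwise inequality over $j$ after taking expectations yields the claim. The argument is essentially routine once the right coupling is identified; the only mildly delicate point is recognizing that $\textbf{T}_j$ has the same distribution under both settings (because the customer-side MNL depends only on the $v_j$'s), which is what allows the supplier-by-supplier comparison to go through without any fuss about the joint law.
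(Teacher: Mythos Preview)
Your proof is correct and follows essentially the same approach as the paper: both arguments observe that the customer-side distribution of $(\textbf{T}_j)_{j\in\otsupp}$ is unchanged across the two settings, then establish the pointwise inequality $\frac{t}{t+q_j'}\le \frac{t}{t+q_j}\le 2\,\frac{t}{t+q_j'}$ for nonnegative integers $t$ and sum over suppliers. The paper phrases the right-hand inequality directly as $\frac{t}{t+q_j}\le 2\,\frac{t}{t+1}$ for $q_j\le 1$, while you route it through $\frac{t}{t+q_j}\le 1\le 2\,\frac{t}{t+1}$ for $t\ge 1$, but these are the same observation.
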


\subsection{An upper bound}
\label{sec_lp_relax_gen}

The objective of this subsection is to  provide a linear program that, up to constant factors, upper bounds the expected number of matches achieved by the optimal menu set. The proposed linear program essentially chooses, for each customer, how many suppliers in a given bucket to show to her. 

In route towards defining our linear program, for each customer $i \in \otcust$ and  each supplier $j \in \otsupp$, define a variable $\xij$ which takes value $1$ if $j$ is present in $i$'s menu and $0$ otherwise. In addition, for  every $i\in \otcust$ and $k \in \otbuck$ define a variable $\xik$ which counts the number of suppliers from bucket $k$ present in $i$'s menu, i.e., $\xik\defeq\sum_{j\in \Sk}\xij$ and  thus $0 \leq \xik \leq |\Sk|$. Using this definition, we will derive some upper bounds on the expected number of matches that will constitute the basis for our linear program.

Consider a the menu set given by $x = \{\xij\}_{i\in \otcust, j \in \otsupp}$. 
Let $\Xk$ for all $k=(\ko,\kt)\in \otbuck$ be the random variable counting the number of customers that choose a supplier from bucket $k$. Similarly, let $\Xkj$ be the random variable counting the number of customers that choose supplier $j$ from bucket $k$. Observe that
    \begin{equation}\label{eq:xkj}
    \Xk=\sum_{j \in \Sk} \Xkj.    
    \end{equation}
By linearity of expectation and the definition of the representative scores $w$s (see \eqref{eq:rep_scores}), the expected value of $\Xk$ satisfies
\begin{equation}\label{eq:gexptxk}
    \expt{\Xk}= \sum_{j \in \Sk} \expt{\Xkj}= \sum_{j\in \Sk}\sum_{i \in \otcust}\frac{\vj \xij}{1+\sum_{j'\in \otsupp} \vjp \xijp} \leq 2\sum_{i \in \otcust}\frac{\wl \xik}{1+\sum_{k' \in \otbuck}\wlp \xikp}~.    
     \end{equation}
where we used that $v_j\leq 2 \wl$ for all $j \in \Sk$ and that $\sum_{j'\in \otsupp} \vjp \xijp \geq \sum_{k' \in \otbuck}\wlp \xikp$. 

For $k\in \otbuck$ and  supplier $j$ in bucket $k$, let $\Ykj{j}$ denote the random variable  representing whether supplier $j$ is matched, and  let $\Yk$ be the random variable counting the number of matches in the platform involving suppliers from bucket $k$. 
The expected value of $\Yk$ satisfies
    \begin{equation}\label{eq:gexptyk}
    \expt{\Yk} = \expt{\sum_{j \in \Sk} \expt{\Ykj{j}|\Xkj}}= \expt{\sum_{j \in \Sk}\frac{\Xkj}{\qj+\Xkj}} \leq \expt{\sum_{j \in \Sk}\frac{\Xkj}{\qr+\Xkj}} ~.    
    \end{equation}
The second equality follows from the suppliers' choice model and the fact that suppliers value all customers who chose them equally at $v_i=1$, and  the inequality follows since $\qj \geq \qr$ for all $j\in \Sk$. 

We next provide an upper bound on the expected value of $\Yk$ that turns out to be easier to work with for optimization purposes. Note that, for each $j \in \Sk$, the expression $\frac{\Xkj}{\qr+\Xkj}$ is upper bounded by $\min \left \{ \frac{\Xkj}{\qr},1 \right \}$. Further, by Jensen's inequality we have  $\expt{\min \left \{ \frac{\Xkj}{\qr},1 \right \}}\leq \min \left \{\expt{ \frac{\Xkj}{\qr}},1 \right \}$. Combining  the  above two observations together with the linearity of expectation and Equation \eqref{eq:gexptyk}, we obtain that
\begin{equation}
\label{eq:exptykub}
    \begin{split}
\expt{\Yk}&\leq \sum_{j \in \Sk}\min \left \{\expt{ \frac{\Xkj}{\qr}},1 \right \} \leq \min \left \{\sum_{j \in \Sk}\expt{ \frac{\Xkj}{\qr}},|\Sk| \right \} \\
&=\min \left \{\frac{1}{\qr}\expt{\Xk},|\Sk| \right \} \leq \min \left \{\frac{2}{\qr}\sum_{i=1}^{m}\frac{\wl \xik}{1+\sum_{k' \in \otbuck}\wlp \xikp},|\Sk| \right \}, 
    \end{split}
\end{equation}

\noindent where the second inequality follows because $\min\{.\}$ is a concave function and the last inequality follows from Equation \eqref{eq:gexptxk}. Note that the variables $\xik$ are sufficient (and we don't need $\xij$ variables) to provide an upper bound on the expected number of matches. Therefore, for our purposes, we will treat all suppliers in the same bucket as indistinguishable and we track how many suppliers in bucket $k$ are shown to each user. 

Based on the above discussion and on the expression obtained in  Equation \eqref{eq:exptykub}, we define the following optimization problem:
\begin{equation}\label{eq:foptg}
\begin{split}
        \fopt \defeq
     \max_{x}& \quad \sum_{k \in \otbuck}\min \left \{\frac{2}{\qr}\sum_{i=1}^{m}\frac{\wl \xik}{1+\sum_{k' \in \otbuck}\wlp \xikp},|\Sk| \right \}\\
     \text{s.t.}& ~0 \leq \xik \leq |\Sk| \text{ for all }i\in \otcust \text{ and }k\in \otbuck~.
     \end{split}
\end{equation}
 The optimal value of the above optimization problem upper bounds the value of the expected number of matches  achieved by the optimal menu. This is formalized in the following lemma, which is formally proved in  Appendix~\ref{appendix:lp_relaxation_general}.
    
    \begin{restatable}[$\fopt$ upper bounds $\opt$]{lemma}{lemfoptg}
    \label{lem:foptg}
        $\opt \leq  \fopt~.$ 
    \end{restatable}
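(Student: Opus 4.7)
The plan is to exhibit, from the optimal menu set, a feasible solution to the optimization problem defining $\fopt$ whose objective value is at least $\opt$. This is essentially a ``plug-in'' argument: all the nontrivial inequalities have already been derived in the body of the subsection (Equations \eqref{eq:gexptxk} and \eqref{eq:exptykub}), so the lemma is a matter of assembling them.

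Concretely, let $\Mset^{*} = \{\mathrm{M}_i^{*}\}_{i\in\otcust}$ be an optimal menu set achieving $\opt$. From $\Mset^{*}$ define, for each $i\in\otcust$ and each $j\in\otsupp$, the indicator $\xij^{*}\in\{0,1\}$ of whether $j\in\mathrm{M}_i^{*}$, and then set $\xik^{*} := \sum_{j\in\Sk}\xij^{*}$ for every bucket $k\in\otbuck$. First I would check feasibility: since at most all suppliers of bucket $k$ can be in any menu, $0 \leq \xik^{*} \leq |\Sk|$, so $x^{*}=\{\xik^{*}\}$ is feasible in \eqref{eq:foptg}.

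Next I would decompose the objective by buckets. By definition of a match, $\opt = \sum_{k\in\otbuck}\expt{\Yk}$, where $\Yk$ is evaluated under $\Mset^{*}$. Applying the chain of inequalities derived in \eqref{eq:gexptyk} and \eqref{eq:exptykub} with the variables $\xik^{*}$, each term $\expt{\Yk}$ is upper bounded by
\[
\min\!\left\{\frac{2}{\qr}\sum_{i=1}^{m}\frac{\wl\,\xik^{*}}{1+\sum_{k'\in\otbuck}\wlp\,\xikp^{*}},\ |\Sk|\right\}.
\]
Summing over $k\in\otbuck$ gives a number that is exactly the $\fopt$ objective evaluated at $x^{*}$, so $\opt \leq$ this value $\leq \fopt$, completing the proof.

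I do not expect a genuine obstacle here, since the inequalities doing all the work are already in place: the bucketing replaces $\vj$ by $2\wl$ and $\qj$ by $\qr$ in the right directions, Jensen is applied to the concave function $\min\{\cdot/\qr,1\}$, and the concavity of the minimum is used once more to pull the sum over $j\in\Sk$ inside. The only thing to be careful about is bookkeeping: I would make sure the bound in \eqref{eq:exptykub} is applied with the $\xik^{*}$ derived from $\Mset^{*}$ (not with hypothetical optimizers of the LP), and that the relaxation $\sum_{j'}\vjp\xijp \geq \sum_{k'}\wlp\xikp$ in \eqref{eq:gexptxk} is noted explicitly, since that is where passing from per-supplier to per-bucket variables loses no more than a constant.
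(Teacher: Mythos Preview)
Your proposal is correct and follows essentially the same argument as the paper: both take the optimal menu set, define $\xik^{*}$ as the number of bucket-$k$ suppliers in customer $i$'s optimal menu, observe feasibility for \eqref{eq:foptg}, and then invoke the already-derived inequalities \eqref{eq:gexptxk} and \eqref{eq:exptykub} bucket by bucket before summing over $k$. There is no meaningful difference in approach.
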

    
      Next, we provide an LP formulation that approximates  optimization problem \eqref{eq:foptg} and therefore, up to constant factors,  upper bounds the value of expected number of matches achieved by the optimal menu.
\begin{equation}\label{eq:lpoptg}
\begin{split}
     \lpopt \defeq \max_{x} & ~\sum_{k \in \otbuck} \frac{2}{\qr}\sum_{i=1}^{m}\wl \xik \\
     \text{s.t.}& ~ \sum_{k' \in \otbuck}\wlp \xikp \leq 1 \text{ for all }i\in \otcust,\\ 
     &~\frac{2}{\qr}\sum_{i=1}^{m}\wl \xik \leq |\Sk| \text{ for all }k \in \otbuck \\
     &~0 \leq \xik \leq |\Sk|\text{ for all }i\in \otcust \text{ and }k\in \otbuck ~.
\end{split}
\end{equation}

\begin{restatable}[$\lpopt$ approximates $\fopt$]{lemma}{finaloptg}
\label{lem:lpoptg}
      $\frac{1}{2}\fopt \leq \lpopt \leq 2 \fopt.$
    \end{restatable}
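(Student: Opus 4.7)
The plan is to prove the two inequalities separately by transforming optimal solutions between the two programs, losing a factor of at most two in each direction.

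For the upper bound $\lpopt \leq 2\fopt$, I would take an LP-optimal $x^*$, which is automatically feasible for the $\fopt$ maximization since $\fopt$'s only constraint is $0 \le \xik \le |\Sk|$, also present in the LP. The per-customer LP constraint forces $1+\sum_{k'}\wlp\xikp^* \le 2$, so for each bucket $k$ the first argument of the $\min$ inside $\fopt$'s objective is at least $\frac{1}{\qr}\sum_i \wl\xik^*$; the per-bucket LP constraint $\frac{2}{\qr}\sum_i \wl\xik^* \le |\Sk|$ then guarantees this value is also at most $|\Sk|$, so the $\min$ itself is at least $\frac{1}{\qr}\sum_i \wl\xik^*$. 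Summing over buckets yields $\fopt \ge \frac{1}{2}\lpopt$.

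For the lower bound $\lpopt \ge \frac{1}{2}\fopt$, I would take an $\fopt$-optimal $x^*$ and construct an LP-feasible solution by two sequential rescaling steps. First, enforce the per-customer constraint by defining $\tilde{x}_{ik} \defeq \xik^*/\max(1,\sum_{k'}\wlp\xikp^*)$; since $1+a \ge \max(1,a)$ for $a \ge 0$, this gives $\frac{\wl\xik^*}{1+\sum_{k'}\wlp\xikp^*} \le \wl\tilde{x}_{ik}$, so the first argument of $\fopt$'s $\min$ for bucket $k$ is upper bounded by $A_k \defeq \frac{2}{\qr}\sum_i \wl\tilde{x}_{ik}$. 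Second, enforce the per-bucket constraint by $\hat{x}_{ik} \defeq \tilde{x}_{ik}\cdot\min(1,|\Sk|/A_k)$; this only shrinks variables, so the per-customer constraint still holds and $0 \le \hat{x}_{ik} \le \xik^* \le |\Sk|$ is preserved, while the LP contribution of bucket $k$ becomes exactly $\min(A_k,|\Sk|)$. Since $\fopt$'s bucket-$k$ term is a $\min$ whose first argument is at most $A_k$ and whose second is $|\Sk|$, that term is at most $\min(A_k,|\Sk|)$ as well, so summing over buckets yields $\lpopt \ge \fopt$, which implies the stated $\lpopt \ge \frac{1}{2}\fopt$.

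The main obstacle is handling the two LP constraints simultaneously when moving from $\fopt$ to the LP, since naive per-customer scaling can violate the per-bucket constraint and vice versa. What makes the sequential scaling work is that both steps only decrease variable values, so the second step never disturbs the first constraint; on the other side, the only place a factor of two is lost is via the algebraic bound $1+\sum_{k'}\wlp\xikp^* \le 2$ arising from the per-customer LP constraint, which is exactly why the ratio in the statement is two and not larger.
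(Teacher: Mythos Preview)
Your argument is correct; in fact for the lower-bound direction it proves the stronger inequality $\lpopt \geq \fopt$, which you then relax to the stated $\lpopt \geq \tfrac{1}{2}\fopt$. The paper proceeds differently: it introduces two intermediate optimization problems $\optoneg$ (adding only the per-customer constraint $\sum_{k'}\wlp\xikp \leq 1$ to $\fopt$) and $\opttwog$ (dropping the denominator from the objective of $\optoneg$), proves $\optoneg \leq \fopt \leq 2\optoneg$, then $\optoneg \leq \opttwog \leq 2\optoneg$, and finally $\opttwog = \lpopt$; chaining these yields exactly the stated bounds. Your approach avoids the intermediate formulations by directly transporting optimal solutions in both directions, and the key trick---scaling by $1/\max(1,\sum_{k'}\wlp\xikp^{*})$ rather than the paper's $1/\xstar_i$ when $\xstar_i>1$---together with the observation $1+a \geq \max(1,a)$ is what recovers the full $\fopt$ objective with no loss on that side. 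The paper's route is more modular (each intermediate step is a single conceptual change), while yours is shorter and, as noted, sharpens one of the two inequalities.
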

    
    The proof of \Cref{lem:lpoptg} can be found in  Appendix \ref{appendix:lp_relaxation_general}.  \Cref{lem:foptg} and \Cref{lem:lpoptg} together establish that the the optimal fractional solution to the optimization problem \eqref{eq:lpoptg} with objective value $\lpopt$ upper bounds $\opt$:
    \begin{cor}[{$\lpopt$ upper bounds $\opt$}]\label{cor:lpupg}
    $\opt \leq 2\lpopt.$
    \end{cor}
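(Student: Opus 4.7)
The statement is a direct consequence of the two preceding lemmas, so the plan is essentially to chain them. Specifically, \Cref{lem:foptg} gives $\opt \leq \fopt$, and \Cref{lem:lpoptg} provides $\frac{1}{2}\fopt \leq \lpopt$, which rearranges to $\fopt \leq 2\lpopt$. Combining these two inequalities transitively yields
\[
\opt \;\leq\; \fopt \;\leq\; 2\,\lpopt,
\]
which is exactly the claim.

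There is no real obstacle here, since both inputs have already been established in the section and the corollary is merely composing them. The only thing to double-check is the direction of the inequality in \Cref{lem:lpoptg}: the relevant half is the lower bound $\frac{1}{2}\fopt \leq \lpopt$ (the upper bound $\lpopt \leq 2\fopt$ is not used in this corollary, though it will matter later when one argues that solving the LP does not lose too much against $\fopt$). Thus the one-line proof is simply to write
\[
\opt \;\stackrel{\text{\Cref{lem:foptg}}}{\leq}\; \fopt \;\stackrel{\text{\Cref{lem:lpoptg}}}{\leq}\; 2\,\lpopt,
\]
and declare the corollary proved. The main intellectual content lies in the two lemmas themselves (the bucketing-based bounding of $\expt{\Yk}$ that led to $\fopt$, and the concavity/min-linearization argument that passed from $\fopt$ to the linear program $\lpopt$); the corollary simply packages these into the clean statement needed to drive the rounding analysis that follows in \Cref{rounding_general}.
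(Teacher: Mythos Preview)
Your proposal is correct and matches the paper's own justification: the corollary is stated immediately after \Cref{lem:foptg} and \Cref{lem:lpoptg} with the remark that those two lemmas together establish it, exactly via the chain $\opt \leq \fopt \leq 2\lpopt$ that you wrote down.
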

 
\subsection{Rounding, menu characterization, and constant-factor approximation guarantee}
\label{rounding_general}

Next, we provide the rounding algorithm for the fractional solution returned by problem  \eqref{eq:lpoptg}. Later, using the output of the rounding algorithm, we provide an algorithm to construct the menu for each customer. Finally, we argue that the expected number of matches achieved by the  constructed menu yields a constant approximation to the optimal menu. We now provide the description of our rounding algorithm.


\begin{algorithm}[H]\label{alg:rounding}
	\SetAlgoNoLine
	\KwIn{The optimal (fractional) solution to problem  \eqref{eq:lpoptg}, $\xstar$.}
	\KwOut{An integral solution $x$ that satisfies the properties stated in \Cref{lem:gone}. }
	\BlankLine
	{\bf Algorithm}
	
	For all $i\in \otcust$ and $k \in \otbuck$, assign	$\xik=\floor{\xstar_{i,k}}~\text{if}~ \xstar_{i,k} \geq 1$
	
	For all $\ell \in \Bo$ and $i \in \otcust$, initialize $\yil=0$.
		
		\For{$\ell \in \Bo$}{
		\For{$k \in \otbuck \text{ and }\wl=\ell$}{
		Let $\sk=\sum_{\{i\in \otcust:~\xstar_{ik}<1 \}} \xstar_{ik}$.
		Pick $\ceil{\sk}$ customers with minimum $\yil$ values (breaking ties arbitrarily) in the set $\{ i\in \otcust:~\xstar_{ik}<1\}$; for each such costumer $i$, set $\xik=1$  and increment $\yil=\yil+1$.}
		}
	\caption{Rounding}
	\label{alg:gone}
\end{algorithm}
    \newcommand{\cm}{2c}
    \newcommand{\cmpo}{2c+1}
    
We next characterize some properties of the output of the rounding algorithm. 
    \begin{restatable}{lemma}{rounding}
    \label{lem:gone}
    Algorithm \ref{alg:gone} returns an integral solution $x$ that satisfies the following properties:
    \begin{enumerate}
        \item $  \sum_{k \in \otbuck}\min \left\lbrace \frac{2}{\qr}\sum_{i \in \otcust} \wl \xik,|\Sk|\right\rbrace\geq \frac{1}{2}\lpopt$.
        \item There exists a constant $c\geq 1$ such that $\sum_{k' \in \otbuck}\wlp \xikp \leq c$ for all customers $i\in \otcust$.
        \item $\frac{2}{\qr}\sum_{i=1}^{m}\wl \xik \leq |\Sk|+\frac{2\wl}{\qr}$ for all $k\in \otbuck$.
    \end{enumerate}
    \end{restatable}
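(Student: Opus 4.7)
My overall approach is to verify each of the three claims by directly tracing through the two rounding operations in Algorithm~\ref{alg:gone}: the flooring step on entries with $\xstar_{i,k}\geq 1$, and the balanced ceiling step on entries with $\xstar_{i,k}<1$. For shorthand, write $I_k=\{i:\xstar_{i,k}<1\}$ and $s_k=\sum_{i\in I_k}\xstar_{i,k}$.

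\textbf{Property 3 (per-bucket cap).} For $i\notin I_k$ we have $x_{i,k}=\lfloor \xstar_{i,k}\rfloor\leq\xstar_{i,k}$, while the ceiling step contributes exactly $\lceil s_k\rceil\leq s_k+1$ to the sum $\sum_{i\in I_k}x_{i,k}$. Combining, $\sum_i x_{i,k}\leq \sum_i\xstar_{i,k}+1$, and multiplying by $2\wl/\qr$ together with the LP constraint $\tfrac{2}{\qr}\sum_i\wl\xstar_{i,k}\leq|\Sk|$ immediately yields $\tfrac{2}{\qr}\sum_i\wl x_{i,k}\leq|\Sk|+\tfrac{2\wl}{\qr}$.

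\textbf{Property 1 (objective preservation).} A matching lower bound comes from $\lfloor z\rfloor\geq z/2$ for $z\geq 1$ and $\lceil s_k\rceil\geq s_k$, giving $\sum_i x_{i,k}\geq \tfrac{1}{2}\sum_i\xstar_{i,k}$ for every $k$. Writing $a_k^*=\tfrac{2}{\qr}\sum_i\wl\xstar_{i,k}$ and $a_k=\tfrac{2}{\qr}\sum_i\wl x_{i,k}$, feasibility of $\xstar$ gives $a_k^*\leq|\Sk|$, so $\min\{a_k,|\Sk|\}\geq\min\{a_k^*/2,|\Sk|\}=a_k^*/2$. Summing over $k$ reproduces $\tfrac{1}{2}\lpopt$.

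\textbf{Property 2 (per-customer budget).} This is the main obstacle and is the place where the minimum-$y$ tiebreak plays an essential role. First split the sum $\sum_{k'}\wlp x_{i,k'}$ into its part over $\{k':\xstar_{i,k'}\geq 1\}$ — which is at most $\sum_{k'}\wlp\xstar_{i,k'}\leq 1$ by the LP constraint — and its part over $\{k':\xstar_{i,k'}<1\}$, which equals $\sum_{\ell\in\Bo}\ell\,y_{i,\ell}$. To bound the latter, fix a level $\ell$ and exploit the greedy invariant: whenever $y_{i,\ell}$ is incremented at bucket $k$, every $i'\in I_k$ satisfies $y_{i',\ell}\geq y_{i,\ell}$ at that moment; in particular, if $y_{i,\ell}$ reaches a new maximum $Y$, then $|I_k|(Y-1)$ is at most the cumulative dispatched mass $T_\ell=\sum_{k:\wl=\ell}\lceil s_k\rceil$. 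The per-bucket LP bound $\sum_i\xstar_{i,k}\leq\qr|\Sk|/(2\wl)$ controls both $T_\ell$ (after summing over buckets at level $\ell$ and using the per-customer constraints $\sum_{k'}\wlp\xstar_{i,k'}\leq 1$) and ensures that $|I_k|$ is a large fraction of $m$ whenever $T_\ell$ is large, leading to a bound of the form $y_{i,\ell}\leq O(1/\ell)$. The geometric decay of the representative levels $\ell\in\{1,\tfrac12,\tfrac14,\dots\}$ then telescopes $\sum_\ell \ell\,y_{i,\ell}$ to a constant. The delicate step is handling buckets where $|I_k|$ is small, i.e.\ where many customers have $\xstar_{i,k}\geq 1$; here the LP constraint limits the number of such nearly-saturated buckets and the cumulative contribution of their $\lceil s_k\rceil\leq s_k+1\leq 1$ ceiling rounds to any fixed customer, both of which can be absorbed into the final constant $c$.
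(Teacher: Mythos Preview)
Your arguments for Properties 1 and 3 are correct and essentially identical to the paper's.

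For Property 2, however, there is a genuine gap. Even granting your claimed bound $y_{i,\ell}\le O(1/\ell)$, the conclusion does not follow: plugging this into $\sum_{\ell\in\Bo}\ell\,y_{i,\ell}$ yields $O(|\Bo|)$, not a constant, since each summand is $\Theta(1)$ and the number of dyadic scales $|\Bo|$ is unbounded (it depends on how small the $v_j$'s can be). Your sentence ``the geometric decay \ldots\ telescopes $\sum_\ell \ell\,y_{i,\ell}$ to a constant'' would be valid only if $y_{i,\ell}$ were uniformly $O(1)$, which is the opposite of what you derived. The handling of small $|I_k|$ is also not right as written: $\lceil s_k\rceil\le s_k+1\le 1$ would force $s_k=0$.

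The paper's route is both different and simpler. First, by discarding the suppliers with $q_j\ge 2^{m}$ (losing only an additive $m/2^m$ in $\opt$), one obtains $|\Bt|\le m$ and hence $\sum_{k\in\otbuck}\wl\le |\Bt|\sum_{\ell\in\Bo}\ell\le 2m$. Writing $m_\ell=\sum_{k:\wl=\ell}\lceil s_k\rceil$ and $W=\sum_{\ell}\ell\,m_\ell$, this gives the \emph{global} bound
\[
W=\sum_{k}\wl\lceil s_k\rceil\le \sum_{k}\wl s_k+\sum_{k}\wl\le m+2m=3m,
\]
the first term being at most $m$ by summing the per-customer LP constraints. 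The minimum-$y$ rule is then used only for the coarse averaging bound $y_{i,\ell}\le 1+m_\ell/m$, and one sums directly:
\[
\sum_{\ell\in\Bo}\ell\,y_{i,\ell}\ \le\ \sum_{\ell\in\Bo}\ell+\frac{W}{m}\ \le\ 2+3\ =\ 5.
\]
Together with the floored contribution (at most $1$ by the LP row constraint) this yields the required constant $c$. The idea you are missing is precisely this: one should not try to bound $y_{i,\ell}$ level-by-level and then sum; instead one must control the \emph{total} weighted token mass $\sum_\ell\ell\,m_\ell$ globally, which is what makes the sum converge independently of $|\Bo|$. This is also where the truncation of large-$q_j$ suppliers (so that $|\Bt|$ is bounded) enters, a step your argument never invokes.
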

   The proof of \Cref{lem:gone} can be found in Appendix \ref{appendix:rounding_menu_general}. The first condition in the lemma implies that the rounded solution $x$ has an objective value at least $\frac{1}{2}\lpopt$ and therefore, up to constant factors, is still an upper bound to the expected number of matches  achieved by the optimal menu. The second condition gives a constant factor upper bound  for the expression $\sum_{k' \in \otbuck}\wlp \xikp$. This inequality is used later to construct a menu set such that the mass of the menu assigned to each customer $i \in \otcust$ (i.e., $\sum_{j\in \mathrm{M}^{i}} \vj$) of each customer is upper bounded by such constant. This will be crucial to achieve the constant factor approximation to our problem (See proof of \Cref{thm:caseA}). The third and the final property is useful in proving the guarantees achieved by the menu construction algorithm presented next (See proof of \Cref{lem:gmenu} for further details). 
   
   We now focus on constructing a menu set using the integral solution returned by the rounding algorithm. In particular, the $\xik$ variables are sufficient to upper bound the value of the expected number of matches for the optimal menu set. We still need to construct the actual menu of suppliers that are going to be shown to each customer. Next, we provide an algorithm to construct the menu for each customer and prove that the menu set constructed by our algorithm achieves a constant approximation to the optimal menu set. 

\begin{algorithm}[H]\label{alg:menu}
	\SetAlgoNoLine
	\KwIn{The assignment $x$ returned by the rounding Algorithm \ref{alg:gone}.}
	\KwOut{A menu set $\{\Mi{x}\}_{i\in \otcust}$ that it satisfies the properties in \Cref{lem:gmenu}.}
	 Set $\ckj=0$ for all buckets $k$ and for all $j \in \Sk$.
	
	\For{$k \in \otbuck$ 
		}{
		\For{$i = 1\dots \cust$
		}{
	Pick $\xik$ suppliers from bucket $k$ with minimum $\ckj$ values (breaking ties arbitrarily), increment their $\ckj$ values by 1, and add these suppliers to menu $\Mi{x}$.
		}
		}
	\caption{Menu construction algorithm}
	\label{alg:gmenu}
\end{algorithm}

In the above algorithm, for each $j \in \Sk$, the final value of $\ckj$ is equal to the number of customers $i$ that satisfy $j \in \Mi{x}$. That is, $\ckj$ tracks to many consumers $j$ in bucket $k$ is shown. Since all customers are symmetric, we show next that it suffices to use the values of the  $\ckj$s to calculate the expected number of matches. 

Our next lemma provides some crucial properties of the $\ckj$ values returned by the Algorithm~\ref{alg:menu}, the menu construction algorithm. 

\begin{restatable}{lemma}{menuconstruction}\label{lem:gmenu}
The menu $\Mi{x}$ for all $i \in \otcust$ returned by Algorithm \ref{alg:gmenu} 
satisfies the following properties  
for each bucket $k \in \otbuck$ 
\begin{enumerate}
    \item $\sum_{j \in \Sk}\ckj= \sum_{i \in \otcust}\xik$.
    \item For all $j\in \Sk$, $\ckj \leq 2+\frac{\qr}{2\wl}$.
\end{enumerate}
\end{restatable}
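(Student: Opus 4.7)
The plan is to prove the two properties separately; Property 1 is essentially bookkeeping, while Property 2 requires a load-balancing invariant combined with the third conclusion of Lemma~\ref{lem:gone}.

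For Property 1, I would use a direct counting argument. Each time a supplier $j \in \Sk$ is added to some menu $\Mi{x}$ in the innermost loop of Algorithm~\ref{alg:gmenu}, its counter $\ckj$ is incremented by exactly $1$, starting from $0$. For a fixed customer $i$ and bucket $k$, the loop picks exactly $\xik$ suppliers from $\Sk$ (which is feasible since the rounding guarantees $\xik \leq |\Sk|$) and increments each of their $\ckj$ values by $1$. Summing over all customers $i \in \otcust$, the total number of increments to $\{\ckj\}_{j \in \Sk}$ is $\sum_{i \in \otcust} \xik$, which is exactly $\sum_{j\in\Sk}\ckj$.

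For Property 2, the key idea is to establish the invariant that for each bucket $k$, throughout the execution of Algorithm~\ref{alg:gmenu},
\[
\max_{j \in \Sk} \ckj - \min_{j \in \Sk} \ckj \leq 1.
\]
I would prove this by induction on the iterations of the outer loop over $i$. The base case is trivial since all counters start at $0$. For the inductive step, assume the gap is at most $1$ at the start of the iteration for customer $i$, let $m = \min_j \ckj$, and consider two cases. If $\xik$ does not exceed the number of suppliers at value $m$, we increment $\xik$ counters from $m$ to $m+1$, leaving the rest at $m$ or $m+1$, so the gap stays at most $1$. Otherwise, we first exhaust all suppliers at value $m$ (turning them into $m+1$) and then increment the remaining $\xik - |\{j : \ckj = m\}|$ counters from $m+1$ to $m+2$, while the unpicked counters remain at $m+1$; here the use of $\xik \leq |\Sk|$ is essential to rule out a second overflow to $m+3$, and the resulting gap is again at most $1$.

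Given the invariant and Property 1, for every $j \in \Sk$,
\[
\ckj \leq \min_{j' \in \Sk} c_{k,j'} + 1 \leq \frac{1}{|\Sk|}\sum_{j' \in \Sk} c_{k,j'} + 1 = \frac{1}{|\Sk|}\sum_{i \in \otcust} \xik + 1.
\]
Invoking the third conclusion of Lemma~\ref{lem:gone}, namely $\frac{2\wl}{\qr}\sum_{i \in \otcust}\xik \leq |\Sk| + \frac{2\wl}{\qr}$, we get $\sum_{i \in \otcust}\xik \leq \frac{\qr|\Sk|}{2\wl} + 1$. Substituting gives $\ckj \leq \frac{\qr}{2\wl} + \frac{1}{|\Sk|} + 1 \leq 2 + \frac{\qr}{2\wl}$, as desired.

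The main obstacle is getting the case analysis in the inductive step of the invariant right, particularly verifying that the LP-imposed bound $\xik \leq |\Sk|$ is tight enough to prevent the gap from growing to $2$ within a single iteration; everything else is routine counting and algebra.
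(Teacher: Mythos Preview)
Your proposal is correct and follows essentially the same approach as the paper: Property~1 by direct counting, and Property~2 via the load-balancing bound $\ckj \leq \frac{1}{|\Sk|}\sum_{i}\xik + 1$ combined with the third conclusion of Lemma~\ref{lem:gone}. The paper's proof is terser, asserting $\ckj \leq \frac{\sum_{i}\xik}{|\Sk|}+1$ without justification, whereas you explicitly derive it from the $\max-\min\leq 1$ invariant maintained by the pick-the-minimum rule; both then finish with the same algebra.
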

The proof of \Cref{lem:gmenu} can be found in Appendix \ref{appendix:rounding_menu_general}. As the lemma establishes, our algorithm has the intuitive property that the number of menus in which a supplier appears is upper bounded by a constant that depends both on her (representative) value and the (representative) value of her outside option. In particular, suppliers with lower values can be shown more than those with the same outside option and higher values; this is because suppliers with lower values are less likely to be selected by consumers. On the other hand, comparing suppliers with the same value, those with higher outside options can be shown more often; this implies that, in expectation, such a supplier will be selected by more consumers, which is helpful increases the chances that she will choose one of them to match in the platform. 

 We use these properties to establish the following theorem, which is our main result for this section:

\thmcaseone*
 \begin{proof}
\newcommand{\mukj}{\mu_{k,j}}

  
  Let $x$ be the assignment returned by Algorithm \ref{alg:gone}, and let $\{\Mi{x}\}_{i\in \otcust}$ be the menus generated by Algorithm \ref{alg:gmenu}. Given $\{\Mi{x}\}_{i\in \otcust}$, let $\Xk$ be the random variable counting the number of customers that choose a supplier from bucket $k$, and let $\Xkj$ be the random variable counting the number of customer that choose supplier $j$ in bucket $k$. 
  Let $\Yk$ be the random variable counting the number of suppliers from bucket $k$ that are matched. Also, let $\Ykj{j}$ be the random variable representing whether supplier $j$ in bucket $k$ is matched. Clearly, $\Yk=\sum_{j \in \Sk}\Ykj{j}$ and by taking expectation on both sides, we have, 
  
    \begin{equation}\label{eq:dist}
    \expt{\Yk}=\sum_{j \in \Sk}\expt{\Ykj{j}}.
    \end{equation}
    Now fix a bucket $k \in \otbuck$ and a supplier $j \in \Sk$, the probability of $j$ matching given $\Xkj$ is:
    \begin{equation}\label{eq:prob11}
        \begin{split}
    \prob{\Ykj{j}=1|\Xkj} = \frac{\Xkj}{\qj+\Xkj} \geq \frac{1}{2} \frac{\Xkj}{\qr+\Xkj},
        \end{split}
    \end{equation}
where  the last inequality follows since $\qj \leq 2\qr$.
    
  Define $\mukj\defeq\expt{\Xkj}$ and $\massi\defeq \sum_{j \in \Mi{x}}\vj$ for all $i \in \otcust$. Then, there exists a constant $c$ such that 
\begin{equation}\label{eq:massiup}
    \massi=\sum_{k\in \otbuck} \sum_{\{j \in \Sk: j\in \Mi{x}\}}\vj \leq 2 \sum_{k\in \otbuck}\sum_{j \in \Mi{x}}\wl =2 \sum_{k\in \otbuck}\wl\sum_{j \in \Mi{x}}1 = 2 \sum_{k\in \otbuck}\wl \xik \leq \cm,
\end{equation}
where the last inequality follows from Lemma \ref{lem:gone}. 
We know that, $$\expt{\Xkj} = \sum_{\{i\in \otcust:~j \in \Mi{x}\}}\frac{\vj}{1+\massi}~.$$ 
    We can upper and lower bound the above expression as follows:
\begin{equation}\label{eq:mukj}
    \frac{\wl\ckj}{\cmpo} \leq \mukj=\expt{\Xkj} \leq 2\wl\ckj,
\end{equation}
where we used $\wl \leq \vj \leq 2\wl$, $\ckj=\sum_{\{i\in \otcust:~j \in \Mi{x}\}}1$ (as in defined  Algorithm~\ref{alg:gmenu}) and $\massi \leq \cm$ for all $i \in \otcust$ (\Cref{eq:massiup}).
    
For all buckets $k \in \otbuck$, we divide our analysis into two separate cases, depending on the type of supplier $j \in \Sk$.\\

\noindent {\bf Case 1}: Supplier $j \in \Sk$ satisfies $\wl\ckj>1$. In this case, 
\begin{equation}\label{eq:prob1}
    \prob{\Xkj \leq \frac{\wl \ckj}{2(\cmpo)}}\leq \prob{\Xkj\leq \left(1-\frac{1}{2}\right)\mukj}  \leq \expo{-\frac{\mukj}{8}}  \leq \expo{-\frac{1}{8(\cmpo)}} \leq 1-\frac{e-1}{8e(\cmpo)}.
\end{equation}
The first inequality follows by Equation \eqref{eq:mukj}, and the second inequality follows by the Chernoff bound. The third inequality follows by combining Equation \eqref{eq:mukj} with $\wl\ckj > 1$, and the final inequality follows by the known fact that, for any $x \in [0,1]$ we have $e^{-x} \leq 1-\frac{e-1}{e}x$ (see \Cref{fact:up} in Appendix~\ref{app:tools}). 

We can now calculate the expected value of $\Ykj{j}$.
    \begin{equation}\label{case1}
        \begin{split}
    \expt{\Ykj{j}}&=\prob{\Ykj{j}=1}=\sum_{a}\prob{\Ykj{j}=1|\Xkj = a}\prob{\Xkj=a } \geq \frac{1}{2} \sum_{a}\frac{a}{\qr+a}\prob{\Xkj=a}\\
    & \geq\frac{1}{2}\frac{1}{2(\cmpo)} \frac{\wl \ckj}{\qr+\wl\ckj}\prob{\Xkj\geq \frac{\wl \ckj}{4(\cmpo)} } \geq \frac{1}{2(\cmpo)} \frac{\wl \ckj}{4\qr} \frac{e-1}{8e(\cmpo)}~.
        \end{split}
    \end{equation}
  The first inequality follows from \eqref{eq:prob11}. For the second inequality, we consider only those $a$ that satisfy $a \geq \frac{1}{2(\cmpo)} \wl \ckj$, and for all such $a$ we have $\frac{a}{\qr+a} \geq \frac{1}{2(\cmpo)} \frac{\wl \ckj}{\qr+\wl\ckj}$. In the final inequality we use Equation \eqref{eq:prob1} to lower bound the probability term, and use the fact that $\ckj \leq 2+\frac{\qr}{2\wl}$ (Lemma~\ref{lem:gmenu}), which further implies that $\wl \ckj \leq 2\wl+\frac{\qr}{2} \leq 3\qr$.\\

\noindent {\bf Case 2}: Supplier $j \in \Sk$ satisfies $\wl\ckj\leq 1$. In this case,
\begin{equation}
\label{eq:x_k}
\begin{split}
    \prob{\Xkj =0}&=\prod_{\{i\in \otcust:~j \in \Mi{x}\}}\prob{i \text{ does not pick } j}=\prod_{\{i\in \otcust:~j \in \Mi{x}\}}\left(1-\frac{\vj}{1+\alpha_{i}}\right) \\
    &\leq  \left(1-\frac{\wl}{\cmpo}\right)^{\ckj}\leq \expo{-\frac{\wl\ckj}{\cmpo}}\leq 1-\frac{e-1}{e}\frac{\wl\ckj}{\cmpo}.
    \end{split}    
\end{equation}
The second equality follows from the fact that, for all $i$ such that $j \in \Mi{x}$, the probability that $i$ picks $j \in \Sk$ is equal to $\frac{\vj}{1+\massi}$.  Further, we have that $\frac{\vj}{1+\massi}$ is greater than $\wl/(\cmpo)$ because $\massi \leq \cm$ (\Cref{eq:massiup}). Further, there are $\ckj$ customers $i$ such that $j \in \Mi{x}$, which gives us the first inequality. The final inequality follows from \Cref{fact:lb} and \Cref{fact:up} respectively in Appendix~\ref{app:tools}. 
We can now bound $\expt{\Ykj{j}}$ as follows 
\begin{equation}\label{eq:case2}
        \begin{split}
    \expt{\Ykj{j}}=\prob{\Ykj{j}=1}&=\sum_{a}\prob{\Ykj{j}=1|\Xkj = a}\prob{\Xkj=a } \geq \frac{1}{2} \sum_{a}\frac{a}{\qr+a}\prob{\Xkj=a}\\
    &\geq \frac{1}{2}\frac{1}{\qr+1}\prob{\Xkj\geq 1} \geq \frac{1}{4\qr}\frac{e-1}{e}\frac{\wl\ckj}{\cmpo}.
        \end{split}
    \end{equation}
    The first inequality follows from Equation \eqref{eq:prob11} and the second inequality follows because $\frac{a}{\qr+a} \geq \frac{1}{\qr+1}$ for all $a \geq 1$. The final inequality follows from \Cref{eq:x_k}.
    
\vspace{0.5cm}

    Now combining the two cases together for all buckets $k \in \otbuck$, we have,
        \begin{equation}\label{eq:dist}
        \begin{split}
            \expt{\Yk}=\sum_{j \in \Sk}\expt{\Ykj{j}}\geq \sum_{j \in \Sk}C \frac{\wl\ckj}{\qr}= \sum_{i}C \frac{\wl\xik}{{\qr}}
        \end{split}
    \end{equation}
for constant $C = \frac{e-1}{64e(\cmpo)^2}$. The inequality follows by combining Equations \eqref{eq:case2} and \eqref{case1} together and the final equality follows from  Lemma \ref{lem:gmenu}. Now, given $\Mi{x}$ for all $i \in \otcust$, let $\textbf{M}$ be the random variable counting the number of matched suppliers. We have $\textbf{M}=\sum_{k\in \otbuck}\Yk$, and taking expectation on both sides and further lower bounding this quantity we get,
    \begin{align}
        \expt{\textbf{M}}=\sum_{k}\expt{\Yk} \geq C\sum_{k}\sum_{i} \frac{\wl\xik}{\qr} \geq \frac{C}{2} \lpoptg
    \end{align}
    and we have a $c^{\prime \prime} =\frac{C}{2}$ approximation to the optimum menu via Lemma \ref{lem:lpoptg}.
\end{proof}

\section{Proof overview of the high value supplier's case}
\label{subsec:largeSupplier}

In this section, we focus on the case where all suppliers $j\in \otsupp$ have scores $\vj \geq 1$ and provide an efficient algorithm that produces menu set which is a constant factor approximation to the expected number of matches  achieved by the optimal menu.

Similar to \Cref{subsec:smallSupplierLargeOutside} we start by providing an upper bound to the expected number of matches  achieved by the optimal menu. Our upper bound,  described in the following lemma, is different from the one in the previous section. It is worth highlighting, howver, that the upper bound does not make any assumptions on the $v_j$s ans thus it is also valid for the low-value suppliers' setting described before. 

\begin{restatable}{lemma}{lemcasetwoup}\label{lem:casetwoup}
Let $\cust$ and $\supp$ be the number of customers and suppliers, respectively. 
Then, the expected number of matches achieved by the optimal menu ($\opt$) is upper bounded by,
\begin{equation}\label{eq:case1ub}
    \max_{\{x \in \Z_{\geq 0}^{\supp}:~\sum_{j=1}^{\supp}\xj=\cust\}}\sum_{j=1}^{\supp}\frac{\xj}{\xj+\qj}
\end{equation}
\end{restatable}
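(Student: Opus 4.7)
The plan is to derive the upper bound by a pointwise-in-realization argument: for any fixed menu set, the expected number of matches decomposes over suppliers as a sum of expectations of the function $x/(x+q_j)$ applied to the (random) number of customers who selected supplier $j$; then in every realization, the vector of those counts is a feasible point of the optimization problem, and monotonicity of $y/(y+q)$ lets us push to the boundary $\sum y_j = \cust$.

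More concretely, fix any feasible menu set $\menuset$, and for each supplier $j \in \otsupp$ let $\Xj$ be the random variable counting the customers who selected $j$ in the first stage. Given $\Xj$, supplier $j$ matches with probability $\Xj/(\Xj + \qj)$ by the suppliers' MNL choice model (with the convention that this equals $0$ when $\Xj = 0$). Hence by linearity of expectation,
\begin{equation*}
\expt{\textbf{M}} \;=\; \sum_{j \in \otsupp} \expt{\frac{\Xj}{\Xj + \qj}}.
\end{equation*}
The next step is to observe the pathwise constraint $\sum_{j \in \otsupp} \Xj \leq \cust$ (since each customer picks at most one supplier), so in every realization the vector $(\Xj)_{j \in \otsupp}$ lies in $\{y \in \Z_{\geq 0}^{\supp}:~\sum_j y_j \leq \cust\}$.

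The key observation I will use is that $y \mapsto y/(y+\qj)$ is non-decreasing in $y \geq 0$ for every $\qj \geq 0$. This means that the maximum of $\sum_j y_j/(y_j+\qj)$ over integer $y \geq 0$ with $\sum_j y_j \leq \cust$ is attained at the boundary $\sum_j y_j = \cust$ (we can always spend any unused budget on incrementing some $y_j$ without decreasing the objective). Therefore, pointwise in any realization,
\begin{equation*}
\sum_{j \in \otsupp} \frac{\Xj}{\Xj + \qj} \;\leq\; \max_{\{y \in \Z_{\geq 0}^{\supp}:~\sum_{j} y_j = \cust\}} \sum_{j \in \otsupp} \frac{y_j}{y_j + \qj}.
\end{equation*}

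Taking expectations of both sides (the right-hand side is deterministic) yields the stated upper bound for the menu set $\menuset$; since $\menuset$ was arbitrary, it also bounds $\opt$. I do not expect any serious obstacle here: the whole argument rests on (i) the suppliers' MNL response and linearity of expectation to write $\opt$ as $\sum_j \E[\Xj/(\Xj+\qj)]$, (ii) the budget constraint $\sum_j \Xj \leq \cust$ from the customers choosing at most one supplier, and (iii) monotonicity of $y/(y+\qj)$ to reach the $\sum y_j = \cust$ boundary. Notably, this argument nowhere uses $\vj \geq 1$, which is consistent with the paper's remark that the bound is also valid in the low-value regime.
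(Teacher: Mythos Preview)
Your proposal is correct and follows essentially the same argument as the paper: decompose the expected number of matches as $\sum_j \expt{\Xj/(\Xj+\qj)}$ via the suppliers' MNL response, use the almost-sure budget constraint $\sum_j \Xj \leq \cust$, and invoke monotonicity of $y \mapsto y/(y+\qj)$ to pass to the boundary $\sum_j y_j = \cust$. The only cosmetic difference is that you fix an arbitrary menu set and then quantify, whereas the paper works directly with the optimal menu, but the substance is identical.
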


The proof of \Cref{lem:casetwoup} can be found in \Cref{app:casetwo}. 
Intuitively, imagine an alternative setting where the platform can present one option to each customer and she will choose this option with probability one (equivalently, suppose that customers do not have outside options). In that setting, the optimal strategy for the platform is to present each customer with a single supplier; this way, the platform avoids the risk of coordination failures, leading to many customers choosing the same supplier. Then, optimization problem in \eqref{eq:case1ub} essentially provides the optimal solution to the platform's problem in this alternate setting. Moreover, as the platform faces no uncertainty on the customer side in the alternate setting, its solution provides an upper bound to our original problem.


Note that the above problem is a discrete problem. 
The following lemma shows that there exist an efficient algorithm to solve the optimization problem \eqref{eq:case1ub} up to a multiplicative $1/2$ approximation.

\begin{restatable}{lemma}{lemeff}\label{lem:eff}
There exists an efficient algorithm whose output is an integral vector $y\in \Z_{\geq 0}^{\supp}$  that satisfies $\sum_{j=1}^{\supp}y_j=\cust$ and 
\begin{equation}\label{eq:case2ub}
\sum_{j=1}^{\supp}\frac{y_j}{y_j+\qj}\geq \frac{1}{2}    \max_{\{x \in \Z_{\geq 0}^{\supp}:~\sum_{j=1}^{\supp}\xj=\cust\}}\sum_{j=1}^{\supp}\frac{\xj}{\xj+\qj}.
\end{equation}
\end{restatable}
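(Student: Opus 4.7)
The plan is to exploit the separable concave structure of the objective in \eqref{eq:case1ub}. First, I would set $f_j(x) := x/(x+q_j)$ and compute the discrete marginal $\Delta_j(t) := f_j(t) - f_j(t-1) = q_j / [(t-1+q_j)(t+q_j)]$; a direct calculation shows that $\Delta_j(t) > 0$ and $\Delta_j(t)$ is non-increasing in $t \in \Z_{\geq 1}$. Equivalently, each $f_j$ is non-decreasing and discretely concave on $\Z_{\geq 0}$.

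The algorithm I would propose is the natural greedy: initialize $y_j = 0$ for all $j \in \otsupp$, and for $m$ rounds pick $j^\star \in \arg\max_j \Delta_j(y_j+1)$ (ties broken arbitrarily) and increment $y_{j^\star}$ by one. Maintaining a max-heap keyed by each $j$'s current next marginal, this runs in $O((n+m)\log n)$ time, which is polynomial in $n$ and $m$. Clearly $\sum_j y_j = m$, so feasibility is immediate.

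For the approximation guarantee, I would view any feasible $x$ as the subset $S_x := \{(j,t) : 1 \leq t \leq x_j\}$ of the pool $P := \{(j,t) : j \in \otsupp,\ 1 \leq t \leq m\}$, so that $\sum_j f_j(x_j) = \sum_{(j,t) \in S_x} \Delta_j(t)$, and feasibility of $x$ amounts to $|S_x| = m$ together with the precedence constraint $(j,t) \in S_x \Rightarrow (j,t') \in S_x$ for $t' < t$. Because $\Delta_j(\cdot)$ is non-increasing in $t$, the subset of $P$ consisting of the $m$ largest $\Delta$-values automatically satisfies precedence (any $(j,t)$ in the top $m$ forces $(j,t')$ with $t' < t$ to be there too, under consistent tie-breaking). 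The greedy outputs precisely this top-$m$ selection, so it attains the \emph{exact} optimum of \eqref{eq:case1ub} and in particular the $1/2$-approximation claimed. The only mild technical point is fixing a tie-breaking rule (e.g., prefer smaller $t$ within each $j$) to make the top-$m$ selection well-defined and downward-closed; alternatively, the weaker $(1-1/e)$-guarantee from the classical greedy for monotone submodular maximization over a uniform matroid (applied to $S \mapsto \sum_j f_j(|S \cap P_j|)$ with $P_j := \{(j,t)\}_t$) gives the $1/2$ bound directly, with no technical obstacle remaining.
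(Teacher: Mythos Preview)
Your proposal is correct and in fact proves a stronger statement than the lemma requires: the greedy procedure you describe solves the integer program \eqref{eq:case1ub} \emph{exactly}, not merely to within a factor of $1/2$. The key observation, that the marginals $\Delta_j(t)=q_j/[(t-1+q_j)(t+q_j)]$ are nonnegative and nonincreasing in $t$, is right (the edge case $q_j=0$ needs the convention $f_j(0)=0$, but then $\Delta_j(1)=1$ and $\Delta_j(t)=0$ for $t\ge 2$, so monotonicity still holds). Your reduction of the constraint $\sum_j x_j=m$ to selecting a downward-closed size-$m$ subset of the item pool $P$, together with the fact that the top-$m$ items by $\Delta$-value are automatically downward-closed, cleanly yields exact optimality of greedy.

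This is a genuinely different route from the paper. The paper sorts the suppliers by $q_j$, argues that the support of an optimum is a prefix, and for each candidate prefix size $i$ solves the \emph{continuous} concave program $\max\{\sum_{j\le i} x_j/(x_j+q_j): x_j\ge 1,\ \sum_{j\le i} x_j=m\}$ with a black-box convex solver, then rounds $x_j\mapsto\lfloor x_j\rfloor$ (putting the slack on $x_1$) and uses $x_j'\ge 1\Rightarrow \lfloor x_j'\rfloor\ge x_j'/2$ to lose at most a factor of $2$. Your approach avoids the convex solver entirely, is combinatorial with explicit running time $O((n+m)\log n)$, and yields the exact optimum rather than a $1/2$-approximation; the paper's route is more generic (relax-and-round) but weaker and less self-contained here.
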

The proof can be found in \Cref{app:casetwo}. 
In what follows, using the solution $y$ identified in the above lemma to construct a menus set  that matches the upper bound in \eqref{eq:case1ub} up to a constant factor. We summarize this result in the following theorem. 

\renewcommand{\Xj}{\textbf{X}_{j}}
\renewcommand{\Yj}{\textbf{Y}_{j}}

\thmcasetwo*
\begin{proof}
\renewcommand{\xstar}{y^{*}}
Let $\xstar$ be a $\frac{1}{2}$-approximate solution to the optimization problem in \eqref{eq:case1ub}. Recall that, by Lemma \ref{lem:eff}, such a solution exists and can be computed efficiently. Now, by Lemma~\ref{lem:casetwoup}, we have that 
$$\frac{1}{2}\opt \leq \sum_{j=1}^{\supp}\frac{\xstar_{j}}{\xstar_{j}+\qj}.$$

Next, we show a construction of menus for customers that approximates this upper bound up to a constant factor.

Consider the following construction of menus $\textbf{M}_{i}$ for all $i \in \otcust$. Set $\textbf{M}_{i}=\{j \}$ for some $\xstar_{j}$ number of customers. Since $\sum_{j=1}^{\supp}\xstar_{j}=\cust$, this is a valid menu, i.e., we can find a non-overlapping partition of customers such that each partition is shown the same supplier $j$ (and just $j$).

Now lets calculate the expected number of matches given these menus. Given menus $\{\textbf{M}_{i} \}_{i \in \otcust}$, let $\Xj$ for all $j \in \otsupp$ be a random variable counting the number of customers that selected  supplier $j$. Further, let $\Yj$ be the random variable denoting whether  supplier $j$ is matched  or not. Then,
\begin{equation}\label{eq:algexpt}
    \begin{split}
\expt{\Yj}& =\expt{\expt{\Yj|\Xj}}=\expt{\frac{\Xj}{\Xj+\qj}}=\sum_{a =0}^{\xstar_j}\frac{a}{a+\qj}\prob{\Xj=a}\\
&\geq  \sum_{a\geq \xstar_{j}/4}\frac{1}{4}\frac{\xstar_{j}}{\xstar_{j}+\qj}\prob{\Xj=a} \geq \frac{1}{4}\frac{\xstar_{j}}{\xstar_{j}+\qj}\prob{\Xj\geq \xstar_{j}/4}.
    \end{split}
\end{equation}

To lower bound the quantity $\prob{\Xj\geq \xstar_{j}/4}$, first observe that
\begin{equation}\label{eq:exptcase1}
\expt{\Xj}=\sum_{\{i \in \otcust:~j \in \textbf{M}_{i}\}}\frac{\vj}{\vj+1}\geq \frac{\xstar_{j}}{2},    
\end{equation}
where in the inequality above we used the fact that $\vj \geq 1$ and there are $\xstar_{j}$ number of customers for which $j \in \textbf{M}_{i}$. 

An upper bound on the quantity $\prob{\Xj \leq \xstar_{j}/4}$ when $\xstar_{j}\geq 1$ can be obtained as follows:
\begin{equation}\label{eq:case1prob}
    \begin{split}
        \prob{\Xj \leq \xstar_{j}/4} \leq \prob{\Xj \leq (1-1/2)\expt{\Xj}} \leq \expo{\frac{-\expt{\Xj}}{12}} \leq \expo{-1/24}
    \end{split}
\end{equation}
For the first inequality, note that by \Cref{eq:exptcase1}, we have that $\xstar_{j} \leq 2 \expt{\Xj}$. The second inequality follows from the standard Chernoff bound, and the third one from the fact that $\expt{\Xj} \geq 1/2$ (by  \Cref{eq:exptcase1} combined with $\xstar_{j} \geq 1$). 

Combining Equations \eqref{eq:case1prob} and \eqref{eq:algexpt}, we get
$$\expt{\Yj} \geq \frac{1}{4}\frac{\xstar_{j}}{\xstar_{j}+\qj}\prob{\Xj\geq \xstar_{j}/4} \geq \frac{1}{4}\frac{\xstar_{j}}{\xstar_{j}+\qj}(1-\expo{-1/24})$$
Therefore, the expected number of matches produced by our algorithm is lower bounded by,
$$\sum_{j=1}^{\supp}\expt{\Yj} \geq \alpha_2\sum_{j=1}^{\supp}\frac{\xstar_{j}}{\xstar_{j}+\qj} \geq \alpha_2 *\opt$$ for constant $\alpha_2=1/8 (1-\expo{-1/24})$. Hence, we have a constant factor approximation for this case.
\end{proof}

\section{Simulations}
\label{subsec:simulations}

In the previous sections we focused on constructing an efficient algorithm with  constant-factor approximation guarantees. 
To gain a better understanding of the performance of the proposed algorithm under perhaps more natural conditions, we provide a series of simulations that compare the expected number of matches achieved by our algorithm to those achieved by an upper bound. 

To test the performance of our algorithm we report the results for markets with $100$ suppliers and vary the number of customers $m$, and the values of the suppliers and their outside options. We focus on the low-value supplier regime, where $v_j \leq 1$ and $q_j \geq 1$ for all suppliers $j \in \otsupp$.
Specifically, we generate the value of the suppliers as $v_j = 1/(1+z_j)$  where the $z_j$'s are i.i.d.~draws from an exponential distribution with parameter $\lambda_v$. Note that, as $\lambda_v$ increases, the value of the suppliers becomes smaller in expectation. The value of the outside options are given by $q_j = 1+w_j$ where the $w_j$'s are i.i.d.~draws from an exponential distribution with parameter $\lambda_o$.\footnote{To test the robustness of the reported numbers, we ran the simulations with varying number of suppliers and using different distributions to determine the values of  the $v_j$s and $q_j$s. Results are qualitatively the same, and are omitted for the sake of brevity. }

For each combination of $(m,\lambda_v,\lambda_o)$, we  generate 25 instances, where each instance is constructed by drawing a set of $v_j$s and $q_j$s according to the given distributions. For each instance, we compute the upper bound as well as the menu set prescribed by our algorithm. To calculate the expected number of matches resulting from our algorithm, for each instance we run 30 simulations of the choices and report the average number of matches obtained. Results are summarized in Table~\ref{tab:simul_results}.

A major challenge in testing the performance of our algorithm is the difficulty in producing a tight upper bound in for the low-value supplier setting. For simplicity, we will use the \textit{linear relaxation} of the upper bound provided in Lemma~\ref{lem:casetwoup}. However, as discussed in Section~\ref{subsec:largeSupplier}, the aforementioned bound works as an upper bound to the problem even in a setting where costumers do not have outside options. Recall that, in these simulations, we are assuming that costumers have an outside option whose value has been normalized to $1$, while the value of each supplier is at most one. Therefore, we do expect to see some loss associated to using this upper bound.

Surprisingly, we observe that our algorithm is able to perform well even against this upper bound. In particular, we observe that at least 1/3 of the profit is consistently achieved throughout the test instances. 

{
\begin{table}[]
\begin{center}
\small{
\begin{tabular}{ | c || c | c || c | c || c | c | c | }  \hline
m	&	$\lambda_v$	&	$\lambda_o$	&	Avg[ALG]	&	Avg[UB]	&	Mean	&	Min	&	Median	\\ \hline \hline
\multirow{4}{*}{50}	&	1	&	1	&	10.63	&	23.50	&	0.45	&	0.43	&	0.45	\\ \cline{2-8}
	&	1	&	10	&	5.76	&	12.17	&	0.47	&	0.42	&	0.47	\\ \cline{2-8}
	&	10	&	1	&	9.67	&	23.78	&	0.41	&	0.38	&	0.41	\\\cline{2-8}
	&	10	&	10	&	5.44	&	12.47	&	0.44	&	0.40	&	0.44	\\ \hline \hline
\multirow{4}{*}{75}	&	1	&	1	&	13.73	&	30.88	&	0.44	&	0.42	&	0.44	\\ \cline{2-8}
	&	1	&	10	&	7.50	&	15.91	&	0.47	&	0.44	&	0.47	\\ \cline{2-8}
	&	10	&	1	&	12.27	&	30.67	&	0.40	&	0.37	&	0.40	\\ \cline{2-8}
	&	10	&	10	&	7.00	&	15.64	&	0.45	&	0.39	&	0.45	\\ \hline \hline
\multirow{4}{*}{100}	&	1	&	1	&	16.04	&	36.74	&	0.44	&	0.41	&	0.44	\\ \cline{2-8}
	&	1	&	10	&	8.83	&	18.97	&	0.47	&	0.43	&	0.47	\\ \cline{2-8}
	&	10	&	1	&	14.00	&	36.63	&	0.38	&	0.35	&	0.38	\\ \cline{2-8}
 &	10	&	10	&	8.21	&	18.87	&	0.44	&	0.40	&	0.43	\\ \hline \hline
\multirow{4}{*}{125} 	&	1	&	1	&	17.22	&	41.40	&	0.42	&	0.38	&	0.42	\\ \cline{2-8}
	&	1	&	10	&	9.78	&	20.77	&	0.47	&	0.42	&	0.48	\\ \cline{2-8}
	&	10	&	1	&	15.57	&	41.37	&	0.38	&	0.35	&	0.38	\\ \cline{2-8}
	&	10	&	10	&	9.55	&	21.29	&	0.45	&	0.43	&	0.45	\\ \hline \hline
\multirow{4}{*}{150}	&	1	&	1	&	18.48	&	45.98	&	0.40	&	0.38	&	0.40	\\ \hline
	&	1	&	10	&	10.90	&	23.38	&	0.47	&	0.42	&	0.47	\\ \cline{2-8}
	&	10	&	1	&	16.75	&	45.72	&	0.37	&	0.33	&	0.37	\\ \cline{2-8}
	&	10	&	10	&	10.17	&	23.30	&	0.44	&	0.41	&	0.44	\\ \hline \hline
\multirow{4}{*}{200}	&	1	&	1	&	20.49	&	52.36	&	0.39	&	0.37	&	0.39	\\ \cline{2-8}
	&	1	&	10	&	12.49	&	27.29	&	0.46	&	0.41	&	0.45	\\ \cline{2-8}
	&	10	&	1	&	18.83	&	52.71	&	0.36	&	0.34	&	0.36	\\ \cline{2-8}
	&	10	&	10	&	11.93	&	27.44	&	0.44	&	0.37	&	0.44	\\ \hline

\end{tabular}
}
\end{center}\caption{
Simulation results for markets with $100$ suppliers, and $m$ customers. For each combination of $(m,\lambda_v,\lambda_o)$, we generated 25 instances, where in each instance the value of the suppliers is $v_j = 1/(1+z_j)$  where the $z_j$'s are i.i.d.~draws from an exponential distribution with parameter $\lambda_v$; the value of the outside options are given by $q_j = 1+w_j$ where the $w_j$'s are i.i.d.~draws from an exponential distribution with parameter $\lambda_o$.  $Avg[ALG]$ reports the average number of matches achieved by our algorithm across instances and simulations; $Avg[UB]$ reports the average of the upper bound across instances. For each instance, we calculate $Avg[\textup{matches in algorithm}]/UB$; Mean, Min, and Median report the corresponding quantities across the 25 instances.}
    \label{tab:simul_results}
\end{table}
}

\section{Conclusion and open questions}
\label{sec:conclusion}

Platforms for two-sided markets face the challenging task of  providing their users with match recommendations. Taking an algorithmic approach to this challenge, this paper introduced a stylized two-stage model for assortment planning in two-sided matching markets. Agents in the model have public scores as well as an outside option score and agents' choices from given menus follow a distribution proportionate to the relevant scores.
The key contribution is an algorithm that construct menus that are shown to customers that provides a constant approximation algorithm to the optimal expected number of matches.


Several open questions follow directly from this work. The first  question is to improve the upper bound and further provide a non-trivial lower bound. Second, one could try to formally address the case in which   customers are indeed different. We believe that the same techniques will provide a constant-factor approximation algorithm if customers have a publicly known score. Third, is to allow agents to have different types, thus having agents disagree over public scores. We believe that, as long as there is a constant fraction of each type of agents, there is an efficient algorithm that yields a constant-factor  approximation.

Other directions we find intriguing  are the following. First, the platform could determine not only the menus but also  which set of agents should choose first, a decision that has been proved to greatly impact the outcome in other settings (see, e.g.,  \citet{kanoria2017facilitating}). Second, the platform could allow for a constant number of substitutions, thus allowing a ``rejected" agent to choose another partner from her menu; see \citet{liu2019competing} who take a multi-arm bandit approach to a similar problem. Finally, it would be interesting to extend this setting to allow for  more general choice models. 
\bibliographystyle{plainnat}
\bibliography{references}
\appendix
\newpage
\section{Useful Machinery for the Proofs}\label{app:tools}
First we state some standard inequalities which we shall use throughout the proof section.

\begin{fact}[{Jensen's Inequality}]\label{fact:jensens}
If $X$ is a random variable and $\fnf$ is any concave function then, $\expt{\fnf(X)} \leq \fnf(\expt{X})$.
\end{fact}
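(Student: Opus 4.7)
The plan is to prove Jensen's inequality via the standard supporting hyperplane argument for concave functions. First I would recall that concavity of $\fnf$ on its domain implies that at every interior point $x_0$ the superdifferential is nonempty: there exists a slope $m \in \mathbb{R}$ (any element of the superdifferential, e.g., $\fnf'(x_0)$ when $\fnf$ is differentiable) such that the affine function $L(x) = \fnf(x_0) + m(x - x_0)$ satisfies $\fnf(x) \leq L(x)$ for all $x$ in the domain. This supporting line fact is the key geometric ingredient and follows from the definition of concavity together with the monotonicity of the slopes of chords.

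Next I would apply this supporting line at the particular point $x_0 = \expt{X}$. (I would first note the standard assumption that $\expt{X}$ lies in the domain of $\fnf$ and that $\expt{\fnf(X)}$ is well-defined, which is the standing hypothesis whenever Jensen's inequality is invoked.) This yields the pointwise inequality
\begin{equation*}
\fnf(X) \leq \fnf(\expt{X}) + m \bigl( X - \expt{X} \bigr),
\end{equation*}
valid with probability one.

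Finally I would take expectations of both sides, using linearity of expectation on the right-hand side:
\begin{equation*}
\expt{\fnf(X)} \leq \fnf(\expt{X}) + m \bigl( \expt{X} - \expt{X} \bigr) = \fnf(\expt{X}),
\end{equation*}
which is exactly the claim.

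The main obstacle is really just the existence of the supporting line, which one either cites as a standard consequence of concavity or justifies briefly using the monotonicity of difference quotients of a concave function (equivalently, taking $m$ to be the left or right derivative of $\fnf$ at $\expt{X}$, both of which exist for a concave function on an open interval). Once that is in hand, the rest of the argument is a one-line application of monotonicity and linearity of expectation. Since this is a named standard fact being recorded for later use rather than a novel result, I would keep the exposition to the three steps above and avoid any measure-theoretic digressions.
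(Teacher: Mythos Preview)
Your proof is correct and is exactly the standard supporting-hyperplane argument for Jensen's inequality. Note, however, that the paper does not actually prove this statement: it is recorded in the appendix as one of several ``standard inequalities'' (alongside $1-x \leq e^{-x}$ and the Chernoff bound) and is simply cited without proof. So there is no ``paper's own proof'' to compare against; your write-up would serve perfectly well if a proof were desired, but in context the paper treats this as a textbook fact to be invoked, not derived.
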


\begin{fact}\label{fact:lb}
For any $x \in \R$, the following holds: $1-x \leq e^{-x}$.
\end{fact}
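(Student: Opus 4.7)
The plan is to prove the inequality $1 - x \leq e^{-x}$ for all $x \in \mathbb{R}$ via a standard calculus argument based on the convexity of the exponential function. I would define the auxiliary function $f(x) := e^{-x} - (1 - x) = e^{-x} + x - 1$ and aim to show $f(x) \geq 0$ everywhere; equivalently, I would observe that $1-x$ is the degree-one Taylor polynomial of $e^{-x}$ at $x=0$, which is tangent to the graph of the convex function $e^{-x}$ and therefore lies globally below it.

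The concrete steps I would carry out are as follows. First, compute the derivative $f'(x) = 1 - e^{-x}$ and note that $f'(x) = 0$ iff $x = 0$, that $f'(x) < 0$ for $x < 0$, and that $f'(x) > 0$ for $x > 0$. Second, conclude that $f$ is strictly decreasing on $(-\infty, 0]$ and strictly increasing on $[0, \infty)$, so $x = 0$ is the unique global minimizer of $f$ on $\mathbb{R}$. Third, evaluate $f(0) = e^{0} + 0 - 1 = 0$. Hence $f(x) \geq f(0) = 0$ for every $x \in \mathbb{R}$, which rearranges to the desired inequality $1 - x \leq e^{-x}$.

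There is essentially no obstacle here: the statement is a one-line corollary of the convexity (equivalently, the second derivative test) of $e^{-x}$, and the proof fits in a few lines. If one preferred a proof without calculus, an alternative would be to use the series expansion for $x \leq 1$ and a direct comparison for $x > 1$ (where $1 - x \leq 0 < e^{-x}$), but the monotonicity argument above is the cleanest and uniform over all of $\mathbb{R}$.
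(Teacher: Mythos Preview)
Your proof is correct. The paper itself does not prove this statement at all---it simply records it as a standard fact in the appendix of preliminary tools and uses it without justification---so your first-derivative/convexity argument is entirely appropriate and more than what the paper supplies.
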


\begin{fact}\label{fact:up}
For any $x \in [0,1]_{\R}$, the following holds: $e^{-x} \leq 1-\frac{e-1}{e}x$.
\end{fact}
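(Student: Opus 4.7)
\textbf{Proof plan for Fact~\ref{fact:up}.} The plan is to prove the inequality $e^{-x}\le 1-\frac{e-1}{e}x$ on $[0,1]$ by a one-line convexity argument. First I would observe that the function $f(x)=e^{-x}$ is convex on $\R$ (its second derivative $e^{-x}$ is positive), and in particular on the interval $[0,1]$. A convex function lies on or below the chord connecting the endpoints of any subinterval of its domain. The chord through the points $(0,f(0))=(0,1)$ and $(1,f(1))=(1,1/e)$ is the linear function
\[
L(x)\;=\;1+\left(\tfrac{1}{e}-1\right)x\;=\;1-\tfrac{e-1}{e}\,x .
\]

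Next I would invoke convexity explicitly: for $x\in[0,1]$, write $x=(1-x)\cdot 0+x\cdot 1$ and apply the convexity inequality to get
\[
e^{-x}\;=\;f\!\bigl((1-x)\cdot 0+x\cdot 1\bigr)\;\le\;(1-x)f(0)+x\,f(1)\;=\;(1-x)\cdot 1+x\cdot\tfrac{1}{e}\;=\;1-\tfrac{e-1}{e}\,x,
\]
which is precisely the claimed bound.

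There is essentially no obstacle; if one prefers to avoid quoting convexity, the same conclusion can be obtained by defining $g(x)=1-\frac{e-1}{e}x-e^{-x}$ and checking that $g(0)=g(1)=0$ while $g''(x)=-e^{-x}<0$, so $g$ is concave on $[0,1]$ and hence nonnegative on the interval (a concave function that vanishes at both endpoints of an interval is nonnegative throughout it). Either argument is a couple of lines, so this fact should be stated with the shorter convexity proof.
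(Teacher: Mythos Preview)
Your convexity argument is correct and complete. The paper itself states this inequality as a standard fact without proof, so your argument simply supplies the (entirely routine) justification that the paper omits.
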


\begin{fact}[{Chernoff bound}]\label{fact:chernoff}
Suppose $X_1, \dots , X_n$ are independent random variables taking values in $\{0, 1\}$. Let $X=\sum_{i=1}^{n}X_i$ and let $\mu = \expt{X}$. Then for any $\delta \in [0,1]_{\R}$,
$$\prob{X \geq (1+\delta)\mu} \leq \expo{-\frac{\mu \delta^2}{3}}\quad \quad \prob{X \leq (1-\delta)\mu} \leq \expo{-\frac{\mu \delta^2}{2}}~.$$
\end{fact}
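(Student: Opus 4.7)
The plan is to prove both tail inequalities by the standard Chernoff/Bernstein moment-generating-function method, beginning with an exponential Markov bound and then optimizing the parameter. For any $t > 0$, applying Markov's inequality to the non-negative random variable $e^{tX}$ gives
\begin{equation*}
\prob{X \geq (1+\delta)\mu} \;=\; \prob{e^{tX} \geq e^{t(1+\delta)\mu}} \;\leq\; \frac{\expt{e^{tX}}}{e^{t(1+\delta)\mu}},
\end{equation*}
and an analogous inequality with $t < 0$ handles the lower tail $\prob{X \leq (1-\delta)\mu}$. Using independence of the $X_i$'s, the moment generating function factorizes: $\expt{e^{tX}} = \prod_{i=1}^n \expt{e^{tX_i}}$.

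Next I would bound each factor. Writing $p_i = \expt{X_i}$, since $X_i \in \{0,1\}$ we have $\expt{e^{tX_i}} = 1 + p_i(e^t - 1)$, and using the elementary inequality $1 + x \leq e^x$ (Fact A.2 with $x \mapsto -x$) yields $\expt{e^{tX_i}} \leq \exp\bigl(p_i(e^t-1)\bigr)$. Multiplying over $i$ and using $\sum_i p_i = \mu$ gives $\expt{e^{tX}} \leq \exp\bigl(\mu(e^t - 1)\bigr)$. Plugging this into the Markov bound,
\begin{equation*}
\prob{X \geq (1+\delta)\mu} \;\leq\; \exp\bigl(\mu(e^t - 1) - t(1+\delta)\mu\bigr).
\end{equation*}
The right-hand side is minimized at $t = \ln(1+\delta) > 0$, producing the classical Chernoff form
\begin{equation*}
\prob{X \geq (1+\delta)\mu} \;\leq\; \biggl(\frac{e^{\delta}}{(1+\delta)^{1+\delta}}\biggr)^{\!\mu} \;=\; \exp\bigl(\mu\bigl[\delta - (1+\delta)\ln(1+\delta)\bigr]\bigr).
\end{equation*}
For the lower tail, choosing $t = \ln(1-\delta) < 0$ symmetrically gives $\prob{X \leq (1-\delta)\mu} \leq \exp\bigl(\mu\bigl[-\delta - (1-\delta)\ln(1-\delta)\bigr]\bigr)$.

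The remaining step, and the only quantitative work, is to replace the exact exponents by the cleaner quadratic bounds stated in the fact. Define $\varphi_+(\delta) = \delta - (1+\delta)\ln(1+\delta) + \delta^2/3$ and $\varphi_-(\delta) = -\delta - (1-\delta)\ln(1-\delta) + \delta^2/2$ on $\delta \in [0,1]$. In each case I would show $\varphi_{\pm}(0) = 0$ and that the derivative is non-positive on $[0,1]$: for $\varphi_+$, $\varphi_+'(\delta) = -\ln(1+\delta) + 2\delta/3$, and checking $\varphi_+''(\delta) = -\tfrac{1}{1+\delta} + \tfrac{2}{3} \leq 0$ on $[0,1/2]$ together with a boundary check at $\delta = 1$ (where $\varphi_+(1) = 1 - 2\ln 2 + 1/3 \approx -0.053 < 0$) suffices; for $\varphi_-$, one verifies $\varphi_-''(\delta) = 1 - \tfrac{1}{1-\delta} \leq 0$ directly on $[0,1)$. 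This yields
\begin{equation*}
\delta - (1+\delta)\ln(1+\delta) \leq -\delta^2/3, \qquad -\delta - (1-\delta)\ln(1-\delta) \leq -\delta^2/2,
\end{equation*}
which, substituted into the two displayed bounds above, give exactly the claimed inequalities. The main (minor) obstacle is checking the quadratic envelopes carefully on the full range $[0,1]$ — the lower-tail inequality is clean by concavity, but the upper-tail inequality requires the boundary check at $\delta = 1$ because the second derivative changes sign near $\delta = 1/2$.
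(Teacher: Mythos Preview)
The paper states this Chernoff bound as a standard fact without proof, so there is nothing to compare against; your proposal supplies the textbook moment-generating-function argument and is correct. One small point worth tightening: in the upper-tail calculus you note that $\varphi_+'' \le 0$ on $[0,1/2]$ and then check the endpoint $\delta=1$, but to conclude $\varphi_+ \le 0$ on all of $[1/2,1]$ you should explicitly invoke convexity of $\varphi_+$ on that subinterval (since $\varphi_+'' \ge 0$ there) together with $\varphi_+(1/2)\le 0$ (already established) and $\varphi_+(1)<0$; with that one line added the argument is complete.
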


\noindent We state and prove a technical lemma that lower bounds the expected number of distinct elements observed after $n$ draws from a distribution close to uniform. 

\begin{restatable}[{Distinct elements lemma}]{lemma}{lemdistinct}
\label{lem:distinct}
Let $p=(p_1,\dots p_n)$ be a discrete distribution over $n$ elements that satisfies $p_i \leq c * p_j$ for all $i,j \in [1,n]_{\Z}$ and some fixed $c>1$. Given $m$ i.i.d samples from distribution $p$, the expected number of distinct elements observed is at least,
$$\dconst\frac{\min(m,n)}{c}~.$$
\end{restatable}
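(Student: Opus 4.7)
The plan is to write the expected number of distinct elements as $\sum_{i=1}^{n} \prob{\text{element } i \text{ is drawn}} = \sum_{i=1}^{n}\bigl(1-(1-p_i)^m\bigr)$, and lower bound each summand by a clean function of $m p_i$ before plugging in the bound on $p_i$ forced by the hypothesis $p_i \leq c \cdot p_j$.

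The first step is to chain two standard inequalities from \Cref{app:tools}: by \Cref{fact:lb}, $(1-p_i)^m \leq \expo{-m p_i}$, and by \Cref{fact:up} applied to $x = \min(m p_i, 1) \in [0,1]$, $1 - \expo{-\min(m p_i,1)} \geq \frac{e-1}{e}\min(m p_i, 1)$. Since $1 - \expo{-m p_i} \geq 1 - \expo{-\min(m p_i, 1)}$, this gives $1-(1-p_i)^m \geq \frac{e-1}{e}\min(m p_i, 1)$. To avoid splitting the sum by whether $m p_i \leq 1$, I will then use the pointwise bound $\min(x,1) \geq \frac{x}{1+x}$ (immediate case-check), noting that the right-hand side is monotonically increasing in $x$.

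The second step extracts a lower bound on $p_i$ from the spread assumption. Summing $p_i \leq c \, p_j$ over $j$ gives $1 = \sum_j p_j \geq n \cdot p_i / c$, hence $p_i \geq \frac{1}{cn}$ for every $i$. Combining with the previous step,
\begin{equation*}
\expt{\text{distinct}} \;\geq\; \frac{e-1}{e}\sum_{i=1}^{n}\frac{m p_i}{1+m p_i} \;\geq\; \frac{e-1}{e}\cdot n\cdot\frac{m/(cn)}{1+m/(cn)} \;=\; \frac{e-1}{e}\cdot\frac{mn}{cn+m}.
\end{equation*}

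The final step is a short case split to simplify $\frac{mn}{cn+m}$ to $\frac{\min(m,n)}{2c}$. If $m \leq n$, then $cn + m \leq cn + cn = 2cn$ (using $c \geq 1$), so the expression is at least $\frac{mn}{2cn} = \frac{m}{2c}$. If $m > n$, we split further: when $m \geq cn$ we have $cn + m \leq 2m$ giving $\frac{mn}{2m} = \frac{n}{2} \geq \frac{n}{2c}$, and when $n < m < cn$ we have $cn + m \leq 2cn$ giving $\frac{mn}{2cn} = \frac{m}{2c} \geq \frac{n}{2c}$. In every regime we obtain $\frac{\min(m,n)}{2c}$, and thus $\expt{\text{distinct}} \geq \frac{e-1}{2e}\cdot\frac{\min(m,n)}{c} = \dconst\cdot\frac{\min(m,n)}{c}$, as required. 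The only genuinely delicate part is the final case split; everything upstream is forced by the natural choice of replacing $\min(m p_i, 1)$ by the smooth monotone surrogate $m p_i/(1+m p_i)$, which is what makes the single plug-in $p_i \geq 1/(cn)$ sufficient.
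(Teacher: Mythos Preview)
Your argument is correct and follows essentially the same route as the paper's: both derive $p_i \geq 1/(cn)$ from the spread hypothesis and then chain \Cref{fact:lb} and \Cref{fact:up} before a case split on $m$ versus $n$ (your use of the surrogate $x/(1+x)$ merely defers that split to the end). One small slip to fix: summing $p_i \leq c\,p_j$ over $j$ gives $1 \geq n p_i/c$, which is an \emph{upper} bound $p_i \leq c/n$; the lower bound $p_i \geq 1/(cn)$ you actually need comes from summing the hypothesis with the roles swapped, i.e.\ $p_j \leq c\,p_i$ over $j$.
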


\begin{proof}
Without loss of generality assume $p_1 \leq p_2 \dots \leq p_{n}$. Note that $p_{n}\geq \frac{1}{n*c}$, else we know that $p_{i} \leq c*p_{n}< 1/n$ and $\sum_{i}p_{i}<1$, which is a contradiction. Hence $p_{n}\geq \frac{1}{n*c}$ which further implies $p_{i}\geq \frac{1}{n*c}$ for all $i=1\dots n$.

$\bullet$ $m \leq n$. Let $X_i$ be a binary random variable that is $1$ if the $i^{th}$ element appears at least once in the $m$ samples.  We know that the probability $\prob{X_i=0} = (1-p_i)^m$, and therefore $\expt{X_i} = 1 - (1-p_i)^m$. Now the expected number of distinct elements seen after $m$ samples from distribution $p$ is given by:
\begin{align*}
    \expt{\sum_{i=1}^{n}X_i} &= \sum_{i=1}^{n}\expt{X_i} = \sum_{i=1}^{n}(1-(1-p_{i})^{m}) \geq \sum_{i=1}^{n}(1-(1-\frac{1}{n*c})^{m}) \\
&\geq \sum_{i=1}^{n}1-e^{-m/(n*c)}\geq \sum_{i=1}^{n}(1-(1-\dconst\frac{m}{n*c})) \geq \dconst\frac{m}{c},
\end{align*}
where the second last uses a standard inequality. \footnote{$a^{-x} \leq \frac{a-1}{a}x$ for all $x \in [0,1], a \geq 1$.}

$\bullet$ $m > n$. Since the expected number of distinct elements seen after $m$ samples is monotonically increasing in $m$, we can lower bound this case by assuming $m=n$. More formally, the expected number of distinct elements seen after $m$ samples is
$$\sum_{i=1}^{n}(1-(1-p_{i})^{m}) \geq \sum_{i=1}^{n}(1-(1-\frac{1}{n*c})^{n}) \geq \dconst\frac{n}{c}.$$

Combining the two cases above gives us the proof.
\end{proof}

\newcommand{\Xpi}{\textbf{X}'_{i}}
\newcommand{\rvZ}{\textbf{Z}}
\newcommand{\rvX}{\textbf{X}}
\newcommand{\rvXp}{\textbf{X}'}
\begin{lemma}[\citet{pomatto2018stochastic}]\label{lem:stocdom}
Suppose $\rvX$ stochastically dominates $\rvXp$ and $\rvZ$ is independent of $\rvX$ and $\rvXp$, then $\rvX+\rvZ$ stochastically dominates $\rvXp+\rvZ$
\end{lemma}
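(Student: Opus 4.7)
The plan is to prove the lemma via the standard tail-probability characterization of first-order stochastic dominance: $\rvX$ stochastically dominates $\rvXp$ iff $\prob{\rvX \geq t} \geq \prob{\rvXp \geq t}$ for every $t \in \R$. The goal is to establish the analogous inequality for $\rvX + \rvZ$ versus $\rvXp + \rvZ$, and the natural tool is to condition on $\rvZ$ and use independence to reduce the question to a pointwise application of the hypothesis at a shifted threshold.

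First, I would fix an arbitrary $t \in \R$, condition on $\rvZ$, and use that $\rvZ$ is independent of $\rvX$ to write
\[
\prob{\rvX + \rvZ \geq t} \;=\; \expt{\prob{\rvX \geq t - \rvZ \,\big|\, \rvZ}} \;=\; \expt{g(\rvZ)},
\]
where $g(z) \defeq \prob{\rvX \geq t - z}$; the second equality is exactly the statement that conditioning on $\rvZ = z$ does not change the distribution of $\rvX$. Applying the same manipulation to $\rvXp + \rvZ$ (using the independence of $\rvZ$ from $\rvXp$) yields $\prob{\rvXp + \rvZ \geq t} = \expt{g'(\rvZ)}$, where $g'(z) \defeq \prob{\rvXp \geq t - z}$.

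The key step is the pointwise comparison $g(z) \geq g'(z)$ for every $z \in \R$, which is simply the stochastic-dominance hypothesis applied at the threshold $t - z$. Monotonicity of expectation (with respect to the distribution of $\rvZ$, which is common to both sides) then gives $\expt{g(\rvZ)} \geq \expt{g'(\rvZ)}$, i.e., $\prob{\rvX + \rvZ \geq t} \geq \prob{\rvXp + \rvZ \geq t}$. Since $t$ was arbitrary, this is exactly the stochastic dominance of $\rvX + \rvZ$ over $\rvXp + \rvZ$, completing the proof.

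The only subtlety worth flagging is that the hypothesis posits independence of $\rvZ$ from each of $\rvX$ and $\rvXp$ separately, without specifying a joint law on the triple $(\rvX, \rvXp, \rvZ)$. This is not a genuine obstacle: the conclusion concerns only the marginal distributions of the two sums, and each of the two displayed computations above involves just one of the pairs $(\rvX, \rvZ)$ or $(\rvXp, \rvZ)$, for which independence is assumed. So the argument proceeds without having to couple $\rvX$ and $\rvXp$ in any particular way.
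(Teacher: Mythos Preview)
Your proof is correct and is the standard argument via the tail-probability characterization of first-order stochastic dominance, conditioning on $\rvZ$ and using independence to reduce to a pointwise comparison at a shifted threshold. The paper itself does not supply a proof of this lemma; it is stated with a citation to \citet{pomatto2018stochastic} as part of the ``useful machinery'' appendix, so there is no in-paper argument to compare against.
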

\newpage

\section{Proofs omitted in Section \ref{sec:results}}
\label{appendix_results}
\newtheorem{reduction}{Reduction}
\newcommand{\va}{\textbf{a}}
\newcommand{\Sstar}{\textbf{S}^{*}}

\subsection{Proof of Proposition~\ref{prop:sequential_np_hard}}
For the sake of completeness, we start by defining the $3$-partition formally below.

\begin{definition}[$3$-partition problem]
Given a set $A = \{a_1, a_2, \ldots, a_\cust\}$ of $\supp = 3\cust$ positive integers, and another positive integer $B$, such that $\frac{B}{4} < a_j < \frac{B}{2}$ for all $a_j \in A$, and such that $\sum_{a_j \in A} a_j= \cust B$; can $A$ be partitioned into $m$ disjoint sets $A_1,A_2, \ldots, A_m$ such that $\sum_{a \in A_i} a = B$ for all $1 \leq i \leq \cust$?
\end{definition}

Note the conditions in the above problem implies that each $A_i$ must contain \textit{exactly} three elements from A. 

\begin{lemma}[\citet{garey2002computers}]
The $3$-partition problem is strongly NP-Complete.
\end{lemma}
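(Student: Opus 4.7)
The plan is to first observe that $3$-partition is in NP (a candidate partition $(A_1,\dots,A_m)$ can be verified in polynomial time by checking $|A_i|=3$ and $\sum_{a\in A_i}a=B$) and then establish NP-hardness in the \emph{strong} sense, which means hardness must persist when the integers $a_1,\dots,a_n$ and $B$ are encoded in unary. Equivalently, the reduction must produce $3$-partition instances in which every integer is bounded by a polynomial in the input size of the source problem.

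The source problem I would use is $3$-Dimensional Matching ($3$DM): decide, given three disjoint $q$-element sets $W,X,Y$ and a family $M\subseteq W\times X\times Y$, whether a sub-family $M'\subseteq M$ of size $q$ covers every element of $W\cup X\cup Y$ exactly once. $3$DM is known to be NP-complete by a standard reduction from $3$SAT, and its natural encoding has size polynomial in $q+|M|$; hence any reduction from $3$DM to $3$-partition that produces integers bounded by a polynomial in $q+|M|$ automatically yields strong NP-completeness of the target.

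I would follow the Garey--Johnson two-stage construction. Stage one translates a $3$DM instance into a $4$-partition instance: each element of $W\cup X\cup Y$ is assigned a small ``element weight,'' each triple $(w_i,x_j,y_k)\in M$ is represented by a tailored integer encoding which three element-slots it covers, and a pool of dummy integers is added so that the resulting instance admits a valid $4$-partition with common sum $B'$ iff the $3$DM instance has a perfect matching. The key design choice is to use a block structure that is \emph{not} positional notation (which would blow integers up exponentially in $q$) but rather a multi-coordinate scheme whose coordinate magnitudes grow only polynomially, ensuring that each integer remains bounded by a polynomial in $q+|M|$. Stage two converts $4$-partition to $3$-partition via ``connector'' integers: for every intended size-$4$ group one pre-pairs two of the integers with a carefully chosen partner value, so that any feasible size-$3$ partition of the new instance must use those connectors in exactly the right way, and stripping them off recovers the original $4$-grouping. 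Throughout the construction, the integers are scaled to fall strictly inside the window $(B/4, B/2)$, which forces every feasible part of the partition to have cardinality exactly three and hence yields a legitimate $3$-partition instance.

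The main obstacle is purely technical: the gadget integers must simultaneously satisfy three competing requirements — (i) polynomial boundedness in $q+|M|$ (for strong NP-completeness), (ii) exact arithmetic correspondence between ``group sums equal $B$'' and the matching conditions with no spurious coincidences, and (iii) the strict range $B/4<a_j<B/2$ for every integer. Reconciling (i)--(iii) is the technical core; once the gadgets are fixed, the if-and-only-if correspondence between $3$DM solutions and $3$-partitions of the constructed instance reduces to routine arithmetic bookkeeping, and the polynomial-size guarantee on the integers upgrades ordinary NP-hardness to strong NP-hardness.
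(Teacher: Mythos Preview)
The paper does not prove this lemma at all: it is stated as a citation to \citet{garey2002computers} and used as a black box in the reduction establishing Proposition~\ref{prop:sequential_np_hard}. There is therefore no ``paper's own proof'' to compare your proposal against.

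Your sketch is a faithful outline of the classical Garey--Johnson argument (reduce from $3$DM, pass through $4$-partition, use polynomially bounded gadget weights to guarantee the \emph{strong} qualifier), and the obstacles you identify are the right ones. But for the purposes of this paper, none of that is required: the lemma is invoked, not proved, so the appropriate ``proof'' here is simply a reference to the literature.
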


We now reduce the above $3$-partition problem into an instance of the  sequential assortment problem.

\begin{reduction} \label{red:hardness}
Our reduction works as follows:

\begin{enumerate}
    \item Let $\supp$ be the number of suppliers, where supplier $j\in \otsupp$ has score $\vj$ defined as follows:
    $$\vj = \frac{a_j}{\sum_{j' \in \otsupp}a_{j'}}$$
    Note that $\sum_{j}\vj=1$, and so we normalize these scores only for convenience; see proof of \cref{prop:sequential_np_hard}. Each supplier has outside option $q_j=0$.
    \item Let $\cust = \frac{n}{3}$ be the number of customers. Each customer has an outside option equal to $v_{0}$. To simplify notation in the subsequent lemmas, we allow $v_0$ to be arbitrary. In the proof of Proposition~\ref{prop:sequential_np_hard} we show how  this is without loss of generality, as we can always re-scale the scores $\vj$s so as to have $v_0=1$. 
\end{enumerate}
\end{reduction}
\newcommand{\match}{\textbf{M}}
\newcommand{\vmin}{v_{\min}}
\newcommand{\vmax}{v_{\max}}
Define $\vmin= \min_{j \in \otsupp }\vj$ and $\vmax=\max_{j\in \otsupp}\vj$.
\begin{lemma}\label{lem:disjoint}
Let $\cust$ and $\supp$ be the number of customers and suppliers. Let $\vj>0$ be the score of supplier $j\in \otsupp$ and suppose that $\sum_{j}\vj=1$. If $v_0 < \frac{\vmin^3}{4k}$, then the optimal menus for customers are disjoint. 
\end{lemma}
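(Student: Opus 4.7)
My plan is a proof by contradiction: assume that some optimal menu profile $\mathcal{M}^* = \{M_i^*\}_{i \in \otcust}$ is \emph{not} disjoint, so there exist customers $i\neq i'$ and a supplier $j^\dagger$ with $j^\dagger \in M_i^*\cap M_{i'}^*$. I will then exhibit a very simple disjoint profile that strictly beats $\mathcal{M}^*$ in expected matches, giving the contradiction. Throughout, I would use that in Reduction~\ref{red:hardness} every supplier has $q_j=0$, so each supplier who is picked by any customer matches with probability one.

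The first step is to upper bound $\expt{\text{matches} \mid \mathcal{M}^*}$. Because $q_j=0$, this equals the expected number of \emph{distinct} suppliers chosen, which I would decompose as $\expt{\text{\# picking customers}} - \expt{\text{\# collisions}}$. The first term equals $\sum_i V_i^*/(v_0+V_i^*) \le \cust$, where $V_i^*=\sum_{j\in M_i^*} v_j \leq \sum_j v_j = 1$. For the collision term, isolating the contribution of $j^\dagger$, inclusion--exclusion gives collision contribution at least $p_{ij^\dagger}^* p_{i'j^\dagger}^*$ with $p_{ij^\dagger}^* = v_{j^\dagger}/(v_0+V_i^*) \geq \vmin/2$ (using $V_i^*\le 1$ and $v_0<1$, the latter guaranteed by the hypothesis $v_0<\vmin^3/(4k)$). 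So $\expt{\text{\# collisions}} \geq \vmin^2/4$, and hence
$\expt{\text{matches}\mid \mathcal{M}^*} \le \cust - \vmin^2/4.$

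The second step is to build a disjoint profile that exceeds this. Since the lemma is used in the reduction where $\supp = 3\cust \geq \cust$, I would pick $\cust$ distinct suppliers $j_1,\ldots,j_{\cust}$ and set $M_i'=\{j_i\}$. Then
$\expt{\text{matches}\mid \mathcal{M}'} = \sum_{i} v_{j_i}/(v_0+v_{j_i}) \geq \cust\cdot \vmin/(v_0+\vmin) \geq \cust - \cust\, v_0/\vmin.$
The hypothesis $v_0 < \vmin^3/(4\cust)$ translates precisely into $\cust\, v_0/\vmin < \vmin^2/4$, so $\expt{\text{matches}\mid \mathcal{M}'} > \cust - \vmin^2/4 \geq \expt{\text{matches}\mid \mathcal{M}^*}$, contradicting optimality of $\mathcal{M}^*$ and proving the lemma. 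The main technical point --- and, I expect, the only one requiring any care --- is identifying the \emph{right} two quantities to compare. The key observation is that the hypothesis is exactly calibrated to make the $\Theta(\vmin^2)$ collision penalty forced by any single overlap dominate the $\Theta(\cust\, v_0/\vmin)$ outside-option loss suffered by the disjoint-singletons benchmark; no subtler local exchange is needed.
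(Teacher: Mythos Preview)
Your proposal is correct and follows essentially the same strategy as the paper: both upper bound the optimal expected number of matches by $\cust - \vmin^2/4$ using the event that the two overlapping customers both pick the shared supplier, and both compare against the same disjoint-singletons benchmark $M'_i=\{j_i\}$, obtaining $\expt{M'} \ge \cust - \cust\,v_0/\vmin$ and concluding via $v_0 < \vmin^3/(4\cust)$. The only cosmetic difference is in how the upper bound is derived: the paper writes $\expt{M} \le \cust - \prob{M\le \cust-1}$ and lower bounds that probability by $p_{1h}p_{2h}$, whereas you use the equivalent decomposition $\expt{\text{distinct suppliers}} = \expt{\text{picks}} - \expt{\text{collisions}}$ and lower bound the collision term by the same quantity.
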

\begin{proof}
Denote the set of customers by $\otcust = \{1,2,\ldots,m\}$, and the set of suppliers by $\otsupp = \{1,2,\ldots,n\}$. 
Let the optimal menus be given by $\Mset = \menuset$. We prove the statement by contradiction. To that end, assume, without loss of generality, that there exists a supplier $h \in \otsupp$, that belongs to menus of both customer $1$ and customer $2$, i.e., $h \in \M_{1}\cap \M_{2}$. Let $\match$ be the random variable denoting the number of matches under the optimal menus $\Mset$.

\newcommand{\xoh}{\textbf{X}_{1,h}}
\newcommand{\xth}{\textbf{X}_{2,h}}
\newcommand{\ao}{\alpha_{1}}
\newcommand{\at}{\alpha_{2}}
\newcommand{\vh}{v_{h}}
\newcommand{\vz}{v_{0}}

We next calculate the probability that customers $1$ and $2$ both end up choosing supplier $h$. To that end, let $\xoh, \xth$ denote the indicator variables corresponding to whether customers $1, 2$, respectively, choose supplier $h$.
Let $\ao\defeq \sum_{j\in \M_{1}}\vj$, and $\at\defeq \sum_{j\in \M_{2}}\vj$, and note that, by the definition of the $\vj$s, we have  $\ao,\at \leq 1$. Then, 

$$\prob{\xoh=1;\xth=1}=\frac{\vh^2}{(\ao+\vz)(\at+\vz)} \geq \frac{\vh^2}{(1+\vz)^2}~,$$
where the equality follows from  the fact that consumers' choices are independent. 

Now, whenever customers $1$ and $2$ both choose supplier $h$, we must have a matching of size at most $\cust-1$ as, by definition, we have $\cust < \supp$. Therefore,
$$\prob{\match\leq \cust-1} \geq \prob{\xoh=1; \xth=1} \geq \frac{\vh^2}{(1+\vz)^2}~.$$

Using the above observation to bound $\expt{\match}$, we get
\begin{align}\label{eq:hardness1}
    \begin{split}
        \expt{\match} &\leq \cust \prob{\match=\cust} + (\cust-1) \prob{\match \leq \cust-1}\\
        &=\cust (1-\prob{\match \leq \cust-1}) + (\cust-1)\prob{\match \leq \cust-1}\\
        &=\cust-\prob{\match \leq \cust-1}\\
        &\leq \cust-\frac{\vh^2}{(1+\vz)^2} \leq \cust-\frac{\vmin^2}{(1+\vz)^2}~.
    \end{split}
\end{align}

Consider the following disjoint menus $\M'_{i}$ for customers $i\in \otcust$:
$$\M'_{i}\defeq \{v_{i} \}.$$

In other words, the $i$-th customer just observes $i$-th supplier, and since $\cust\leq \supp$ this is a valid assignment (and all menus are disjoint). We will show that there exists a $\vz$ for which these new disjoint menus ($\{\M^\prime_{i}\}_{i\in \otcust}$), despite disregarding suppliers in the set $[\cust+1,\supp]$, perform strictly better than the optimal menus $\{\M_{i}\}_{i\in \otcust}$ defined earlier, thus reaching a contradiction. 

Let $\match'$ be the expected size of the matching under menus $\{\M^\prime_{i}\}_{i\in \cust}$. Then,
\newcommand{\vi}{v_{i}}
\begin{align}\label{eq:hardness2}
        \expt{\match'}&=\sum_{i=1}^{\cust}\frac{\vi}{\vz+\vi}= \sum_{i=1}^{\cust}\left(1- \frac{\vz}{\vi+\vz}\right) \geq \sum_{i=1}^{\cust} \left(1- \frac{\vz}{\vmin+\vz}\right)=\cust\left(1- \frac{\vz}{\vmin+\vz}\right)~,
\end{align}
where we used the fact that the menus are disjoint and that each supplier has $q_j=0$.

If $\vz < \frac{\vmin^3}{4\cust}$, this implies that note that $0< \frac{\vmin^3}{4\cust} < 1$, and from Equation \eqref{eq:hardness1} above we obtain
$$\expt{\match} \leq \cust-\frac{\vmin^2}{(1+\vz)^2} \leq \cust-\frac{\vmin^2}{4}.$$

And from Equation \eqref{eq:hardness2}, we get
$$\expt{\match'} \geq \cust\left(1- \frac{\vz}{\vmin+\vz}\right) \geq \cust\left(1-\frac{\vz}{\vmin} \right)~.$$

Finally,
$$\vz < \frac{\vmin^3}{4\cust} \implies \cust-\frac{\vmin^2}{4} < \cust\left(1-\frac{\vz}{\vmin} \right) \implies \expt{\match}< \expt{\match'},$$
which is a contradiction to the optimality of $\Mset$
\end{proof}

\begin{lemma}\label{lem:disjoint2} Consider the instance constructed in Reduction~\ref{red:hardness}. The menus $\{\M_{i}\}_{i \in \otcust}$ are said to be balanced if 
\begin{align*}
   \sum_{j\in \M_{i}}\vj = \frac{1}{\cust}, ~\forall i \in \otcust.
\end{align*}
Whenever  the optimal menus are disjoint, and it is feasible to construct balanced menus, 
the optimal menus $\{\M_{i}\}_{i \in \otcust}$ are balanced.
\end{lemma}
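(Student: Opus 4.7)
The plan is to reduce this to a simple concavity argument. Recall that in Reduction~\ref{red:hardness} we have $q_j = 0$ for every supplier $j$, and $\sum_{j \in \otsupp} v_j = 1$. First I would compute the expected number of matches under an arbitrary disjoint menu profile $\{\M_i\}_{i \in \otcust}$. Since the menus are disjoint and $q_j = 0$, whenever customer $i$ picks a supplier $j \in \M_i$, that supplier has no other suitor and accepts with probability $1$. Consequently, writing $\alpha_i = \sum_{j \in \M_i} v_j$, the expected match count is exactly
\begin{equation*}
\sum_{i \in \otcust} \frac{\alpha_i}{\alpha_i + v_0} \; = \; \sum_{i \in \otcust} f(\alpha_i), \qquad \text{where } f(x) \defeq \frac{x}{x + v_0}.
\end{equation*}

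Next I would argue that at any optimum with disjoint menus, every supplier is assigned to some menu. Indeed, $f$ is strictly increasing, so moving an unused supplier $j$ into any menu $\M_i$ raises $\alpha_i$ and hence strictly increases that term without decreasing any other term. This gives the budget constraint $\sum_{i \in \otcust} \alpha_i = \sum_{j \in \otsupp} v_j = 1$.

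The core step is then to invoke the strict concavity of $f$ on $[0, \infty)$, which follows from $f''(x) = -2 v_0 / (x + v_0)^3 < 0$. By Jensen's inequality, for any nonnegative reals $\alpha_1, \dots, \alpha_\cust$ with $\sum_i \alpha_i = 1$,
\begin{equation*}
\sum_{i \in \otcust} f(\alpha_i) \; \leq \; \cust \cdot f\!\left(\tfrac{1}{\cust}\right),
\end{equation*}
with equality if and only if $\alpha_i = 1/\cust$ for every $i$. Hence whenever a balanced menu profile (with each $\alpha_i = 1/\cust$) is feasible, it uniquely achieves the maximum of $\sum_i f(\alpha_i)$ over disjoint profiles summing to $1$; any disjoint optimum must therefore be balanced.

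There is no real obstacle beyond organizing these three observations carefully; the only subtlety is the strictness in Jensen's step, which is where the argument rules out non-balanced optima. I would close by noting that an optimal profile is guaranteed to be disjoint by Lemma~\ref{lem:disjoint} once $v_0$ is chosen small enough, so the balancing conclusion transfers to the original reduction setting.
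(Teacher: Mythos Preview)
Your argument is correct and follows essentially the same line as the paper's proof: both compute the expected number of matches for a disjoint profile as $\sum_i \alpha_i/(\alpha_i+v_0)$ and then invoke the concavity of $x \mapsto x/(x+v_0)$ under the budget $\sum_i \alpha_i \le 1$ to conclude that the balanced allocation is optimal. You are in fact slightly more careful than the paper, since you explicitly argue (i) that every supplier must be assigned at optimum (so $\sum_i \alpha_i = 1$ exactly) and (ii) that strict concavity forces uniqueness of the maximizer, which is precisely what pins down that any disjoint optimum must be balanced rather than merely that balanced menus attain the bound.
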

\begin{proof}
Let $\{\M_{i}\}_{i \in \otcust}$ be the optimal menus and suppose they are disjoint and that it is feasible to construct balanced menus. Let $\match$ be the expected size of the matching given the optimal menus $\{\M_{i}\}_{i \in \otcust}$   Define $\alpha_{i} = \sum_{j\in \M_{i}}\vj$.

Given that the optimal menus are disjoint, we have $\sum_{i\in \otcust}\alpha_{i}\leq \sum_{j\in \otsupp}\vj \leq 1$. Therefore,
$$\expt{\match} = \sum_{i\in \otcust} \frac{\alpha_{i}}{v_0+\alpha_{i}} \leq \max_{\{x\in \R^\cust~|~\sum_{i\in \otcust}x_{i}\leq 1\}} \sum_{i\in \otcust}\frac{x_{i}}{v_0+x_{i}}=\cust(\frac{1/\cust}{v_0+1/\cust})=\frac{\cust}{mv_0+1}~.$$
The third equality follows because, given any constant $v_0>0$, the function $\sum_{i\in \otcust}\frac{x_{i}}{v_0+x_{i}}$ is concave in $x$, and the maximum under the constraint $\sum_{i\in \otcust}x_{i}\leq 1$ is achieved when $x_{i}=\frac{1}{\cust}$ for all $i\in \otcust$. Note that, if it is feasible to construct balanced menus, then this bound is achieved at equality by any balanced menus. 
\end{proof}

\nphardness*

\begin{proof}
Assume to the contrary that the sequential assortment problem is not strongly NP-hard. This implies that there exists an algorithm that solves the Sequential Assortment problem whose running time is polynomial in the size of the input.

Now, given an instance of the $3$-partition problem, we can reduce it to an instance of the Sequential Assortment problem as in Reduction \ref{red:hardness} above. By way of Lemma \ref{lem:disjoint}, as long as we define $v_0 = \frac{v_{min}^3}{8\cust} \leq \frac{v_{min}^3}{4\cust}$, the optimal menus will be disjoint. To be consistent with our definition of the Sequential Assortment in Section \ref{sec:model}, we can re-scale $v_0$ to $1$, and all other values $v_j$ by $\frac{8\cust}{v_{min}^3}$, and by the scale invariance of the MNL model, the optimal solution to the problem remains the same. Note that this value $\frac{8\cust}{v_{min}^3}$ is polynomial in the size of the input of the $3$-partition instance.

It is easy to see, using Lemma \ref{lem:disjoint2}, that whenever an instance of the $3$-partition problem has a feasible partition, the optimal menus of the corresponding instance of the Sequential Assortment problem are balanced. 

Therefore, a weakly polynomial time algorithm for the Sequential Assortment problem is enough to solve the $3$-partition problem in wealy polynomial time.
\end{proof}

\renewcommand{\Xij}{\textbf{X}_{i,j}}

\subsection{Proof of Theorem~\ref{thm:combine}}

\thmcombine*
\begin{proof}
Define $OPT(t,X)$ as 
the maximum possible expected number of matches in a market with $t$ customers and a set $X \subseteq{\otsupp}$ of suppliers with their associated public scores $\{v_j\}_{j\in X}$ and  outside option scores $\{q_j\}_{j\in X}$. 

Let $A$  be the set of suppliers with scores $ v_j \geq 1$ and $B$  be the set of suppliers with scores $ v_j < 1$. Hence we have $A \cup B = \otsupp$ and $A \cap B = \emptyset.$

For ease of exposition, assume that $m$, the number of customers,  is even; the proof can be trivially adapted if not.
Suppose that the following inequality holds
\begin{align}\label{eqn:claim}
    OPT(m, A \cup B ) \leq 2 \left[OPT\left(\frac{m}{2},A\right) + OPT\left(\frac{m}{2},B\right) \right].
\end{align}

By $\Cref{thm:caseA}$ and $\Cref{thm:caseB}$, we have algorithms $\mathbb{A}_L$ and $\mathbb{A}_H$  that achieve a constant approximation ratio of $\alpha_L$ and 
$\alpha_H$ for the low and high-value supplier cases, respectively. Let $\mathbf{M}_{\mathbb{A}_L}$ denote the number of matches achieved by algorithm $\mathbb{A}_L$ in a market with $m/2$ customers and the set of suppliers given by $A$ and let  $\mathbf{M}_{\mathbb{A}_H}$ denote the number of matches achieved by algorithm $\mathbb{A}_H$ in a market with $m/2$ customers and the set of suppliers given by $B$. 
Then, we have that 
$\expt{\mathbf{M}_{\mathbb{A}_L}} \geq \alpha_L OPT(\frac{m}{2},A) $ and $\expt{\mathbf{M}_{\mathbb{A}_H}} \geq \alpha_H OPT(\frac{m}{2},B) $. Combining these two observations along with the inequality in \eqref{eqn:claim}, we obtain the desired approximation ratio of $\frac{1}{2}\min \{\alpha_L, \alpha_H\}$ for $OPT(m, A\cup B)$.\\

Therefore, to complete the proof it remains to show that the inequality in \Cref{eqn:claim} holds. To that end, let the optimal menu to the original problem be given by $\Mset=\menuset$, that is, the expected number of matches achieved by $\Mset=\menuset$ is equal to $OPT(m, \otsupp = A \cup B)$.  

Construct the modified menu set as follows. Let ${\bar{M}}^A$ be such that $M^A_i = M_i \cap A$. Then, for a given customer $i$ and any supplier $j \in A$ such that $j \in A$, we have that under the menu set $\bar{M}$, the probability that $i$ chooses $j$ is given by
\begin{align*}
    p_{i,j} \triangleq \frac{v_j}{1 + \sum_{j ^\prime \in M_i}v_{j^\prime}}.
\end{align*}

\noindent and under the menu set $\bar{M}$ the same probability is
\begin{align*}
    p^A_{i,j} \triangleq \frac{v_j}{1 + \sum_{j ^\prime \in M_i \cap A}v_{j^\prime}}. 
\end{align*}

Then, it is immediate to see that 
\begin{align} \label{eqn:prob_combine}
p^A_{i,j} \geq p_{i,j} \mbox{~for all~} j \in A \mbox{ and all } i \mbox{ such that } j \in M_i. 
\end{align}

Let $p_j(l)$ and $p_j^A(l)$ be the probability that $l$ customers choose supplier $j$ under ${\bar{M}}$ and ${\bar{M}}^A$, respectively. As a consequence of Equation \eqref{eqn:prob_combine}, we must have $p_j(l) \leq p_j^A(l)$ for all possible values of $l$. \\

Let $\Yj$ (resp. $\Yja$) be an indicator variable denoting whether $j \in A$ gets matched or not under ${\bar{M}}$ (resp. ${\bar{M}}^A$). We can then write
\begin{align*}
    \expt{\Yj}  = \expt{\expt{\Yj|\Xj}} = \expt{ \frac{\Xj}{\qj + \Xj}}\quad   \mbox{ and } \quad  \expt{\Yja} = \expt{\expt{\Yja|\Xja}}= \expt{ \frac{\Xja}{\qj + \Xja}},
\end{align*}

\noindent where $\Xj$ (resp.~$\Xj^A$) are random variables representing the number of customers that choose supplier $j$. These random variables can be written as 
\begin{align}
\label{eq:xdefn}
    \Xj = \sum_i \Xij \quad \mbox{ and } \quad 
    \Xja = \sum_i \Xij^A,
\end{align}
with $\Xij$ (resp. $\Xij^A$) representing an indicator random variable denoting whether customer $i$ chooses supplier $j$ or not. 
From  \Cref{eq:xdefn,eqn:prob_combine},  we can see that $\Xj^{A}$ stochastically dominates $\Xj$ for all suppliers $j \in A$. 
Consequently $\expt{\Yj^A} \geq \expt{\Yj}$. 

Then, we have that
\begin{align}
\label{eq:dominanceA}
\sum_{j \in A} \expt{\Yj} \leq \sum_{j \in A} \expt{\Yj^A}.
\end{align}

Analogously, we can repeat the above exercise for set $B$ to ultimately conclude that 
\begin{align}
\label{eq:dominanceB}
\sum_{j \in B} \expt{\Yj} \leq \sum_{j \in B} \expt{\Yj^B}.
\end{align}

 Let $OPT_A$ and $OPT_B$ denote the expected number of matches achieved by the suppliers in $A$ and $B$ respectively under the optimal menu ${\bar{M}}$. Then, 
\begin{align}
\label{eq:decomposition}
OPT(m, A \cup B) = OPT_A + OPT_B.
\end{align}
Combining  \Cref{eq:decomposition,eq:dominanceA,eq:dominanceB} we have,

\begin{align*}
    OPT(m ,A \cup B ) &= OPT_A + OPT_B \\
    &=  \sum_{j \in A} \expt{\Yj} + \sum_{j \in B} \expt{\Yj} \\
    &\leq \sum_{j \in A} \expt{\Yj^A} + \sum_{j \in B} \expt{\Yj^B} \\
    & \leq OPT(m,A) + OPT(m,B) .
\end{align*}

We show next that
\begin{align}\label{eqn:second}
    OPT(m,A) + OPT(m,B) \leq 2 \left[ OPT\left(\frac{m}{2},A\right) + OPT\left(\frac{m}{2},B\right) \right].
\end{align}

To prove the above inequality, it suffices to show that $OPT(t,X) \leq \frac{t}{t-1} OPT(t-1,X)$, for any $t \geq 2$ and any given set of suppliers $X$. 
To see why this is true, let $I$ denote the instance with $t$ customers and a set of supplier given by $X$, where  customers are shown an optimal assortment $Z$ that achieves $OPT(t,X)$.


Fix a customer $k$, discard that customer and consider the resulting number of matches in an instance $I'_k$ where $X$  is the set of suppliers and the $t-1$  remaining customers are shown the menu in $Z$.
Let $p_k(t,X)$ denote the probability that customer $k$ is matched in instance $I$. We will show that $OPT(t,X) \leq OPT(t-1,X) + p_k(t,X)$. As $k$ is arbitrary, the inequality holds for all customers.

With some abuse of notation, we override $\Yj$ and $\Yj'$ to be indicator variables denoting whether $j \in X$ gets matched in instances $I$ and $I'$, respectively. 
For all $i \in \otcust$, let  $\Xij$ for all $j\in X$ and $\Xij'$ for all $j\in X$  be the indicator random variable denoting if $i$ picks $j$ for instance $I$ and $I'$, respectively. Note that $\Xij$ and  $\Xij'$ are identically distributed for all $i \neq k$, as consumers see the same assortments in both instances and consumer choices are independent. 

Further for each $j \in X$ we have that:

$$\expt{\Yj|\Xij \text{ for all } i\in \otcust}=\frac{\sum_{i}\Xij}{\qj+\sum_{i}\Xij},
\text{ and }
\expt{\Yj}=\expt{\frac{\sum_{i}\Xij}{\qj+\sum_{i}\Xij}}.$$

Moreover, if $j \notin Z_k$, we have $\expt{\Yj}=\expt{\frac{\sum_{i}\Xij}{\qj+\sum_{i}\Xij}} = \expt{\frac{\sum_{i}\Xij'}{\qj+\sum_{i}\Xij'}} = \expt{\Yj'}$.

On the other hand, if $j \in Z_k$, we have 
\begin{eqnarray*}
 \expt{\Yj} & = & \expt{\frac{\sum_{i}\Xij}{\qj+\sum_{i}\Xij}} \\
 & \leq & \expt{\frac{\sum_{i\neq k }\Xij}{\qj+\sum_{i \neq k}\Xij  } + \frac{\Xkj}{\qj+\sum_{i}\Xij  } }\\
  & = & \expt{\frac{\sum_{i\neq k }\Xij'}{\qj+\sum_{i \neq k}\Xij'  }} + \expt{ \frac{\Xkj}{\qj+\sum_{i}\Xij  } }\\ 
 & = &  \expt{\Yj'}    + \expt{ \frac{\Xkj}{\qj+\sum_{i}\Xij  } } 
\end{eqnarray*}

Therefore, 

$$ OPT(t,X) = \sum_{j\in X} \expt{\Yj} \leq \sum_{j\in X} \expt{\Yj'} + \sum_{\{j:~j\in Z_k\}} \expt{ \frac{\Xkj}{\qj+\sum_{i}\Xij  } }   \leq OPT(t-1,X) + p_k(t,X).$$

 Therefore, by discarding a customer $k^* = \arg \min_{k} p_k(t,X)$, and maintaining the same assortment that achieves $OPT(t,X)$, the expected number of ensuing matches is at least $\frac{t-1}{t} OPT(t,X)$. Consequently, we have $OPT(t,X) \leq \frac{t}{t-1} OPT(t-1,X)$, and thus we obtain, as desired \eqref{eqn:second}.
\end{proof}

\section{Proofs omitted in Section \ref{subsec:smallSupplierLargeOutside}}
\label{appendix_general}
\renewcommand{\Xij}{\textbf{X}_{ij}}
\renewcommand{\Xijp}{\textbf{X}'_{ij}}
\newcommand{\Xjp}{\textbf{X}'_{j}}
\renewcommand{\Yj}{\textbf{Y}_{j}}
\newcommand{\Yjp}{\textbf{Y}_{j}'}
\newcommand{\Mp}{\textbf{M}'}
\renewcommand{\M}{\textbf{M}}
\newcommand{\polym}{\textbf{C}}

\lemcaseone*
\begin{proof}
 Given the menu set $\menuset$ for customers, let $\Xj$ and $\Xj'$ for all $j \in \otsupp$ be random variables counting the number of customers that match to supplier $j$ in original and new setting, respectively. Similarly let $\Yj$ and $\Yj'$ for all $j \in \otsupp$ be random variables that denotes if supplier $j$ is matched  in the original and new setting respectively. Then, 
 \begin{equation}\label{eq:case2expt}
     \expt{\textbf{M}}=\sum_{j=1}^{\supp}\expt{\Yj}\quad \mbox{and} \quad \expt{\textbf{M}'}=\sum_{j=1}^{\supp}\expt{\Yj'}.
 \end{equation}
 
 Note that the scores $\vj$ and $\vpj$ for all $j \in \otsupp$ are the same in both settings. Therefore, for all $j \in \otsupp$, the random variables $\Xj$ and $\Xj'$ are identically distributed. 
 \renewcommand{\Xjp}{\textbf{X}'_{j}}
 Further for all $j \in \otsupp \backslash S$, the outside options $\qj$ and $\qpj$ are the same and therefore $\expt{\Yj}=\expt{\Yj'}$. 
 
 We consider suppliers $j\in S$ and recall $\qj <1$ and $\qj'=1$ for all these suppliers.
 Note, $\expt{\Yj}=\expt{\frac{\Xj}{\Xj+\qj}}$ and $\expt{\Yj'}=\expt{\frac{\Xj'}{\Xj'+1}}$. Since $\Xj$ (and $\Xj'$) takes only non-negative integral value, we have $\frac{\Xj}{\Xj+1}\leq \frac{\Xj}{\Xj+\qj} \leq 2 \frac{\Xj}{\Xj+1}$ for all $\Xj \in \Z_{\geq 0}$ because $q_j \leq 1$. Combining the above observations together, we get:
 $$\expt{\frac{\Xj}{\Xj+1}} \leq \expt{\Yj}\leq 2\expt{\frac{\Xj}{\Xj+1}}\textbf { and }\expt{\Yj'}=\expt{\frac{\Xj'}{\Xj'+1}}.$$
 
 Since $\Xj$ and $\Xj'$ are identically distributed, we have:
 $$\expt{\Yj'}\leq \expt{\Yj} \leq 2\expt{\Yj'}.$$
 Summing over all $j \in \otsupp$, and using Equation \eqref{eq:case2expt}, we get
 $$\expt{\textbf{M}'} \leq \expt{\textbf{M}}\leq 2\expt{\textbf{M}'}$$
as desired. 
\end{proof}

\begin{lemma}\label{lem:polym}
Let $I$ be an instance of the problem with $\cust$ customers and $\supp$ suppliers, where $\vj \in {\R}$ and $\qj \in {\R}$ denote the score and outside option for supplier $j \in \otsupp$. Let $I'$ be the instance obtained from $I$ by removing suppliers $j$  with $\qj \geq \polym$. If $\opt$ and $\opt'$ denote the expected size of the matching under the  optimal menus for instances $I$ and $I'$ respectively, then:
$$\opt'\geq \opt-\frac{\cust}{\polym}.$$
\end{lemma}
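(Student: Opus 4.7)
The plan is to take an optimal menu set for instance $I$, restrict it to the suppliers remaining in $I'$, and show that (i) the loss from removing high-$q$ suppliers is small because such suppliers are unlikely to actually accept any matched customer, and (ii) restricting the menus can only \emph{help} the remaining (low-$q$) suppliers match, thanks to a stochastic dominance argument analogous to the one used in the proof of \Cref{thm:combine}.

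Concretely, let $\bar{\mathrm{M}} = \{\mathrm{M}_{i}\}_{i\in\otcust}$ be optimal menus achieving $\opt$ in $I$, and let $\bar{\mathrm{M}}' = \{\mathrm{M}_{i}\cap \{j : q_{j} < \polym\}\}_{i\in\otcust}$ be the restriction to the suppliers present in $I'$. Let $\textbf{X}_{j}$ count the customers selecting supplier $j$ under $\bar{\mathrm{M}}$ in $I$, and let $\textbf{X}_{j}'$ be the analogous quantity under $\bar{\mathrm{M}}'$ in $I'$, and let $\textbf{Y}_{j}, \textbf{Y}_{j}'$ denote the indicators that $j$ matches. Split
\[
\opt \;=\; \sum_{j:\, q_{j}\geq \polym} \expt{\textbf{Y}_{j}} \;+\; \sum_{j:\, q_{j}< \polym} \expt{\textbf{Y}_{j}}.
\]

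For the first sum (high-$q$ suppliers), use $\expt{\textbf{Y}_{j}} = \expt{\tfrac{\textbf{X}_{j}}{\textbf{X}_{j}+q_{j}}} \leq \expt{\textbf{X}_{j}}/\polym$, and observe that $\sum_{j}\textbf{X}_{j} = \sum_{i}\mathbf{1}[i \text{ picks some supplier}] \leq \cust$ almost surely, so that $\sum_{j: q_{j}\geq \polym}\expt{\textbf{Y}_{j}} \leq \cust/\polym$. This is the easy part.

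For the second sum, the main step is to show that, supplier by supplier, $\textbf{X}_{j}'$ stochastically dominates $\textbf{X}_{j}$ for each low-$q$ supplier $j$. Since removing suppliers from $\mathrm{M}_{i}$ only shrinks its denominator in the MNL probability, for each customer $i$ with $j\in\mathrm{M}_{i}$ the (Bernoulli) indicator that $i$ chooses $j$ in $I'$ stochastically dominates the corresponding indicator in $I$; these indicators are independent across $i$, so by \Cref{lem:stocdom} their sums $\textbf{X}_{j}'$ and $\textbf{X}_{j}$ inherit the dominance. Because $x\mapsto x/(x+q_{j})$ is nondecreasing, $\expt{\textbf{Y}_{j}'}\geq \expt{\textbf{Y}_{j}}$ for every $j$ with $q_{j}<\polym$. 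Summing and using that $\bar{\mathrm{M}}'$ is a feasible (but not necessarily optimal) menu set for $I'$, we get $\sum_{j: q_{j}<\polym}\expt{\textbf{Y}_{j}} \leq \sum_{j: q_{j}<\polym}\expt{\textbf{Y}_{j}'} \leq \opt'$. Combining the two bounds yields $\opt \leq \opt' + \cust/\polym$, which is the claim.

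The main obstacle is the stochastic dominance step, but this is essentially the same coupling argument already carried out for $\textbf{X}_{j}^{A}$ versus $\textbf{X}_{j}$ in the proof of \Cref{thm:combine}, so it can be invoked with minimal extra work. Everything else (splitting the sum and the $\expt{\textbf{Y}_{j}}\leq \expt{\textbf{X}_{j}}/q_{j}$ bound) is routine.
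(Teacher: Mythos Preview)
Your proposal is correct and follows essentially the same approach as the paper's own proof: restrict the optimal menus for $I$ to the surviving suppliers, use stochastic dominance (via \Cref{lem:stocdom}) to show the restricted menus can only help each low-$q$ supplier match, and bound the total contribution of the high-$q$ suppliers by $\cust/\polym$ using $\expt{\textbf{Y}_j}\le \expt{\textbf{X}_j}/\polym$ and $\sum_j \textbf{X}_j\le \cust$. The paper's argument is identical in structure and in the key steps.
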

\begin{proof}
Let $S$ and $S'$ be the set of all suppliers in instance $I$ and $I'$ respectively. Let $\mathrm{M}_i$ be the menu for the $i$'th customer in the optimal menu for instance $I$. Define $\mathrm{M}'_i\defeq \mathrm{M}_i \cap S'$ be the menu for $i$'th customer for instance $I'$. Let $\Yj$ for $j \in S$ and $\Yj^\prime$ for $j\in S'$ be random variables denoting if supplier $j$ is matched under instance $I$ and $I'$ respectively. Let $\M$ and $\Mp$ be the random variable denoting the number of suppliers matched under instances $I$ and $I'$ respectively. Then $\M=\sum_{j\in S}\Yj$ and $\Mp=\sum_{j\in S'}\Yj^\prime$. For all $i \in \otcust$, let $\Xij$ for all $j\in S$ and $\Xijp$ for all $j\in S'$  be the indicator random variable denoting if $i$ picks $j$ for instance $I$ and $I'$ respectively. Define $\alpha_i\defeq \sum_{j\in \M_i}\vj$ and $\alpha'_i\defeq \sum_{j\in \Mp_i}\vj$. Note for all $j\in S'$, $\prob{\Xij=1}=\frac{\vj}{1+\massi}$ and $\prob{\Xijp=1}=\frac{\vj}{1+\alpha'_i}$. Since $\alpha_i \geq \alpha'_i$ and $\Xij,\Xijp$ are indicator random variables, we have that $\Xijp$ stochastically dominates $\Xij$. Further since all $\Xij$ and $\Xijp$ for all $i\in \otcust$ are independent random variables, we have that $\Xjp=\sum_{i\in \otcust}\Xijp$ stochastically dominates  $\Xj=\sum_{i\in \otcust}\Xij$ (using \Cref{lem:stocdom} inductively). This further implies $\expt{\Yj^\prime} \geq \expt{\Yj}$, because $\Yj$ and $\Yjp$ are monotone functions in $\Xj$ and $\Xjp$ respectively. Further for each $j \in S \backslash S'$,
$$\expt{\Yj|\Xij \text{ for all } i\in \otcust}=\frac{\sum_{i}\Xij}{\qj+\sum_{i}\Xij},
\text{ and }
\expt{\Yj}=\expt{\frac{\sum_{i}\Xij}{\qj+\sum_{i}\Xij}}.$$
Using the above expression, we now provide an upper bound for the quantity $\sum_{j\in S\backslash S'}\expt{\Yj}$,
\begin{equation}\label{eq:upp}
    \begin{split}
        \sum_{j\in S\backslash S'}\expt{\Yj}=\sum_{j\in S\backslash S'}\expt{\frac{\sum_{i}\Xij}{\qj+\sum_{i}\Xij}}\leq \sum_{j\in S\backslash S'} \expt{\frac{\sum_{i}\Xij}{\polym}} \leq \frac{\cust}{\polym}~.
    \end{split}
\end{equation}

For the final inequality, note that $\sum_{j \in S\backslash S'}\sum_{i}\Xij$ is always upper bounded by the number of customers $\cust$. Now we are ready to lower bound the value of $\opt'$,
$$\opt' =\sum_{j \in S'}\expt{\Yj^\prime} \geq \sum_{j \in S}\expt{\Yj}-\sum_{j \in S\backslash S'}\expt{\Yj} \geq \opt -\frac{\cust}{\polym}.$$

For the second inequality, we used the fact that $\expt{\Yjp} \geq \expt{\Yj}$ for all $j \in S'$. For the last inequality, we used Equation \eqref{eq:upp}.
\end{proof}
\subsection{Proofs omitted in \Cref{sec_lp_relax_gen}}
\label{appendix:lp_relaxation_general}
For convenience, we start by restating optimization problem \eqref{eq:foptg} 
    \begin{equation*}
        \begin{split}
     \fopt= & \max_{x} \sum_{k \in \otbuck}\min \left \{\frac{2}{\qr}\sum_{i \in \otcust}\frac{\wl \xik}{1+\sum_{k' \in \otbuck}\wlp \xikp},|\Sk| \right \}\\
    \text{s.t.}& ~ 0 \leq \xik \leq |\Sk|\text{ for all }i\in \otcust \text{ and }k\in \otbuck ~.
        \end{split}
    \end{equation*}
     
\lemfoptg*
\begin{proof}
    Let $\{\Mstari\}_{i \in \otcust}$ be the optimal menu, where  $\Mstari$ is menu for customer $i$. For each bucket $k \in \otbuck$,  define $\xstar_{i,k}$ to be the number of suppliers from bucket $k$ present in menu $\Mstari$. Here we aim to upper bound the expected number of matches achieved by optimal menu. 
    
    Let $\Xk$ be the random variable counting the number of customers that choose a supplier in bucket $k$, given $\{\Mstari\}_{i \in \otcust}$.  Then, by \Cref{eq:gexptxk}, we have
       
        $$\expt{\Xk}\leq 2 \sum_{i=1}^{\cust}\frac{\wl \xstar_{i,k}}{1+\sum_{k' \in \otbuck}\wlp \xstar_{i,k'}}~.    $$
    Let $\Yk$ be the random variable counting the number of matches in bucket $k$ for the optimal menu. Then the expected number of matches from bucket $k$ is equal to $\expt{\Yk}$ and, by \Cref{eq:exptykub}, this quantity is upper bounded by:
    \begin{equation*}
        \expt{\Yk} \leq \min \left \{\frac{2}{\qr}\sum_{i \in \otcust}\frac{\wl \xstar_{i,k}}{1+\sum_{k' \in \otbuck}\wlp \xstar_{i,k'}},|\Sk| \right \}. 
    \end{equation*}
    
    Further note that $0 \leq \xstar_{i,k} \leq |\Sk|$ and is a feasible solution to optimization problem \eqref{eq:foptg}. The lemma statement then follows by summing over all buckets $k$.
    \end{proof}
    
    In the remaining part of this section we provide the proof for \Cref{lem:lpoptg} by constructing a sequence of relaxations to problem \eqref{eq:foptg}. To that end, consider the following optimization problem.
    \begin{equation}\label{eq:optoneg}
    \begin{split}
     \optoneg  = &  \max_{x} \sum_{k} \min \left \{\frac{2}{\qr}\sum_{i \in \otcust}\frac{\wl \xik}{1+\sum_{k'}\wlp \xikp},|\Sk| \right \} \\
        \text{s.t.}& ~ \sum_{k' \in \otbuck}\wlp \xikp \leq 1 \text{ for all }i\in \otcust,\\ 
     &~0 \leq \xik \leq |\Sk|\text{ for all }i\in \otcust \text{ and }k\in \otbuck ~.
    \end{split}
    \end{equation}

    \begin{lemma}
    \label{lem:optoneg}  $\optoneg \leq \fopt \leq 2 \optoneg $.
    \end{lemma}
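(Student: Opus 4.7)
The plan is to prove the two inequalities of $\optoneg \leq \fopt \leq 2\optoneg$ separately.

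For $\optoneg \leq \fopt$, the argument is immediate: problem \eqref{eq:optoneg} is obtained from problem \eqref{eq:foptg} by imposing the extra constraint $\sum_{k' \in \otbuck} \wlp \xikp \leq 1$ for every $i \in \otcust$. Hence any feasible solution to \eqref{eq:optoneg} is also feasible for \eqref{eq:foptg} with the same objective value, so $\optoneg \leq \fopt$.

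For the reverse direction $\fopt \leq 2\optoneg$, the plan is to take an optimal solution $\xstar$ of \eqref{eq:foptg} and build a feasible solution $\tilde{x}$ of \eqref{eq:optoneg} whose objective value is at least $\fopt/2$. For each $i \in \otcust$, let $s_i := \sum_{k' \in \otbuck} \wlp \xstar_{i,k'}$, and define $\tilde{x}_{i,k} := \xstar_{i,k}$ if $s_i \leq 1$, and $\tilde{x}_{i,k} := \xstar_{i,k}/s_i$ if $s_i > 1$. Feasibility for \eqref{eq:optoneg} is straightforward: by construction $\sum_{k' \in \otbuck} \wlp \tilde{x}_{i,k'} \leq 1$, and since we only ever scale down, the box constraint $0 \leq \tilde{x}_{i,k} \leq \xstar_{i,k} \leq |\Sk|$ is preserved.

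The core calculation is the per-customer bound
$$\frac{\wl \tilde{x}_{i,k}}{1 + \sum_{k' \in \otbuck} \wlp \tilde{x}_{i,k'}} \geq \frac{1}{2} \cdot \frac{\wl \xstar_{i,k}}{1 + s_i},$$
which is trivial when $s_i \leq 1$, and when $s_i > 1$ follows by substitution since then the left-hand side equals $\wl \xstar_{i,k}/(2 s_i)$ while the right-hand side equals $\wl \xstar_{i,k}/(2(1+s_i))$, and $s_i \leq 1+s_i$. Summing over $i$ and multiplying by $2/\qr$, the first entry of the min in the bucket-$k$ term of the objective of \eqref{eq:optoneg} evaluated at $\tilde{x}$ is at least half of the corresponding expression at $\xstar$. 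Combined with the elementary inequality $\min(a/2, b) \geq \min(a,b)/2$ for $a,b \geq 0$, we conclude that the bucket-$k$ contribution at $\tilde{x}$ is at least half of the one at $\xstar$. Summing over all $k \in \otbuck$ yields $\optoneg \geq \tfrac{1}{2}\fopt$.

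This argument is essentially routine; there is no real obstacle. The only subtlety is verifying that the uniform per-customer rescaling by $1/s_i$ does not violate the box constraint $\xik \leq |\Sk|$ (which it does not, since we only scale down) and that the outer $\min\{\cdot, |\Sk|\}$ behaves correctly under the factor-$2$ loss, which is handled by the elementary $\min$ inequality above.
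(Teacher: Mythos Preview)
Your proof is correct and follows essentially the same approach as the paper: both directions are handled identically, and for $\fopt \leq 2\optoneg$ you use the same per-customer rescaling $\tilde{x}_{i,k} = \xstar_{i,k}/\max(1,s_i)$ and the same term-by-term comparison. Your write-up is in fact slightly more careful than the paper's, since you explicitly verify preservation of the box constraint $\tilde{x}_{i,k} \leq |\Sk|$ and invoke the elementary inequality $\min(a/2,b) \geq \tfrac{1}{2}\min(a,b)$ to pass the factor through the outer $\min$.
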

    \begin{proof}
    The first inequality $\optoneg \leq \fopt$ holds because the feasible region in problem \eqref{eq:optoneg} is contained in the one of problem \eqref{eq:foptg}, and both problems have the same objective.
    
    To establish the second inequality, let $\xstar$ be the optimal solution to optimization problem \eqref{eq:foptg} and let $\xstar_{i}\defeq \sum_{k'}\wlp \xstar_{i,k'}$. We now construct a feasible solution $y$ for  problem \eqref{eq:optoneg} as follows: For each $i \in \otcust$,

    \begin{equation*}
        \text{If }\xstar_{i} \leq 1, \text{ then define  }y_{i,k}\defeq\xstar_{i,k} \text{ for all }k \in \otbuck.
    \end{equation*}
    \begin{equation*}
        \text{If }\xstar_{i} >1, \text{ then define  }y_{i,k}\defeq\xstar_{i,k}\frac{1}{\xstar_{i}} \text{ for all }k \in \otbuck.
    \end{equation*}
Let $\yi\defeq\sum_{k\in \otbuck}\yik$. Now for each $i \in \otcust$, $\yi \leq 1$ as

    \begin{equation*}
        \text{If }\xstar_{i} \leq 1, \text{ then }\yi=\sum_{k\in \otbuck}\wk y_{i,k}=\sum_{k\in \otbuck}\wk\xstar_{i,k}=\xstar_{i}\leq 1.
    \end{equation*}

    \begin{equation*}
        \text{If }\xstar_{i} >1, \text{ then }\yi=\sum_{k\in \otbuck}\wk y_{i,k}=\sum_{k\in \otbuck}\left(\wk \xstar_{i,k}\frac{1}{\xstar_{i}}\right)=1.
    \end{equation*}
    Therefore, $y$ is a feasible solution for  problem \eqref{eq:optoneg} and, further, for each $i \in \otcust$ consider each term in the objective, 
    \begin{equation*}
        \text{If }\xstar_{i} \leq 1, \text{ then }\frac{\wk \yik}{1+\yi}=\frac{\wk \xstar_{i,k}}{1+\xstar_{i}}.
    \end{equation*}
    \begin{equation*}
        \text{If }\xstar_{i} > 1, \text{ then }\frac{\wk \yik}{1+\yi}=\frac{\wk \frac{\xstar_{i,k}}{\xstar_{i}}}{1+1}=\frac{\wk\xstar_{i,k}}{2\xstar_{i}} \geq \frac{\wk\xstar_{i,k}}{2(1+\xstar_{i})}. 
    \end{equation*}
    The lemma statement follows because $y$ approximates $\xstar$ by $1/2$ for each summation term in the objective.
        \end{proof}
        
         Next we provide the second optimization problem that approximates optimization problem \eqref{eq:optoneg} and in the following lemma we provide the explicit approximation factor. 
         
    \begin{equation}\label{eq:opttwog}
        \begin{split}
     \opttwog= & \max_{x}             \sum_{k\in \otbuck}\min\left \{ \frac{2}{\qr} \sum_{i=1}^{\cust}  \wk \xik, |\Sk|\right \} \\
        \text{s.t.}& ~ \sum_{k' \in \otbuck}\wlp \xikp \leq 1 \text{ for all }i\in \otcust,\\ 
     &~0 \leq \xik \leq |\Sk|\text{ for all }i\in \otcust \text{ and }k\in \otbuck ~.
        \end{split}
    \end{equation}
    
    \begin{lemma}[{$\optoneg$ and $\opttwog$ approximate each other}]\label{lem:opttwog}  $\optoneg\leq \opttwog\leq 2  \optoneg$.
    \end{lemma}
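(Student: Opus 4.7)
The plan is to observe that the only difference between problems \eqref{eq:optoneg} and \eqref{eq:opttwog} lies in the inner summand: $\optoneg$ contains the ratio $\frac{\wl\xik}{1+\sum_{k'}\wlp\xikp}$, while $\opttwog$ contains $\wl\xik$. Crucially, both problems share the \emph{same feasible region}, in which the constraint $\sum_{k'}\wlp\xikp\leq 1$ forces the denominator $1+\sum_{k'}\wlp\xikp$ to lie in $[1,2]$. Consequently, for any feasible $x$ and any bucket $k$,
\begin{equation*}
\tfrac{1}{2}\cdot\tfrac{2}{\qr}\sum_{i}\wl\xik \;\leq\; \tfrac{2}{\qr}\sum_{i}\tfrac{\wl\xik}{1+\sum_{k'}\wlp\xikp} \;\leq\; \tfrac{2}{\qr}\sum_{i}\wl\xik.
\end{equation*}

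The first step is to push this pointwise sandwich through the $\min\{\cdot,|\Sk|\}$ operator. For any nonnegative $a,b$ and any $\lambda\in[1/2,1]$, one has $\lambda\cdot\min\{a,b\}\leq \min\{\lambda a,b\}\leq \min\{a,b\}$; the upper inequality is immediate from monotonicity and the lower one follows by a trivial two-case argument on whether $a\leq b$ or not. Applying this with $a=\frac{2}{\qr}\sum_i \wl\xik$, $b=|\Sk|$, and $\lambda=1/(1+\sum_{k'}\wlp\xikp)\in[1/2,1]$, and summing over $k$, I obtain that the objective values $\mathrm{obj}_1(x)$ and $\mathrm{obj}_2(x)$ of problems \eqref{eq:optoneg} and \eqref{eq:opttwog} satisfy
\begin{equation*}
\tfrac{1}{2}\,\mathrm{obj}_2(x) \;\leq\; \mathrm{obj}_1(x) \;\leq\; \mathrm{obj}_2(x) \qquad\text{for every feasible }x.
\end{equation*}

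The second step is a standard optimality argument on the shared feasible region. Let $\xstar_1$ and $\xstar_2$ be optimizers of $\optoneg$ and $\opttwog$ respectively. Then
\begin{equation*}
\optoneg \;=\; \mathrm{obj}_1(\xstar_1) \;\leq\; \mathrm{obj}_2(\xstar_1) \;\leq\; \mathrm{obj}_2(\xstar_2) \;=\; \opttwog,
\end{equation*}
and symmetrically
\begin{equation*}
\opttwog \;=\; \mathrm{obj}_2(\xstar_2) \;\leq\; 2\,\mathrm{obj}_1(\xstar_2) \;\leq\; 2\,\mathrm{obj}_1(\xstar_1) \;=\; 2\,\optoneg,
\end{equation*}
which yields the two desired inequalities.

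There is no real obstacle here—the lemma is essentially a bookkeeping step. The only subtlety is that the min-of-an-upper-bound is \emph{not} in general within a factor $2$ of the min-of-the-original (one needs the scaling factor $\lambda$ to be bounded below by $1/2$), so I would make sure to spell out the elementary inequality $\min\{\lambda a,b\}\geq \lambda \min\{a,b\}$ for $\lambda\in[1/2,1]$ before applying it termwise. Everything else is transparent, and the argument uses nothing beyond the identical constraints of the two programs and the $\sum_{k'}\wlp\xikp\leq 1$ bound.
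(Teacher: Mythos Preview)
Your argument is correct and mirrors the paper's proof: both exploit that the shared constraint $\sum_{k'}\wlp\xikp\le 1$ pins the denominator in $[1,2]$, giving a termwise sandwich that passes through the $\min\{\cdot,|\Sk|\}$ and then through the maximum over the common feasible region. One small notational slip: the ``$\lambda=1/(1+\sum_{k'}\wlp\xikp)$'' you write depends on $i$, so $\lambda a$ is not literally the $\optoneg$ summand; the clean fix is to take $\lambda=A/B$ with $A=\tfrac{2}{\qr}\sum_i\tfrac{\wl\xik}{1+\sum_{k'}\wlp\xikp}$ and $B=\tfrac{2}{\qr}\sum_i\wl\xik$, which your displayed sandwich already shows lies in $[1/2,1]$.
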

    \begin{proof}The left hand side inequality $\optoneg\leq \opttwog$ is trivial and follows by comparing the objectives in both problems. We now show the remaining inequality. Under the constraint $\sum_{k'\in \otbuck}\wkp \xikp \leq 1$ for all $i\in \otcust$, we have 
    $$ \frac{1}{2\qr} \sum_{i=1}^{\cust}\wk\xik \leq \frac{1}{\qr} \sum_{i=1}^{\cust}\frac{\wk\xik}{1+\sum_{k'\in \otbuck}\wkp \xikp} \text{ for all } k \in \otbuck.$$ 
    
    The above inequality further implies that,  for each $k \in \otbuck$ 
    $$\min\left \{ \frac{2}{\qr} \sum_{i=1}^{\cust}\frac{\wk\xik}{1+\sum_{k'\in \otbuck}\wkp \xikp}, |\Sk|\right \} \geq \min\left \{ \frac{2}{2\qr}\sum_{i=1}^{\cust} \wk\xik, |\Sk|\right \}
    \geq \frac{1}{2}\min\left \{\frac{2}{\qr}\sum_{i=1}^{\cust} \wk\xik, |\Sk|\right \} $$
  The proof follows by summing the above equation for all $k \in \otbuck$.
    \end{proof}
    We now recall our LP formulation that upper bounds the expected number of matches  in the optimal menu. Further, in our following lemma which show that optimum values of LP formulation and optimization problem \eqref{eq:opttwog} are equal. Now recall optimization problem \eqref{eq:lpoptg},
       \begin{equation*}
        \begin{split}
     \lpoptg = &   \max_{x} \sum_{k} \frac{2}{\qr}\sum_{i \in \otcust}\wl \xik \\
   \text{s.t.}& ~ \sum_{k' \in \otbuck}\wlp \xikp \leq 1 \text{ for all }i\in \otcust,\\ 
     &~\frac{2}{\qr}\sum_{i=1}^{m}\wl \xik \leq |\Sk| \text{ for all }k \in \otbuck \\
     &~0 \leq \xik \leq |\Sk|\text{ for all }i\in \otcust \text{ and }k\in \otbuck ~.
        \end{split}
    \end{equation*}
   
    \begin{lemma}\label{lem:twolpg}
     $\opttwog = \lpoptg$
    \end{lemma}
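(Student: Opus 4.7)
The plan is to show both inequalities $\opttwog \leq \lpoptg$ and $\lpoptg \leq \opttwog$ by a straightforward comparison of feasible regions and objectives. The key observation is that $\lpoptg$ is obtained from $\opttwog$ by linearizing the $\min$ in the objective and moving the clipping $|\Sk|$ into an explicit linear constraint, so the two should be equivalent via a scaling trick.

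For the direction $\lpoptg \leq \opttwog$, I would take any feasible $x$ for $\lpoptg$. The constraint $\frac{2}{\qr}\sum_{i=1}^{m}\wl \xik \leq |\Sk|$ guarantees that for each bucket $k$
\begin{equation*}
\min\left\{\frac{2}{\qr}\sum_{i=1}^{\cust} \wk \xik,\ |\Sk|\right\} = \frac{2}{\qr}\sum_{i=1}^{\cust} \wk \xik.
\end{equation*}
Since the remaining constraints of $\lpoptg$ coincide with those of $\opttwog$, the same $x$ is feasible for $\opttwog$ and achieves exactly the same objective value, proving $\lpoptg \leq \opttwog$.

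For the reverse direction $\opttwog \leq \lpoptg$, the idea is to take an optimal solution $\xstar$ for $\opttwog$ and ``clip'' it per bucket so that it becomes feasible for $\lpoptg$ without losing any objective value. Concretely, for each bucket $k$ with $\frac{2}{\qr}\sum_{i}\wk \xstar_{i,k} > |\Sk|$, define a scaling factor $\lambda_k = \frac{\qr |\Sk|}{2\sum_{i}\wk \xstar_{i,k}} \in (0,1)$ and set $y_{i,k} = \lambda_k \xstar_{i,k}$; for all other buckets, set $y_{i,k} = \xstar_{i,k}$. Then by construction $\frac{2}{\qr}\sum_{i}\wk y_{i,k} \leq |\Sk|$ for every $k$. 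Moreover, $y_{i,k} \leq \xstar_{i,k}$ coordinatewise, so the box constraint $0 \leq y_{i,k} \leq |\Sk|$ and the budget constraint $\sum_{k'}\wkp y_{i,k'} \leq \sum_{k'}\wkp \xstar_{i,k'} \leq 1$ both carry over, showing that $y$ is feasible for $\lpoptg$.

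It remains to check that the $\lpoptg$-objective at $y$ equals the $\opttwog$-objective at $\xstar$. By the choice of $\lambda_k$, for buckets where the clipping was applied we have $\frac{2}{\qr}\sum_{i}\wk y_{i,k} = |\Sk| = \min\{\frac{2}{\qr}\sum_{i}\wk \xstar_{i,k},\ |\Sk|\}$, and for the remaining buckets the values are unchanged and equal to $\frac{2}{\qr}\sum_{i}\wk \xstar_{i,k} = \min\{\frac{2}{\qr}\sum_{i}\wk \xstar_{i,k},\ |\Sk|\}$. Summing over $k$ yields $\sum_{k} \frac{2}{\qr}\sum_{i}\wk y_{i,k} = \opttwog$, so $\lpoptg \geq \opttwog$. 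Combining both directions gives the equality. No step presents a serious obstacle here; the only thing to be mildly careful about is handling buckets with $\sum_i \wk \xstar_{i,k} = 0$, which are trivial since no scaling is needed.
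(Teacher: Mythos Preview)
Your proof is correct and follows essentially the same approach as the paper: both show $\lpoptg \leq \opttwog$ by noting that the extra constraint in $\lpoptg$ forces the $\min$ to equal its first argument, and both show $\opttwog \leq \lpoptg$ by scaling down $\xstar$ bucket-by-bucket via the factor $\lambda_k = \frac{\qr |\Sk|}{2\sum_i \wk \xstar_{i,k}}$ on overfull buckets, then checking feasibility and matching the objective.
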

    
    \begin{proof}Clearly, $\lpoptg \leq \opttwog$. Next we show the other inequality. Let $\xstar$ be the optimum solution for $\opttwog$. Now consider the following solution,
    $$y_{i,k}=\xstar_{i,k} \frac{\qr}{2}\frac{|\Sk|}{\sum_{i'=1}^{\cust}\wk\xstar_{i',k} } \text{ for all }i \in \otcust \text { if } \frac{2}{\qr} \sum_{i'=1}^{\cust}\wk\xstar_{i',k} > |\Sk|$$
    $$y_{i,k}=\xstar_{i,k} \text{ for all }i \in \otcust \text { if } \frac{2}{\qr} \sum_{i'=1}^{\cust}\wk\xstar_{i',k} \leq |\Sk|.$$
    Since $|\Sk|\geq 1$, it is not hard to see that $y_{i,k} \leq \xstar_{i,k}$ for all $i \in \otcust$ and $k\in \otbuck$. Further for each bucket $k\in \otbuck$, 
    \begin{equation}\label{eq:1}
    \text{If }\frac{2}{\qr} \sum_{i=1}^{\cust}\wk\xstar_{i,k} \leq |\Sk|, \text{ then } \frac{2}{\qr} \sum_{i=1}^{\cust}\wk y_{i,k}= \frac{2}{\qr}\sum_{i=1}^{\cust}\wk\xstar_{i,k} \leq |\Sk|
    \end{equation}
    
    \begin{equation}\label{eq:2}
    \text{If } \frac{2}{\qr} \sum_{i=1}^{\cust}\wk\xstar_{i,k} > |\Sk|, \text{ then } \frac{2}{\qr} \sum_{i=1}^{\cust}\wk y_{i,k}=\sum_{i=1}^{\cust} \wk \xstar_{i,k}\frac{|\Sk|}{\sum_{i'=1}^{\cust}\wk\xstar_{i',k}}=|\Sk|
    \end{equation}
    
   \noindent Therefore $y$ is a feasible solution to  $\lpoptg$ and its objective value is,
    \begin{equation}
    \begin{split}
    \sum_{k\in \otbuck} \frac{2}{\qr}\sum_{i=1}^{\cust}\wk y_{i,k}&=\sum_{\{k~|~ \frac{2}{\qr}\sum_{i=1}^{\cust}\wk\xstar_{i,k} > |\Sk| \}}\frac{2}{\qr}\sum_{i=1}^{\cust}\wk y_{i,k}+\sum_{\{k~|~ \frac{2}{\qr} \sum_{i=1}^{\cust}\wk\xstar_{i,k} \leq |\Sk| \}}\frac{2}{\qr}\sum_{i=1}^{\cust}\wk y_{i,k}\\
    &=\sum_{\{k~|~ \frac{2}{\qr}\sum_{i=1}^{\cust}\wk\xstar_{i,k} > |\Sk| \}} |\Sk|+\sum_{\{k~|~\frac{2}{\qr} \sum_{i=1}^{\cust}\wk\xstar_{i,k} \leq |\Sk| \}} \frac{2}{\qr}\sum_{i=1}^{\cust}\wk \xstar_{i,k}\\
    &=\sum_{k\in \otbuck}\min \left \{ \frac{2}{\qr}\sum_{i=1}^{\cust}\wk \xstar_{i,k} ,|\Sk| \right \}
    \end{split}
    \end{equation}
    The second equality follows from equations  \eqref{eq:1} and \eqref{eq:2}. Therefore $\lpoptg \geq \opttwog$ as desired. 
    \end{proof}
    
    \finaloptg*
    
    \begin{proof}
    The lemma follows by combining  \cref{lem:optoneg}, \cref{lem:opttwog} and \cref{lem:twolpg}.
    \end{proof}
    

\subsection{Proofs omitted in \Cref{rounding_general}}
\label{appendix:rounding_menu_general}
We first state an observation that will be useful for the proofs in this section.
\begin{observation}[\Cref{lem:polym}]\label{obs:remove} 
We can assume $|\Bt|=\cust$ by ignoring all suppliers with $\qj \geq 2^{\cust}$ and only incur a loss of additive $\frac{\cust}{2^{\cust}}$ in expected size of the matching. Thereby, we have that
  \begin{align*}
      \sum_{\ell \in \Bo}\ell &\leq 2 \quad \text{ and } \quad \sum_{k\in \otbuck}\wl \leq |\Bt|\sum_{\ell \in \Bo}\ell \leq 2\cust.
  \end{align*}
\end{observation}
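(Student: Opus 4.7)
The proof plan mirrors the two assertions in the observation, and each part reduces to invoking earlier machinery.

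First, to justify taking $|\Bt|=\cust$, I would apply \Cref{lem:polym} with the threshold $\polym=2^{\cust}$. After the reduction of \Cref{lem:case11}, every supplier has $\qj\geq 1$ and so the index $\kt\geq 0$; ignoring suppliers with $\qj\geq 2^{\cust}$ removes exactly the buckets with $\kt\geq \cust$, leaving $\kt\in\{0,1,\ldots,\cust-1\}$ and thus $|\Bt|\leq \cust$. By \Cref{lem:polym}, the optimal expected number of matches in the truncated instance is at least $\opt-\cust/2^{\cust}$, an additive loss that is negligible (exponentially small in $\cust$) and can be absorbed into the constant-factor guarantee proved later.

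Second, to bound $\sum_{\ell\in\Bo}\ell$, I would observe that the distinct representative scores are of the form $\wl=1/2^{\ko}$ for integer $\ko\geq 0$ (since the assumption $\vj\leq 1$ forces $\ko\geq 0$). Therefore $\Bo\subseteq\{2^{-\ko}:\ko\in\Z_{\geq 0}\}$, and summing the geometric series gives
\[
\sum_{\ell\in\Bo}\ell\;\leq\;\sum_{\ko=0}^{\infty}2^{-\ko}\;=\;2.
\]

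Third, for the double sum $\sum_{k\in\otbuck}\wl$, I would decompose over the second coordinate of the bucket index $k=(\ko,\kt)$: for each fixed $\kt\in\Bt$, the inner sum $\sum_{\{\ko:(\ko,\kt)\in\otbuck\}}w_{\ko}$ is bounded by $\sum_{\ell\in\Bo}\ell\leq 2$. Summing over the at most $\cust$ values of $\kt\in\Bt$ yields $\sum_{k\in\otbuck}\wl\leq |\Bt|\cdot 2\leq 2\cust$, which is the last inequality stated.

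The only real subtlety is confirming that the truncation loss $\cust/2^{\cust}$ does not affect the constant-factor approximation downstream; this should follow because any nontrivial instance has $\opt=\Omega(1)$ (e.g., by showing any single supplier with $\wl\cdot\qr^{-1}$ not too small already contributes $\Omega(1)$), so the exponentially small loss can be safely dropped. Everything else is bookkeeping on geometric sums and bucket indexing.
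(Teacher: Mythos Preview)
Your proposal is correct and matches the paper's intended argument: the observation is stated without proof in the paper, merely citing \Cref{lem:polym}, and your three steps (applying \Cref{lem:polym} with $\polym=2^{\cust}$ to truncate, summing the geometric series of the $w_{k_1}=2^{-k_1}$, and then multiplying by $|\Bt|$) are precisely the details being left implicit. The only cosmetic point is that the paper writes $|\Bt|=\cust$ where you (correctly) derive $|\Bt|\leq \cust$; the equality is loose wording and your reading is the right one.
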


\rounding*
 \begin{proof}
 \renewcommand{\cm}{c}
Let $\xstar$ be an optimal solution to \eqref{eq:lpoptg}, and $\lpoptg$ be the corresponding optimal value, and let $x$ be the corresponding output of the rounding algorithm. For each bucket $k$,
    \begin{equation}\label{eq:k1}
        \begin{split}
            \sum_{i=1}^{\cust} \wl \xik&=\sum_{\{i:~\xstar_{ik}<1 \}}\wl \xik+\sum_{\{i:~\xstar_{ik}\geq 1 \}}\wl \xik, 
            \geq \wl \ceil{\sk}+\frac{1}{2}\sum_{\{i:~\xstar_{ik}\geq 1 \}}\wl \xstar_{ik} \\
            &\geq \wl \sum_{\{i:~\xstar_{ik}<1 \}} \xstar_{ik} + \frac{1}{2}\sum_{\{i:~\xstar_{ik}\geq 1 \}}\wl \xstar_{ik}
            \geq\frac{1}{2} \sum_{i=1}^{\cust}\wl\xstar_{ik}~. 
        \end{split}
    \end{equation}
    The second inequality holds because $\xik \geq \frac{1}{2}\xstar_{i,k}$ for all $i,k$ such that $\xstar_{ik}\geq 1$, and $\sum_{\{i:~\xstar_{ik}<1 \}}\xik=\ceil{\sk}$. For the third inequality, note that $\sk=\sum_{\{i:~\xstar_{ik}<1 \}}\xstar_{i,k}$ and $\sk \leq \ceil{\sk}$.
    
    Now, multiplying both sides of Equation \eqref{eq:k1} by $2/\qr$, we get $\frac{2}{\qr}\sum_{i} \wl \xik \geq \frac{1}{2}\frac{2}{\qr} \sum_{i} \wl \xstar_{i,k}$ that further implies $\min \{\frac{2}{\qr} \sum_{i} \wl \xik,|\Sk| \} \geq \frac{1}{2} \min \{\frac{2}{\qr}\sum_{i} \wl \xstar_{i,k},|\Sk|\}$. Now summing over all buckets $k \in \otbuck$, we get:
    $$\sum_{k \in \otbuck}\min \left\lbrace \frac{2}{\qr}\sum_{i=1}^{\cust} \wl \xik,|\Sk| \right\rbrace \geq \frac{1}{2}\sum_{k \in \otbuck}\frac{2}{\qr}\sum_{i=1}^{\cust} \wl \xstar_{i,k} = \lpoptg/2~.$$
    The first inequality follows because $\frac{2}{\qr}\sum_{i} \wl \xstar_{i,k} \leq |\Sk|$ is a constraint in problem \eqref{eq:lpoptg}.
    
    We next provide a proof for the second condition. Given an assignment $x$, we denote $\tm=\sum_{i=1}^{\cust}\sum_{\{k \in \otbuck~|~\xstar_{i,k} <1 \}}\wl\xik$ and for each $l\in \Bo$ define $\ml\defeq \sum_{i=1}^{\cust}\sum_{\{k\in \otbuck\in \otbuck~|~\wl=\ell \text{ and }\xstar_{i,k}<1\}}\xik$ and note that $\ml=\sum_{\{k\in \otbuck \text{ and }\wl=\ell  \}}\ceil{\sk}$. Furthermore, $\tm=\sum_{l \in \Bo}l*\ml$. We now provide an upper bound for $\tm$: 
    \begin{equation}\label{eq:tm}
        \begin{split}
    \tm=\sum_{i=1}^{\cust}\sum_{\{k \in \otbuck~|~\xstar_{i,k} <1 \}}\wl\xik&=\sum_{k \in \otbuck}\wl\sum_{\{i \in \otcust~|~\xstar_{i,k} <1 \}}\xik=\sum_{k \in \otbuck }\wl\ceil{\sk}\\
    & \leq \sum_{k \in \otbuck}\wl\sum_{\{i \in \otcust~|~\xstar_{i,k} <1 \}}\xstar_{i,k}+\sum_{k\in \otbuck}\wl \leq \cust+2\cust= 3\cust.
        \end{split}
    \end{equation}
    In the final expression, we exchanged the orders of summation. Further we used Observation \ref{obs:remove} and also used the fact that for each customer $i \in \otcust$, $\sum_{k \in \otbuck}\wl \xstar_{i,k} \leq 1$, which is a constraint in problem \ref{eq:lpoptg}. 
    
    Now fix $l \in \Bo$, for each $i \in \otcust$, we next give an upper bound for the quantity $\sum_{\{k \in \otbuck~|~\wl=\ell \text{ and }\xstar_{i,k}<1 \}}\wl \xik$.
    \begin{equation}
        \begin{split}
            \sum_{\{k \in \otbuck~|~\wl=\ell \text{ and }\xstar_{i,k}<1 \}}\wl \xik=\ell*\sum_{\{k \in \otbuck~|~\wl=\ell \text{ and }\xstar_{i,k}<1 \}}\xik\leq \ell*\left(\frac{\ml}{m}+1\right)~.
        \end{split}
    \end{equation}
    The final inequality in the above expression follows from the assignment of the \emph{rounding} algorithm. We are now ready to prove the second condition. For each customer $i \in \otcust$,
    \begin{equation}
    \begin{split}
        \sum_{k \in \otbuck}\wl \xik &=\sum_{\{k\in \otbuck~|~\xstar_{i,k}<1 \}}\wl \xik + \sum_{\{k\in \otbuck~|~\xstar_{i,k} \geq 1 \}}\wl \xik,\\
        &=\sum_{\ell \in \Bo}\ell\sum_{\{k\in \otbuck~|~\wl=\ell \text{ and }\xstar_{i,k}<1 \}} \xik + 1
        =\sum_{\ell \in \Bo}\ell \yil + 1, \\
        &\leq \sum_{\ell \in \Bo}\ell*\left(\frac{\ml}{m}+1\right)+1
        =\frac{\tm}{m}+\sum_{\ell \in \Bo}\ell+1\leq \cm + o(1).
    \end{split}
    \end{equation}
    In the second equality, we rearranged the summation and used the fact that $\sum_{\{k\in \otbuck:~\xstar_{i,k} \geq 1 \}}\wl \xik = \sum_{\{k:~\xstar_{i,k} \geq 1 \}}\wl \floor{\xstar_{i,k}} \leq\sum_{\{k:~\xstar_{i,k} \geq 1 \}}\wl \xstar_{i,k} \leq 1$ (constraint of optimization problem \ref{eq:lpoptg}). In the third equality, we used the fact that $\sum_{\{k\in \otbuck:~\wl=\ell \text{ and }\xstar_{i,k}<1 \}} \xik=\yil$ for all $\ell \in \Bo$ and $i \in \otcust$ and this follows from the assignment of \emph{rounding} algorithm. For the fourth inequality, note that $\yil\leq 1+ \frac{1}{\cust}\sum_{\{k\in \otbuck \text{ and }\wl=\ell  \}}\ceil{\sk}=1+\frac{1}{\cust}\ml$. In the final inequality, we used $\frac{\tm}{\cust}+\sum_{\ell \in \Bo}\ell+1\leq 1+O(\frac{\log \cust}{\cust})+2+1\leq \cm$ for some constant $\cm$.

    All that remains to show is the third condition: $\frac{2}{\qr}\sum_{i \in \otcust}\wl \xik \leq |\Sk|+\frac{2\wl}{\qr}$.
    \begin{equation}
        \begin{split}
            \frac{2}{\qr}\sum_{i \in \otcust}\wl \xik & = \frac{2\wl}{\qr}\sum_{\{i:~\xstar_{ik}<1 \}} \xik+\frac{2\wl}{\qr}\sum_{\{i:~\xstar_{ik}>1 \}} \xik
            \leq  \frac{2\wl}{\qr}\ceil{\sk}+\frac{2\wl}{\qr}\sum_{\{i:~\xstar_{ik}\geq 1 \}} \xstar_{ik} ,\\
            &\leq \frac{2\wl}{\qr}(\sk+1)+\frac{2\wl}{\qr}\sum_{\{i:~\xstar_{ik}\geq 1 \}} \xstar_{ik} 
            \leq \frac{2\wl}{\qr} + |\Sk|.
        \end{split}
    \end{equation}
    The first inequality holds because $\xik \leq \xstar_{i,k}$ for all $i,k$ such that $\xstar_{ik}\geq 1$, and $\sum_{\{i:~\xstar_{ik}<1 \}}\xik=\ceil{\sk}$. For the second inequality, note that $\ceil{\sk}\leq \sk+1$. In the final inequality, we used the fact that $\sk=\sum_{\{i:~\xstar_{ik}<1 \}} \xstar_{i,k}$, combined with $ \frac{2\wl}{\qr}\sum_{\{i:~\xstar_{ik}< 1 \}} \xstar_{ik} +\frac{2\wl}{\qr}\sum_{\{i:~\xstar_{ik}\geq 1 \}} \xstar_{ik} \leq |\Sk|$, which is a constraint in optimization problem \eqref{eq:lpoptg}.
    \end{proof}

\menuconstruction*
\begin{proof}
The first condition follows immediately from the construction of menus. For the second, observe that $\sum_{i}{\xik} \leq \frac{\qr}{2\wl}|\Sk|+1$, and $\ckj \leq \frac{\sum_{i}\xik}{\Sk}+1 \leq \frac{\qr}{2\wl}+\frac{1}{\Sk}+1\leq 2+\frac{\qr}{2\wl}$.
\end{proof}

\section{Proofs omitted in \Cref{subsec:largeSupplier}}\label{app:casetwo}
\lemcasetwoup*
\begin{proof}
\newcommand{\Ystarj}{\textbf{Y}^{*}_{j}}
\newcommand{\Xstarj}{\textbf{X}^{*}_{j}}
Given optimal menu, let $\Xstarj$ for all $j \in \otsupp$ be the random variable counting the number of customers that select supplier $j$. Further, let $\Ystarj$ be a random variable denoting if supplier $j$ is matched. Then,
\begin{equation}\label{eq:case11}
\opt=\sum_{j=1}^{\supp}\expt{\Ystarj}.
\end{equation}

The expected value of $\expt{\Ystarj}$ can be written as
\begin{equation}\label{eq:case12}
\expt{\Ystarj}=\expt{\expt{\Ystarj|\Xstarj}}=\expt{\frac{\Xstarj}{\Xstarj+\qj}}.
\end{equation}

Combining Equations \eqref{eq:case11} and \eqref{eq:case12} we obtain
\begin{equation}
\begin{split}
    \opt&=\expt{\sum_{j=1}^{\supp}\frac{\Xstarj}{\Xstarj+\qj}} \\
    & \leq \sum_{\{x \in \Z_{\geq 0}^{\supp}:~\sum_{j=1}^{\supp}\xj\leq \cust\}}\prob{\Xstarj=\xj, \text{ for all }j \in \otsupp}\left(\sum_{j=1}^{\supp}\frac{\xj}{\xj+\qj}\right)\\
    &\leq \max_{\{x \in \Z_{\geq 0}^{\supp}:~\sum_{j=1}^{\supp}\xj=\cust\}}\sum_{j=1}^{\supp}\frac{\xj}{\xj+\qj}~.
\end{split}
\end{equation}
The first inequality holds because $\Xstarj$ is a non-negative integer valued random variable, and $\sum_{j=1}^{\supp}\Xstarj\leq m$ with probability 1. The last inequality follows because the function $\sum_{j=1}^{\supp}\frac{\xj}{\xj+\qj}$ is monotone in $\sum_{j=1}^{\supp}\xj$.
\end{proof}

\lemeff*
\begin{proof}
Without loss of generality we can assume $q_1\leq q_2\leq \dots q_n$. Let $\xstar$ be the optimum solution to optimization problem \eqref{eq:case1ub}.
Recall we wish to solve the optimization problem \eqref{eq:case1ub} approximately and it will be convenient to identify indices where $\xstar_{i}=0$. 

It is easy to see that 
if $\xstar_j=0$, then $\xstar_{j'}= 0$ for all $j' \geq j$; otherwise, one could simply swap the values of $\xstar_j$ and $\xstar_{j'}$ and increase the objective. 
Therefore, we state the following optimization problem, indexed by $i$.
\begin{equation}\label{eq:case2ub}
    \opti\defeq \max_{\{x \in \Z_{\geq 0}^{\supp}:~\sum_{j=1}^{i}\xj=\cust \text{ and }x_{j}\geq 1\}}\sum_{j=1}^{i}\frac{\xj}{\xj+\qj}.
\end{equation}
Note the addition of constraint $x_j\geq 1$ in the above optimization problem. Using the definition of $\opti$, the optimization problem \eqref{eq:case1ub} can be equivalently written as:
\begin{equation}
    \max_{i\in \otsupp}\opti
\end{equation}
To solve the optimization problem \eqref{eq:case1ub} approximately (up to a factor of $\frac{1}{2}$) it is enough to solve optimization problem \eqref{eq:case2ub} approximately (up to a factor of $\frac{1}{2}$) for all $i \in \otsupp$. Now define the relaxation of \eqref{eq:case2ub} as follows:
\begin{equation}\label{eq:case3ub}
    \copti\defeq \max_{\{x \in \R^{\supp}:~\sum_{j=1}^{i}\xj=\cust \text{ and }x_{j}\geq 1\}}\sum_{j=1}^{i}\frac{\xj}{\xj+\qj}.
\end{equation}
Note all the constraints are linear and the objective is concave in $x$. Maximizing a concave function over convex set yields a convex optimization problem and  thus can be solved efficiently. Let $x'$ be the optimum solution for convex optimization problem \eqref{eq:case3ub}. Consider the following integral solution:
\begin{equation}
x_{j}\defeq 
    \begin{cases}
    & \floor{x'_{j}} \text{ for all } 2 \leq j \leq m \\
    & \cust-\sum_{j=2}^{i}\floor{x'_{j}} \text{ for }j=1.
    \end{cases}
\end{equation}
Clearly $\xj\geq 1$ and $\sum_{j=1}^{i}\xj=m$, therefore is a feasible solution to optimization problem \eqref{eq:case2ub}. Further note that $x_{1}=\cust-\sum_{j=2}^{i}\floor{x'_{j}} \geq \cust-\sum_{j=2}^{i}x'_j =x'_1$, where in the last inequality we used the constraint $\sum_{j=1}^{i}x'_j=\cust$. Since $x'_j \geq 1$ for all $1 \leq j\leq i$, we have $\xj \geq \frac{1}{2}x'_{j}$, which further implies:
$$\sum_{j=1}^{i}\frac{\xj}{\xj+\qj} \geq \frac{1}{2} \sum_{j=1}^{i}\frac{x'_j}{x'_j+\qj} \geq \frac{1}{2}\opti~.$$
The solution $x$ is a $\frac{1}{2}$-approximate solution to optimization problem \eqref{eq:case2ub} and therefore by iterating over all $i\in \otsupp$ and returning the maximum of all these approximate solutions is a $\frac{1}{2}$-approximate solution to optimization problem \eqref{eq:case1ub}.
\end{proof}

\end{document}